\documentclass[12pt]{article}
\usepackage{amsmath,amssymb,amsthm, amsfonts}
\usepackage{mathtools}
\usepackage[authoryear,round]{natbib}
\usepackage{rotating}
\usepackage[hidelinks]{hyperref}
\usepackage{tabularx}
\usepackage{booktabs}
\usepackage{titling}
\usepackage{adjustbox}
\usepackage{dsfont}
\usepackage{xurl}
\usepackage{hyperref}
\usepackage{graphicx, booktabs, multirow} 
\usepackage{setspace, enumitem}
\usepackage{float}
\usepackage{booktabs}
\usepackage{multirow}
\usepackage{caption}
\usepackage{subcaption}
\usepackage{cleveref}
\hypersetup{hidelinks, colorlinks=true, allcolors=black, pdfstartview=Fit, breaklinks=true}
\usepackage{lmodern, bm, mathrsfs, bbm, xcolor, soul}
\usepackage{appendix, authblk}
\numberwithin{equation}{section}
\allowdisplaybreaks[4]
\newtheorem{lemma}{Lemma}
\newtheorem{assumption}{Assumption}
\newtheorem{definition}{Definition}
\newtheorem{theorem}{Theorem}
\newtheorem{proposition}{Proposition}

\newtheorem{remark}{Remark}

\newcommand{\blind}{1}
\usepackage{algorithm}
\usepackage{algorithmic}
\usepackage{needspace}

\usepackage{amsmath,amssymb}
\newcommand{\algorithmeqnumber}[1]{%
  \refstepcounter{equation}%
  \label{#1}%
  \hfill\textup{(\theequation)}%
}
\begin{document}
\date{} 
\def\spacingset#1{\renewcommand{\baselinestretch}{#1}\small\normalsize}\spacingset{1} 

\if1\blind
{
  \title{\fontsize{20pt}{30pt}\bf
  Beyond Source-Level Transfer:\\ Sample-Level Learning for High-Dimensional Quantile Regression
}
  \author[a]{Xiangyu Xing} 
  \author[a]{Yingxuan Wang}
    \author[a] {Wangli Xu$^*$}
  \affil[a]{Center for Applied Statistics and School of Statistics, Renmin University of China,
Beijing, 100872, China}
  \maketitle
  \vspace{-1cm}
  \def\thefootnote{\fnsymbol{footnote}}
} \fi
\spacingset{1.9}
\if0\blind
{ 
  \bigskip\bigskip\bigskip
  \begin{center}
    {\fontsize{20pt}{30pt}\selectfont\bfseries Sample-Level 
    Transfer Learning for High-dimensional Quantile Regression}
  \end{center}
  \medskip
  \vspace{-0.8cm}
} \fi

\spacingset{1}
\bigskip
\begin{abstract}
This paper studies transfer learning for high-dimensional quantile regression with limited target samples and heterogeneous source domains. We propose a sample-level transfer learning method (SL-TL) that combines sample selection and importance weighting. 
Unlike existing source-level approaches that include or exclude entire source domains, SL-TL identifies transferable samples within each source domain and incorporates them through adaptive importance weights. 
The proposed weights rely only on one-dimensional densities associated with the quantile loss, avoiding the estimation of high-dimensional density ratios.
We establish error bounds for SL-TL estimators and characterize the effective sample size contributed by source domains. In the oracle setting, we show that SL-TL achieves faster $\ell_2$-convergence rates than existing competitors under the considered regimes.
For the unknown setting, we develop an implementable algorithm based on sample splitting and cross-fitting procedures.
Extensive simulations and a real data analysis demonstrate the finite-sample performance and robustness of SL-TL.
\end{abstract}		
			
{\it Keywords: High-dimensional quantile regression, Transfer learning, Sample selection.}  

\spacingset{1.9}

\def\thefigure{\arabic{figure}}
\def\thetable{\arabic{table}}

\renewcommand{\theequation}{\thesection.\arabic{equation}}


\section{Introduction}

High-dimensional regression is fundamental in modern statistical learning, where the number of covariates can be comparable to or much larger than the sample size \citep{candes2007dantzig,fan2008}. Sparsity is a common structural assumption, under which only a small subset of covariates has nonzero effects. This motivates sparsity-inducing methods such as LASSO, which performs coefficient shrinkage  and  variable selection \citep{LASSO1996}. Under suitable conditions, LASSO-type estimators remain consistent in high dimensions \citep{vande2008,bickel2009,meinshausen2009}. However, when the target sample is limited, exploiting sparsity alone may be insufficient for efficient estimation, motivating the use of related source data.

Transfer learning improves the performance of target domain estimation by borrowing information from related source domains. However, source domains may be heterogeneous, and naively pooling source and target data can cause negative transfer \citep{zhangnegative2026}. A key challenge is therefore to determine which source information is transferable and how to incorporate it. Existing methods for high-dimensional regression, such as Trans-Lasso \citep{Lisai2022} and Trans-SQR \citep{zhang2022transfer}, address this problem through source-level selection, in which each source domain is either entirely included or excluded. Although simple and theoretically tractable, such all-in-or-all-out strategies may waste useful information when a source is only partially informative and may increase the risk of negative transfer when the entire source is incorporated.

To address the limitations of source-level selection, \citet{zhao2026residual} propose residual importance weighted transfer learning (RIW-TL) for high-dimensional linear regression. RIW-TL assigns observation-specific weights based on density ratios dependent on residuals, allowing transferable samples to be selected within each source domain. Unlike conventional importance weighting \citep{FengJMLRQuantile,wang2026deep}, which requires estimating high-dimensional density ratios, RIW-TL relies only on one-dimensional densities and thus avoids this difficulty. This sample-level view  motivates us to investigate whether source observations can be evaluated and transferred individually in high-dimensional quantile regression.

Conditional quantile regression is important in applications where tail or heterogeneous effects are of interest, including biomedical and economic studies \citep{novak2017change,engle2004caviar}. Such problems may involve limited target domain samples and multiple related source datasets. Motivated by these settings, we propose a sample-level transfer learning framework (SL-TL) for high-dimensional quantile regression. For a given quantile level $\tau \in (0,1)$, SL-TL estimates the sparse target quantile regression coefficients by selectively borrowing information from multiple source domains. Our key contributions are summarized as follows:
\begin{itemize}
    \item [1.] 

    To mitigate negative transfer while retaining useful source information, we propose SL-TL based on a sample selection procedure and a sample-level importance weighting method tailored to quantile regression. Building on this framework, we consider two scenarios according to whether the selected sample subsets and their corresponding importance weights are known or unknown.

    \item [2.]
    We establish theoretical guarantees for  SL-TL under the known case and characterize the effective sample size contributed by each source domain. We further compare the $\ell_2$-convergence rate of oracle SL-TL with Target-only estimation and source-level transfer learning method in the single source setting. We show that the oracle SL-TL achieves a faster convergence rate across the regimes considered.

    \item [3.]
    We develop sample splitting and cross-fitted procedures for  SL-TL under the unknown case.
    We construct a B-spline projection estimator for error densities and establish a uniform error bound, which can achieve a faster convergence rate than the kernel estimator of \citet{zhao2026residual} under higher-order smoothness. Extensive simulations and a real data analysis are also provided.

\end{itemize}

The remainder of the paper is organized as follows. Section \ref{Sec: Model Estimation} introduces the high-dimensional quantile regression model, reviews existing Target-only and source-level estimators, and proposes our SL-TL estimators. Section \ref{sec: Theoretical Analysis} establishes error bounds and corresponding theoretical guarantees under various assumptions for different estimators. Section \ref{sec: Numerical Studies} evaluates the proposed estimators through simulations and a real data example. All technical proofs are provided in Supplementary Material.

Throughout the paper, $\mathbb{N}_{0}$, $\mathbb{N}$ and $\mathbb{R}$ denote the sets of nonnegative integers, positive integers and real numbers, respectively. For any $n\in\mathbb{N}$, let $[n]:=\{1,2,\dots,n\}$. For two real sequences $a_n$ and $b_n$, we write $a_n\lesssim b_n$ or equivalently $a_n=O_p(b_n)$ if $a_n\leq C{b}_{n}$ for some constant $C>0$ and all sufficiently large $n$ in probability. We write $a_n\asymp b_n$ if both $a_n \lesssim b_n$ and $a_n \gtrsim b_n$ hold. For $a,b\in\mathbb{R}$, let $a\land b:=\min\{a,b\}$ and $a\vee b:=\max\{a,b\}$. For $\boldsymbol{\alpha}=(\alpha_1,\ldots,\alpha_d)^\top\in\mathbb{R}^d$, define $\operatorname{supp}(\boldsymbol{\alpha}):=\left\{j\in[d]:\alpha_j\neq 0\right\}$.  
For $\mathcal{C}\subseteq[d]$, let $\boldsymbol{\alpha}_{\mathcal{C}}$ denote the subvector indexed by $\mathcal{C}$, and let $|\mathcal{C}|$ denote the cardinality of $\mathcal{C}$. Finally, define
$\|\boldsymbol{\alpha}\|_0:=\sum_{j=1}^d \mathds{1} \{\alpha_j\neq0\}$,
$\|\boldsymbol{\alpha}\|_1:=\sum_{j=1}^d|\alpha_j|$,
$\|\boldsymbol{\alpha}\|_2:=(\sum_{j=1}^d\alpha_j^2)^{1/2}$, and
$\|\boldsymbol{\alpha}\|_\infty:=\max_{1\leq j\leq d}|\alpha_j|$,
where $\mathds{1}\{\cdot\}$ denotes the indicator function.

\section{Methodology}
\label{Sec: Model Estimation}

\subsection{Quantile regression models}
\label{Subsec:QR-model}
In this paper, we consider a high-dimensional linear quantile regression model.
Let $\mathcal{D}^{(0)}=\{\boldsymbol{Z}_i^{(0)},i\in\mathcal{J}_0\}=\{((\boldsymbol{X}_i^{(0)})^\top,Y_i^{(0)})^\top,i\in\mathcal{J}_0\}$ denote the target dataset with $\mathcal{J}_0=[n_0]$, 
where $n_0$ is the target sample size. $\boldsymbol{X}_i^{(0)}\in\mathbb{R}^p$ and $Y_i^{(0)}\in\mathbb{R}$ represent the covariate vector and the corresponding response for the subject $i$ in the target domain, respectively.

For a given quantile level \(\tau\in(0,1)\), we assume that the $\tau$-th quantile $Q_{\tau}(Y_i^{(0)}\mid \boldsymbol{X}_i^{(0)})$ of $Y_i^{(0)}$ given $\boldsymbol{X}_i^{(0)}$ satisfies $Q_{\tau}(Y_i^{(0)}\mid \boldsymbol{X}_i^{(0)})=(\boldsymbol{X}_i^{(0)})^\top\boldsymbol{\beta}_{\tau}^{(0)}$. For notational simplicity, we write $\boldsymbol{\beta}^{(0)}:=\boldsymbol{\beta}_{\tau}^{(0)}$ hereafter. Equivalently, the model can be written as
\begin{equation}
    Y_i^{(0)}=(\boldsymbol{X}_i^{(0)})^\top \boldsymbol{\beta}^{(0)}+\varepsilon_{i}^{(0)},
    \label{quantile model}
\end{equation}
where the error  $\varepsilon_i^{(0)}$ is independent of $\boldsymbol{X}_i^{(0)}$ and satisfies $\mathbb{P}(\varepsilon_i^{(0)}\leq 0)=\tau$.

Our primary objective is to estimate the target coefficient $\boldsymbol{\beta}^{(0)}$. 
We assume that $\boldsymbol{\beta}^{(0)}$ is sparse, i.e. $\|\boldsymbol{\beta}^{(0)}\|_0=s\ll p,$ which reflects that only a small number of covariates have nonzero effects on the conditional $\tau$-th quantile of the target response. In high-dimensional settings, where $p$ may be much larger than the target sample size $n_0$, such a sparsity assumption is essential for making the estimation feasible.

\subsection{Target-only $\ell_1$-penalized quantile regression }
\label{Subsec:target-lasso-QR}

Given the sparsity condition on \(\boldsymbol{\beta}^{(0)}\), a natural Target-only estimator for model \eqref{quantile model} is the \(\ell_1\)-penalized quantile regression estimator:
\begin{align}
    \hat{\boldsymbol{\beta}}_{\mathrm{T}}^{(0)}
\in
\arg \min_{\boldsymbol{\beta}\in\mathbb{R}^p}
\big\{
\sum_{i=1}^{n_0}
\rho_\tau(Y_i^{(0)}-(\boldsymbol{X}_i^{(0)})^\top\boldsymbol{\beta})/n_0
+
\lambda _{\mathrm{T}}\|\boldsymbol{\beta}\|_1
\big\},
\label{hat_beta_T}
\end{align}
where \(\rho_\tau(u)=u\{\tau-\mathbf{1}(u\leq 0)\}\) is the check loss function and
\(\lambda_{\mathrm{T}}>0\) is a tuning parameter.

Under suitable regularity conditions, the estimator in \eqref{hat_beta_T} satisfies $ \|\hat{\boldsymbol{\beta}}_{\mathrm{T}}^{(0)} - \boldsymbol{\beta}^{(0)} \|_2 = O_p( \sqrt{ s\log p/n_0 } )  $ \citep{Belloni2011}.
This rate indicates that when the target sample size $n_0$ is small relative to $s\log p$, the target data alone may be insufficient for efficient estimation of $\boldsymbol{\beta}_0$. This limitation motivates the use of related source datasets to improve the estimation of the target quantile regression coefficient.

We next introduce the auxiliary source datasets considered throughout the transfer learning procedures. For $k=1,\ldots,K$, let $\mathcal{D}^{(k)}=\{\boldsymbol{Z}_i^{(k)}, i\in \mathcal{J}_k\}_{i=1}^{n_k}=\{((\boldsymbol{X}_i^{(k)})^\top,Y_i^{(k)})^\top, i\in \mathcal{J}_k\}_{i=1}^{n_k}$ denote the $k$-th source dataset with $\mathcal{J}_k=[n_k]$, and assume that the source datasets are independent and satisfy $n_k\gg n_0$. For each source domain $k$, we consider the linear quantile regression model
$Q_\tau(Y_i^{(k)}\mid \boldsymbol{X}_i^{(k)})=(\boldsymbol{X}_i^{(k)})^\top \boldsymbol{\beta}_{\tau}^{(k)}$. For simplicity, we write $\boldsymbol{\beta}^{(k)}:=\boldsymbol{\beta}_{\tau}^{(k)}$ hereafter. Equivalently,  $Y_i^{(k)}=(\boldsymbol{X}_i^{(k)})^\top\boldsymbol{\beta}^{(k)}+\varepsilon_i^{(k)}$, 
where the error term $\varepsilon_i^{(k)}$ is independent of $\boldsymbol{X}_i^{(k)}$ and satisfies $\mathbb{P}(\varepsilon_i^{(k)}\leq 0)=\tau$.

\subsection{Source-level transfer learning for quantile regression}
\label{Subsec:source-level}

We first review the Trans-SQR, a representative source-level transfer learning method for
high-dimensional quantile regression proposed by \citet{zhang2022transfer}. This method employs an all-in-or-all-out source selection rule under which a source dataset is included in its entirety if it satisfies a specified transferability criterion and is excluded otherwise. For the $k$-th source domain, its discrepancy from the target domain is measured by $\boldsymbol{\delta}^{(k)}=\boldsymbol{\beta}^{(k)}-\boldsymbol{\beta}^{(0)}$. Given a threshold $h>0$, the set of transferable sources is defined as $\mathcal{A}_{h}=\{1\leq k\leq K:\|\boldsymbol{\delta}^{(k)}\|_1\leq h\}$ with $n_{\mathcal{A}_h}:=\sum_{k\in\mathcal A_h}n_k$. Thus, all observations from sources in $\mathcal{A}_h$ are included in the estimation procedure, whereas observations from sources outside $\mathcal{A}_h$ are discarded. The oracle version of Trans-SQR assumes that the transferable source set $\mathcal{A}_h$ is known. It first pools the target data with the source datasets indexed by $\mathcal{A}_h$ to estimate a pooled regression coefficient and then uses only the target data to estimate a correction vector. The resulting estimation procedure of oracle Trans-SQR is summarized in Algorithm \ref{alg:Trans-SQR} of Section \ref{supp:alg} in Supplementary Material.

Under suitable regularity conditions, the oracle Trans-SQR estimator $\hat{\boldsymbol{\beta}}_{\mathrm{S}}$ satisfies the convergence bound $\|\hat{\boldsymbol{\beta}}_{\mathrm{S}}-\boldsymbol{\beta}^{(0)}\|_2
=O_p(({\log p}/{n_0})^{1/4}\{\sqrt{s}({\log p}/{(n_0+n_{\mathcal{A}_h})})^{1/4}+\sqrt{h}\}
)$.
Although this procedure can exploit information from transferable source domains, its selection rule operates at the source level and therefore treats all observations from the same source uniformly. 
To mitigate the information loss induced by  source-level selection, we next develop a sample-level transfer learning  based on importance weighting.


\subsection{Sample-level transfer learning for quantile regression}
\label{Subsec:method}

We propose a method based on importance weights called sample-level transfer learning (SL-TL). Instead of accepting or rejecting an entire source domain, SL-TL assesses source sample individually. Consequently, samples from any source domain that are compatible with the target model may contribute to the estimation, including those from domains that would be excluded under the source-level transferability criterion. 
Let 
$\boldsymbol{Z}=
((\boldsymbol{X})^\top,Y)^\top$ and denote by $p_0(\cdot)$ and $p_k(\cdot)$ the joint probability density functions of $\boldsymbol{Z}$ in the target domain and the $k$-th source domain, respectively. Then, for any $\boldsymbol{\beta}\in\mathbb{R}^p$, we have $\mathbb{E}_{\boldsymbol{Z}{\sim}p_0}[\rho_\tau(Y-\boldsymbol{X}^\top \boldsymbol{\beta})
]=\mathbb{E}_{\boldsymbol{Z}{\sim}p_k}[\omega^{(k)}(\boldsymbol{Z})\rho_\tau(Y-\boldsymbol{X}^\top \boldsymbol{\beta})]$, where $\omega^{(k)}(\boldsymbol{Z})=p_0(\boldsymbol{Z})/p_k(\boldsymbol{Z})$. Consequently, it can be easily derived that 
\begin{equation}
\boldsymbol{\beta}^{(0)}\in\arg\min_{\boldsymbol{\beta}\in\mathbb{R}^p}\mathbb{E}_{\boldsymbol{Z}{\sim}p_k}
[\omega^{(k)}(\boldsymbol{Z})\rho_\tau(Y-(\boldsymbol{X})^\top \boldsymbol{\beta})],~ k=1,\dots,K. 
\label{hat_beta_0}
\end{equation}


The importance weight $\omega^{(k)}(\cdot)$ in \eqref{hat_beta_0} involves the joint density ratio of the \((p+1)\)-dimensional vector \(\boldsymbol{Z}\).
When \(p\) is large, directly estimating the ratio is generally infeasible due to the curse of dimensionality. 
Therefore, rather than estimating the high-dimensional joint density ratio, we exploit the quantile regression structure to construct a sample-level importance weight through one-dimensional densities,  inspired by \citep{zhao2026residual}. 


\begin{theorem}
\label{prop:RIW-QR}
Fix a quantile level \(\tau\in(0,1)\). For each source domain \(k=1,\ldots,K\), suppose that \(\varepsilon^{(k)}\) admits a probability density function \(p_{\varepsilon^{(k)}}(\cdot)\) in the $k$-th source domain. Let $g_\tau(\cdot)$ be a univariate probability density function satisfying $\int_{-\infty}^{0} g_\tau(u)\,du=\tau$ and $\int_{\mathbb{R}}\left|u\right|g_{\tau}(u)du<\infty$. Then the equation \eqref{hat_beta_0} holds with $\omega^{(k)}(\boldsymbol{Z}_i^{(k)})=g_\tau(Y_i^{(k)}-(\boldsymbol{ X}_i^{(k)})^\top\boldsymbol{\beta}^{(0)})/p_{\varepsilon^{(k)}}(Y_i^{(k)}-(\boldsymbol{X}_i^{(k)})^\top\boldsymbol{\beta}^{(k)})$.
Equivalently,
\begin{equation}
    \omega^{(k)}(\boldsymbol{Z}^{(k)}_i)=\frac{g_\tau(\varepsilon_i^{(k)}+\eta^{(k)}_i
)}{p_{\varepsilon^{(k)}}(\varepsilon_i^{(k)})}=\frac{g_\tau(r_i^{(k)}
)}{p_{\varepsilon^{(k)}}(\varepsilon_i^{(k)})},
\label{omega}
\end{equation}
where $\eta^{(k)}_i:=(\boldsymbol{X}_i^{(k)})^\top(\boldsymbol{\beta}^{(k)}-\boldsymbol{\beta}^{(0)})=(\boldsymbol{X}_i^{(k)})^\top\boldsymbol{\delta}^{(k)}$ and  $r_i^{(k)}:=\varepsilon_i^{(k)}+\eta_i^{(k)}$.
\end{theorem}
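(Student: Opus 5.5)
The plan is to show that, under the proposed weight, the weighted source risk equals (up to a constant not depending on $\boldsymbol{\beta}$) a risk minimized at $\boldsymbol{\beta}^{(0)}$. Write the weighted objective as
\[
R_k(\boldsymbol{\beta}):=\mathbb{E}_{p_k}\bigl[\omega^{(k)}(\boldsymbol{Z})\rho_\tau(Y-\boldsymbol{X}^\top\boldsymbol{\beta})\bigr].
\]
First condition on $\boldsymbol{X}=\boldsymbol{x}$. Since $\varepsilon^{(k)}$ is independent of $\boldsymbol{X}$ with density $p_{\varepsilon^{(k)}}$, the conditional density of $Y$ given $\boldsymbol{X}=\boldsymbol{x}$ in source $k$ is $p_{\varepsilon^{(k)}}(y-\boldsymbol{x}^\top\boldsymbol{\beta}^{(k)})$. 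Multiplying by the weight $g_\tau(y-\boldsymbol{x}^\top\boldsymbol{\beta}^{(0)})/p_{\varepsilon^{(k)}}(y-\boldsymbol{x}^\top\boldsymbol{\beta}^{(k)})$ cancels the source error density wherever it is positive. Hence
\[
\mathbb{E}\bigl[\omega^{(k)}\rho_\tau(Y-\boldsymbol{x}^\top\boldsymbol{\beta})\mid \boldsymbol{X}=\boldsymbol{x}\bigr]=\int_{\mathbb{R}} g_\tau(y-\boldsymbol{x}^\top\boldsymbol{\beta}^{(0)})\rho_\tau(y-\boldsymbol{x}^\top\boldsymbol{\beta})\,dy .
\]
To make this cancellation legitimate, I will assume (or note as implicit in the statement) that $g_\tau$ vanishes where $p_{\varepsilon^{(k)}}$ does, so that the support of $g_\tau(\cdot-\boldsymbol{x}^\top\boldsymbol{\beta}^{(0)})$ lies inside that of $p_{\varepsilon^{(k)}}(\cdot-\boldsymbol{x}^\top\boldsymbol{\beta}^{(k)})$. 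The simplest route is to assume $p_{\varepsilon^{(k)}}>0$ on $\mathbb{R}$. I expect this support/absolute-continuity bookkeeping to be the only delicate point.

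Next, substitute $u=y-\boldsymbol{x}^\top\boldsymbol{\beta}^{(0)}$ and $t=\boldsymbol{x}^\top(\boldsymbol{\beta}-\boldsymbol{\beta}^{(0)})$. The inner integral becomes
\[
\phi(t):=\int g_\tau(u)\rho_\tau(u-t)\,du .
\]
This is finite because $\rho_\tau(u-t)\le \max(\tau,1-\tau)(|u|+|t|)$ and $\int|u|g_\tau(u)\,du<\infty$. The key one-dimensional fact is that $\phi$ is convex and minimized at $t=0$. The standard argument uses the Knight-type identity
\[
\rho_\tau(u-t)-\rho_\tau(u)=-t\{\tau-\mathbf 1(u\le 0)\}+\int_0^t\{\mathbf 1(u\le s)-\mathbf 1(u\le 0)\}\,ds .
\]
Integrating against $g_\tau$, the linear term becomes $-t\{\tau-G_\tau(0)\}=0$, because $\int_{-\infty}^0 g_\tau=\tau$. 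The remaining term equals $\int_0^t\{G_\tau(s)-G_\tau(0)\}\,ds\ge 0$ for every $t$, since the CDF $G_\tau$ is monotone (Fubini applies by the moment condition). Therefore $\phi(t)\ge\phi(0)$ for all $t$.

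Finally, integrate over $\boldsymbol{X}\sim$ the source covariate law to get
\[
R_k(\boldsymbol{\beta})=\mathbb{E}_{p_k}\bigl[\phi(\boldsymbol{X}^\top(\boldsymbol{\beta}-\boldsymbol{\beta}^{(0)}))\bigr]\ge \mathbb{E}_{p_k}[\phi(0)]=R_k(\boldsymbol{\beta}^{(0)}).
\]
This shows $\boldsymbol{\beta}^{(0)}$ is a minimizer, which is exactly \eqref{hat_beta_0}. Finiteness of $R_k$ at $\boldsymbol{\beta}^{(0)}$ follows from $\phi(0)\le\max(\tau,1-\tau)\int|u|g_\tau<\infty$. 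For general $\boldsymbol{\beta}$ the value may be $+\infty$ if $\mathbb{E}|\boldsymbol{X}^\top\boldsymbol{\beta}|=\infty$, but this does not affect the minimizer claim.

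The equivalent form \eqref{omega} is immediate from the model $Y_i^{(k)}=(\boldsymbol{X}_i^{(k)})^\top\boldsymbol{\beta}^{(k)}+\varepsilon_i^{(k)}$. It gives $Y_i^{(k)}-(\boldsymbol{X}_i^{(k)})^\top\boldsymbol{\beta}^{(0)}=\varepsilon_i^{(k)}+\eta_i^{(k)}=r_i^{(k)}$ and $Y_i^{(k)}-(\boldsymbol{X}_i^{(k)})^\top\boldsymbol{\beta}^{(k)}=\varepsilon_i^{(k)}$.

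Note that $\omega^{(k)}$ here is not the joint density ratio $p_0/p_k$. It is the ratio between the source law and a surrogate law with the same $\boldsymbol{X}$-marginal and conditional quantile $\boldsymbol{x}^\top\boldsymbol{\beta}^{(0)}$. The proof therefore should not route through the identity $\mathbb{E}_{p_0}=\mathbb{E}_{p_k}[\omega\cdot]$, only through the minimizer property.
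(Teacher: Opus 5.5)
Your proposal is correct and follows the paper's proof. Like the paper, you condition on $\boldsymbol{X}$, cancel $p_{\varepsilon^{(k)}}$ against the source conditional density, substitute $u=y-\boldsymbol{x}^\top\boldsymbol{\beta}^{(0)}$, show that $t\mapsto\int\rho_\tau(u-t)g_\tau(u)\,du$ is minimized at $t=0$, and then integrate over $\boldsymbol{X}$. The only differences are minor: you use Knight's identity where the paper uses convexity of $H$ with $H'(0)=\int_{-\infty}^{0}g_\tau(u)\,du-\tau=0$, which is the same first-order argument in integrated form, and your support and integrability remarks spell out points the paper's proof leaves implicit.
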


Theorem \ref{prop:RIW-QR} shows that the proposed importance weight can be expressed in terms of one-dimensional densities, thereby avoiding the direct estimation of a high-dimensional joint density ratio and alleviating the curse of dimensionality. Specifically, $\varepsilon_i^{(k)}$ is the error term under the $k$-th source model, whereas
$\eta_i^{(k)}$ represents the observation specific shift between the source and target
models. Thus, the proposed weight reflects the compatibility of each individual source observation with the target quantile model. 

Although the weight in \eqref{omega} involves only one-dimensional densities, it may still become unstable when its denominator is close to zero. 
To address this issue, we introduce a selection rule that ensures both $\varepsilon_i^{(k)}$ and $r_i^{(k)}$ in \eqref{omega} are uniformly bounded, from which uniformly bounded importance weights are derived. We first specify $g_\tau(\cdot)$ as
\begin{equation}\label{g_tau}
    g_\tau(u)=I\{-\tau A\leq u\leq(1-\tau)A\}/A,~A>0.
\end{equation}
For each source domain $k$, we define the sample selection subsets as
\begin{equation}
\mathcal{I}_k=\{i\in[n_k]:-\tau A\leq r_i^{(k)}\leq (1-\tau)A,\ 
|\eta_i^{(k)}|\leq M_k\},
\label{selected sample subset}
\end{equation}
where \(A>0\) and \(M_k>0\) are tuning parameters. Let $n_{\mathcal{I}_k}=|\mathcal{I}_k|$ denote the number of observations selected from the source domain $k$. We further define the pooled index set $\mathcal{I}=\{(k,i):k=1,\ldots,K,i\in\mathcal{I}_k\}$ with the total sample size $n_{\mathcal{I}}=\sum_{k=1}^Kn_{\mathcal{I}_k}$. 

Moreover, for any $i\in\mathcal{I}_k$, we can obtain $|\varepsilon_i^{(k)}|\leq|r_i^{(k)}|+|\eta_i^{(k)}|\leq A+M_k$.
Hence, both  $\varepsilon_i^{(k)}$ and  $r_i^{(k)}$  involved in \eqref{omega} are
restricted to bounded regions. This selection rule prevents excessively large importance weights under appropriate conditions as established in Theorem \ref{pkboundedwkbounded} in Section \ref{sec: Theoretical Analysis}. Since the selection is performed observation by observation, informative observations may still be retained even when their source domain is not globally transferable. In this way, SL-TL can recover useful information that may be discarded by Trans-SQR in Section \ref{Subsec:source-level}.

We next consider the estimation of $\boldsymbol{\beta}^{(0)}$ under two scenarios concerning the selected sample subsets $\{\mathcal{I}_k\}_{k=1}^{K}$ in \eqref{selected sample subset} and the weights $\{\omega^{(k)}(\cdot)\}_{k=1}^{K}$ in \eqref{omega}: (i) when they are assumed to be known, and (ii) when they are unknown and need to be estimated from the data. We discuss these two scenarios separately below.


\begin{itemize}
    \item  
    \textbf{Scenario I: when the selected subsets and weights are known.}
\end{itemize}
In this scenario, we consider an oracle setting where the subsets \(\mathcal{I}_k\) in \eqref{selected sample subset} and the weights \(\omega^{(k)}(\cdot)\) in \eqref{omega} are known for each $k=1,2,\dots,K$.
To unify the notation for the target and source samples,
let $\mathcal{I}_0=[n_0]$ and define
\begin{equation}
    \omega^{(0)}_i=1,i\in[n_0],~~
    \omega^{(k)}_i=\omega^{(k)}(\boldsymbol{Z}^{(k)}_i)=g_\tau(r_i^{(k)})/p_{\varepsilon^{(k)}}(\varepsilon_i^{(k)}),i\in[n_k].
\label{omegafinalgeneral}
\end{equation} 
We further define the total number of all observations by $N=n_0+\sum_{k=1}^{K}n_k$.
The oracle SL-TL estimator is defined as
\begin{align}
\hat{\boldsymbol{\beta}}_{\mathrm{ora}}^{(0)}\in
\arg\min_{\boldsymbol{\beta}\in\mathbb{R}^p}
\big\{
\sum_{k=0}^{K}
\sum_{i\in\mathcal{I}_k}
\omega_i^{(k)}
\rho_\tau(
Y_i^{(k)}
-
(\boldsymbol{X}_i^{(k)})^\top\boldsymbol{\beta}
)/N
+
\lambda_{\mathrm{ora}}
\|\boldsymbol{\beta}\|_1
\big\},
\label{eq:oracle-SL-TL}
\end{align}
where
\(\lambda_{\mathrm{ora}}>0\) is a  tuning parameter.

Thus, unlike source-level transfer methods such as Trans-SQR defined in Algorithm \ref{alg:Trans-SQR}, the oracle SL-TL estimator in \eqref{eq:oracle-SL-TL} transfers source information at the sample level by assigning an individual importance weight to each selected source observation.


\begin{itemize}
    \item  
    \textbf{Scenario II: when the selected subsets and weights are unknown.}
\end{itemize}

In practice, neither the oracle selected sample subsets $\{\mathcal{I}_k\}_{k=1}^{K}$ nor the importance weights $\{\omega^{(k)}(\cdot)\}_{k=1}^{K}$ are directly available, since their construction depends on the unknown target and source coefficients, as well as the unknown source error densities. 
Estimating these initial regression coefficients and error densities using the same observations may lead to overfitting and induce complicated dependence between the estimated weights and the weighted empirical loss. To separate these estimations, we adopt a sample splitting procedure. 
For each $k=0,1,\dots,K$, $\mathcal{D}^{(k)}=\{((\boldsymbol{X}_i^{(k)})^\top,Y_i^{(k)})^\top,i\in\mathcal{J}_{k}\}$, we randomly partition $[n_k]$ into three mutually disjoint index subsets $\mathcal{J}_{k,1}$, $\mathcal{J}_{k,2}$ and $\mathcal{J}_{k,3}$ with respective sizes $n_{k,1}$, $n_{k,2}$ and $n_{k,3}$, where $n_{k,1}=n_{k,2}=n_{k,3}=n_k/3$. The corresponding data subsets are indicated by $\mathcal{D}_r^{(k)}=\{((\boldsymbol{X}_i^{(k)})^\top,Y_i^{(k)})^\top,i\in\mathcal{J}_{k,r}\},~r=1,2,3$.

For the target domain $k=0$, two subsets are pooled to construct the initial target quantile regression estimator, while the remaining subset is reserved for the weighted estimation. For each source domain $k=1,2,\dots,K$, the three subsets are used, respectively, to construct the initial source quantile regression estimator, estimate the source error density, and estimate the importance weights used to select transferable source observations. 
We estimate the residual density function with estimated residual $\tilde\varepsilon_i^{(k)}$ in the $k$-th source domain by the B-spline projection estimator \citep{Gine2008AdaptiveEO} $\hat p_{\varepsilon^{(k)}}(\cdot)$  defined in \eqref{Bsplinehat-pk} of Section \ref{supp:Density estimation} in Supplementary Material. 
\begin{algorithm}[!htbp]
\caption{Cross-fitted SL-TL }
\label{alg:SL-TL}
\begin{algorithmic}[1] 

\REQUIRE Target dataset \(\mathcal D^{(0)}\), source datasets
\(\{\mathcal D^{(k)}\}_{k=1}^{K}\), quantile level \(\tau\),
resolution level \(\{j_k\}_{k=1}^{K}\), screening parameters
\(A,\{M_k\}_{k=1}^{K}\), bandwidth parameters $b_{k}$ and regularization parameters $\lambda_{k,\mathrm{ini}}$ and $\lambda$.

\STATE \textbf{Sample splitting:} 
For each $k=0,1,\dots,K$, uniformly at random partition $[n_k]$ into three mutually disjoint index subsets $\mathcal{J}_{k,1}$, $\mathcal{J}_{k,2}$ and $\mathcal{J}_{k,3}$ with $n_{k,r}=|\mathcal{J}_{k,r}|,~r=1,2,3$.

\STATE \textbf{Initial estimation:} Set $\mathcal{O}_{0,1}=\mathcal{J}_{0,1}\cup\mathcal{J}_{0,2}$ and $\mathcal{O}_{k,1}=\mathcal{J}_{k,1},~k=1,2,\ldots,K$. 
For $k=0,\ldots,K$, compute the initial quantile regression estimators based on $\mathcal{O}_{k,1}$ by
\begin{align}
    \tilde{\boldsymbol{\beta}}^{(k)}
\in
\arg\min_{\boldsymbol{\beta}\in\mathbb{R}^{p}}
\big\{&
\sum_{i\in\mathcal{O}_{k,1}}
\int_{-\infty}^{\infty}\rho_{\tau}(u)K((u+(\boldsymbol X_i^{(k)})^{\top}\boldsymbol\beta-Y_i^{(k)})/b_{k})\nonumber\\
&\quad /(|\mathcal{O}_{k,1}|b_{k})\,du
+
\lambda_{k,\mathrm{ini}}\|\boldsymbol{\beta}\|_1
\big\}.
\label{eqinitialbetak}
\end{align}

\STATE \textbf{Density estimation:} 
For each $k=1,2,\dots,K$ and $i\in\mathcal{J}_{k,2}$, compute $\tilde{\varepsilon}_i^{(k)}=Y_i^{(k)}-(\boldsymbol{X}_i^{(k)})^\top\tilde{\boldsymbol{\beta}}^{(k)}$. Then construct the B-spline projection estimator $\hat p_{\varepsilon^{(k)}}(\cdot)$ in \eqref{Bsplinehat-pk} by replacing $\mathcal{J}_{k}$ and $n_k$ with $\mathcal{J}_{k,2}$ and $n_{k,2}$, respectively.
\STATE \textbf{Weighting and screening:} For each $k=1,2,\dots,K$ and $i\in\mathcal{J}_{k,3}$, compute $\tilde{\varepsilon}_i^{(k)}=Y_i^{(k)}-(\boldsymbol{X}_i^{(k)})^\top\tilde{\boldsymbol{\beta}}^{(k)}$ and $\tilde\eta_i^{(k)}=(\boldsymbol{X}_i^{(k)})^\top(\tilde{\boldsymbol{\beta}}^{(k)}-\tilde{\boldsymbol{\beta}}^{(0)})$. Estimate the importance weights by
\begin{equation*}
    \hat\omega_i^{(k)}=g_{\tau}(\tilde\varepsilon_i^{(k)}+\tilde\eta_i^{(k)})/\hat p_{\varepsilon^{(k)}}(\tilde\varepsilon_i^{(k)}).
    \label{hatomegaSL-TL}
\end{equation*}
Define the estimated sample selection subset as
\begin{equation}
    \hat{\mathcal I}_{k,3}=\{i\in\mathcal{J}_{k,3}:-\tau A \leq \tilde{\varepsilon}_i^{(k)}+\tilde\eta_i^{(k)}\leq (1-\tau)A,~|\tilde\eta_i^{(k)}|\leq M_k\}.
    \label{hatIk3SL-TL}
\end{equation}

\STATE \textbf{Weighted estimation:} Set $\hat{\mathcal I}_{0,3}=\mathcal{J}_{0,3}$ and $\hat\omega_i^{(0)}=1,~i\in\mathcal{J}_{0,3}$. Compute
\begin{equation*}
\hat{\boldsymbol{\beta}}^{(0)}_1
=
\arg\min_{\boldsymbol{\beta}\in\mathbb{R}^p}
\big\{
\big[
\sum_{k=0}^{K}
\sum_{i\in\hat{\mathcal I}_{k,3}}
\hat\omega_i^{(k)}
\rho_\tau(
Y_i^{(k)}
-
(\boldsymbol{X}_i^{(k)})^\top\boldsymbol{\beta}
)/\sum_{k=0}^{K}n_{k,3}
\big]
+
\lambda
\|\boldsymbol{\beta}\|_1
\big\}.
\label{eq:RIW-QR}
\end{equation*}
\STATE \textbf{Cross-fitting:} For the target domain, cyclically use
$(\mathcal{J}_{0,1}\cup\mathcal{J}_{0,2},\mathcal{J}_{0,3})$,
$(\mathcal{J}_{0,2}\cup\mathcal{J}_{0,3},\mathcal{J}_{0,1})$, and
$(\mathcal{J}_{0,3}\cup\mathcal{J}_{0,1},\mathcal{J}_{0,2})$
for initial and weighted estimation, respectively. Meanwhile, for each source domain $k$, cyclically rotate the roles of $\mathcal{J}_{k,1},~\mathcal{J}_{k,2},~\mathcal{J}_{k,3}$. Repeat Steps 2-5 to get estimators $\hat{\boldsymbol{\beta}}^{(0)}_2$ and $\hat{\boldsymbol{\beta}}^{(0)}_3$.


\ENSURE The cross-fitted estimator
$\hat{\boldsymbol\beta}_{\mathrm{SL\text{-}TL}}^{(0)}=(\hat{\boldsymbol{\beta}}^{(0)}_1+\hat{\boldsymbol{\beta}}^{(0)}_2+\hat{\boldsymbol{\beta}}^{(0)}_3)/3$.
\algorithmeqnumber{betaSL-TLhat0}
\end{algorithmic}
\end{algorithm}

Although sample splitting reduces overfitting and simplifies the analysis of the estimated weights, a single split uses only one subset for the final coefficient estimation and may be sensitive to the particular random partition. To improve data utilization and reduce split-induced variability, we employ cross-fitting. Specifically, for the target domain, two subsets are used for initial estimation and the remaining subset for weighted estimation, with their roles cyclically rotated across the three folds. For each source domain, the roles of the three subsets are cyclically rotated among initial estimation, density estimation, and weighting and screening. The three resulting estimators are then averaged to obtain the final cross-fitted estimator. In this way, every observation contributes to the final estimation stage in one of the three rotations.
The resulting estimation procedure of SL-TL is summarized in Algorithm \ref{alg:SL-TL}.
\section{Theoretical Analysis} \label{sec: Theoretical Analysis}
In this section, we investigate the convergence rates of the proposed estimators $\hat{\boldsymbol\beta}_{\mathrm{ora}}^{(0)}$ in \eqref{eq:oracle-SL-TL} and $\hat{\boldsymbol\beta}_{\mathrm{SL\text{-}TL}}^{(0)}$ in \eqref{betaSL-TLhat0}. The theoretical results are presented in Sections \ref{sub3.1} and \ref{sub3.2}, respectively. Throughout this section, we define $\rho_{\mathcal{I}}:=(n_0+\mathbb{E}(n_{\mathcal{I}}))/\sum_{k=0}^{K}n_k$, $\Psi_n:=\max_{0\le k\le K}\max_{i\in\mathcal J_{k}}\|\bm X_i^{(k)}\|_\infty$, and $\pi_{M_k}:=\mathbb{E}(n_{\mathcal{I}_k})/\mathbb{E}(n_{\mathcal{I}})$, $h_k=\|\boldsymbol{\beta}^{(k)}-\boldsymbol{\beta}^{(0)}\|_1$ for each $k=1,2,\ldots,K$ and let $F_{\varepsilon^{(k)}|\boldsymbol{X}^{(k)}}(t\mid \boldsymbol{X}_i^{(k)})=\mathbb{P}{(\varepsilon_i^{(k)}\leq t|\boldsymbol{X}_i^{(k)})}$ denote the conditional distribution function of $\varepsilon_i^{(k)}$.


\subsection{When the selected subsets and weights are known}
\label{sub3.1}
In this subsection, we study the theoretical properties of the oracle SL-TL estimator $\hat{\boldsymbol{\beta}}_{\mathrm{ora}}^{(0)}$ in \eqref{eq:oracle-SL-TL}. 
We first introduce some technical assumptions for the subsequent analysis. We then establish the boundedness of the oracle importance weights, characterize the effective sample size contributed by each source domain, and derive the convergence rate of the oracle SL-TL estimator.  

\begin{assumption}
    \label{pkbound}
    For each $k=1,2,\ldots,K$, the probability error density function $p_{\varepsilon^{(k)}}(t)$ is strictly positive for all finite $t$ and has a bounded first derivative.
\end{assumption}

\begin{assumption}
    \label{subGaussian}
For each $k=0,1,\ldots,K$, the errors $\{\varepsilon_i^{(k)}\}_{i=1}^{n_k}$ are independent and identically distributed within the $k$-th domain. Moreover, there exist constants $\sigma_k>0$ such that $\mathbb{E}[\exp\{u(\varepsilon_i^{(k)}-\mathbb{E}(\varepsilon_i^{(k)}))\}]\leq\exp[u^2(\sigma_{k})^2/2]$ for all $u\in\mathbb{R}$.
\end{assumption}

\begin{assumption}
    \label{kappaF}
    For each $k=0,1,\ldots,K$, there exist constants $\xi>0$ and $\kappa>0$, independent of $k$, such that for all $\boldsymbol{X}_i^{(k)}$ and any $|t|\leq\xi$, there holds $|F_{\varepsilon^{(k)}|\boldsymbol{X}^{(k)}}(t|\boldsymbol{X}_i^{(k)})-F_{\varepsilon^{(k)}|\boldsymbol{X}^{(k)}}(0|\boldsymbol{X}_i^{(k)})|\geq\kappa|t|$.
\end{assumption}

\begin{assumption}
    \label{Gaussian maximal inequalit}
    Let $H_0=\mathrm{supp}(\boldsymbol{\beta}^{(0)})$ denote the support set of $\boldsymbol{\beta}^{(0)}$ and define the cone $\mathcal{E}(H_0,3)=\{\boldsymbol{v}\in\mathbb{R}^{p}:\|\boldsymbol{v}_{H_{0}^{c}}\|_{1}\leq3\|\boldsymbol{v}_{H_{0}}\|_{1}\}$. There exists a constant $\phi>0$ such that
        \begin{equation*}
        \inf_{\boldsymbol{v}\in\mathcal{E}(H_0,3)\backslash\{\mathbf{0}_{p}\}}\boldsymbol{v}^\top\Big[(\mathbf{X}^{(0)})^\top\mathbf{X}^{(0)}+\sum_{k=1}^K(\mathbf{X}^{(k)})^\top\mathbf{I}_{\mathcal{I}_k}\mathbf{X}^{(k)}\Big]\boldsymbol{v}\big/((n_0+n_\mathcal{I})\left\|\boldsymbol{v}\right\|^2)\geq\phi^2,
    \end{equation*}
    where $\mathbf{I}_{\mathcal{I}_k}$ is the diagonal matrix 
    with the $i$-th diagonal element being $\mathds{1}(i\in\mathcal{I}_k)$.
\end{assumption}
Assumption \ref{pkbound} controls the smoothness and lower bound of error densities, while Assumption \ref{subGaussian} controls the error tails. Assumptions \ref{kappaF} and \ref{Gaussian maximal inequalit} provide local curvature conditions ensuring the identifiability and stability of the parameter estimation. 
To impose a smoothness condition on $p_{\varepsilon^{(k)}}(\cdot)$, we introduce the following Hölder class.
\begin{definition}[H\"older class]
\label{def:holder-class}
Let $t=q+\alpha$, where $q\in\mathbb{N}_0$ and $\alpha\in(0,1]$. For a constant $B>0$, the H\"older class $\mathcal H^t(\mathbb R,B)$ on $\mathbb{R}$ is defined as 
\begin{equation} \label{holder}
\mathcal{H}^t(\mathbb R,B) := \big\{ f: \mathbb{R} \to \mathbb{R} \,\big|\, \max_{0\leq \ell\leq q}
\|f^{(\ell)}\|_\infty
\leq B,
\sup_{\substack{u,v\in\mathbb{R}\\u\neq v}}
\frac{
|f^{(q)}(u)-f^{(q)}(v)|
}{
|u-v|^\alpha
}
\leq B \big\}.
\end{equation}
\end{definition}
Throughout the paper, we assume that the error density function
$p_{\varepsilon^{(k)}}(\cdot)$ belongs to $\mathcal H^t(\mathbb R,B)$ for some $t>0$
and $B>0$. 
We then establish the boundedness of the oracle importance weights and characterize the effective sample size in Theorems \ref{pkboundedwkbounded} and \ref{effectivesamplesize}, respectively.

\begin{theorem}
    \label{pkboundedwkbounded}
    Suppose that Assumption \ref{pkbound} holds and $A>0,~M_k>0,~k=1,2,\ldots,K$ and $K$ are fixed. Then, for each $k=1,2,\dots,K$ and for all $i \in \mathcal{I}_{k}$ in \eqref{selected sample subset}, the error density function $p_{\varepsilon^{(k)}}(\varepsilon^{(k)}_i)$ is strictly bounded away from 0. Moreover, there exist positive constants $c_\omega$ and $C_\omega$, independent of $i$ and $k$, such that $0<c_\omega\leq \omega^{(k)}(\boldsymbol{Z}_i^{(k)})\leq C_\omega<\infty$.
\end{theorem}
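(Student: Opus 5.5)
The plan is to reduce everything to the compactness of the interval in which the selected errors lie, and then use continuity and positivity of $p_{\varepsilon^{(k)}}$ on that interval.

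\textbf{Localization.} For $i\in\mathcal{I}_k$, the definition \eqref{selected sample subset} gives $-\tau A\le r_i^{(k)}\le(1-\tau)A$ and $|\eta_i^{(k)}|\le M_k$. As already noted after \eqref{selected sample subset}, the triangle inequality yields $|\varepsilon_i^{(k)}|\le |r_i^{(k)}|+|\eta_i^{(k)}|\le A+M_k$. Hence every selected error lies in the compact interval $[-(A+M_k),A+M_k]$, and this interval does not depend on $i$.

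\textbf{Lower bound for the density.} By Assumption \ref{pkbound}, $p_{\varepsilon^{(k)}}$ has a bounded first derivative, so it is differentiable and in particular continuous. It is also strictly positive at every finite point. A continuous, strictly positive function attains a positive minimum on a compact set. Therefore
\[
m_k:=\min_{|t|\le A+M_k}p_{\varepsilon^{(k)}}(t)>0,
\]
and $p_{\varepsilon^{(k)}}(\varepsilon_i^{(k)})\ge m_k$ for all $i\in\mathcal{I}_k$. This proves the first claim.

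\textbf{Upper bound for the density.} The same continuity argument gives a finite maximum
\[
\bar m_k:=\max_{|t|\le A+M_k}p_{\varepsilon^{(k)}}(t)<\infty.
\]
Alternatively, the standing assumption $p_{\varepsilon^{(k)}}\in\mathcal H^t(\mathbb R,B)$ gives the bound $\bar m_k\le B$ directly.

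\textbf{Weight bounds.} For $i\in\mathcal{I}_k$ we have $-\tau A\le r_i^{(k)}\le(1-\tau)A$, so with the choice \eqref{g_tau} the numerator is exactly $g_\tau(r_i^{(k)})=1/A$. Consequently
\[
\frac{1}{A\,\bar m_k}\le \omega^{(k)}(\boldsymbol Z_i^{(k)})=\frac{1}{A\,p_{\varepsilon^{(k)}}(\varepsilon_i^{(k)})}\le \frac{1}{A\,m_k}.
\]
Since $K$ is fixed, there are finitely many such constants, and we set
\[
c_\omega:=\min_{1\le k\le K}\frac{1}{A\bar m_k},\qquad C_\omega:=\max_{1\le k\le K}\frac{1}{A m_k}.
\]
Both are positive and finite, and neither depends on $i$ or $k$.

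\textbf{Main obstacle.} The only subtle point is that $m_k$ must be a deterministic constant, not one that depends on the sample. This holds because the localization interval is determined solely by the fixed tuning parameters $A$ and $M_k$, not by the data. Fixing $A$, $M_k$ and $K$ is exactly what keeps $c_\omega$ and $C_\omega$ as genuine constants.
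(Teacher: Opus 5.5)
Your proposal is correct and follows the paper's proof essentially step for step. You localize the selected errors to $[-(A+M_k),A+M_k]$, use continuity and strict positivity on that compact interval to bound $p_{\varepsilon^{(k)}}$ away from $0$ and $\infty$, and note that $g_\tau(r_i^{(k)})=1/A$ on $\mathcal I_k$. Taking the minimum and maximum over the finitely many $k$ then gives $c_\omega$ and $C_\omega$.
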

Theorem~\ref{pkboundedwkbounded} shows that, as a consequence of the sample selection procedure, the oracle importance weights are uniformly bounded away from both zero and infinity over the selected source observations.
In particular, the upper bound prevents any selected source observation from having an excessively large contribution to the weighted quantile loss,
whereas the lower bound ensures that every selected source observation makes a non-negligible contribution to the weighted loss rather than being effectively discarded.
This uniform boundedness facilitates the control of the weighted empirical loss in the subsequent convergence analysis.

\begin{theorem}
    \label{effectivesamplesize}
    Suppose that for each $k=1,2,\dots,K$, the covariates $\boldsymbol{X}^{(k)}$ are independently drawn from a multivariate normal distribution $N_p(\boldsymbol{0},\Sigma^{(k)})$, where the eigenvalues of $\Sigma^{(k)}$ satisfy $\zeta_l\leq\lambda_{\min}(\Sigma^{(k)})\leq\lambda_{\max}(\Sigma^{(k)})\leq\zeta_u$ for some positive constants $0<\zeta_l\leq\zeta_u<\infty$ independent of $k$. Suppose that Assumptions \ref{pkbound} and \ref{subGaussian} hold and that the screening parameters $A>0$ and $M_k>0$ satisfy $\min\{\tau,1-\tau\}A-M_k\geq c_0$ for some $c_0>0$. Let $d_k=\|\boldsymbol{\delta}^{(k)}\|_2$. Then,
    \begin{equation*}
        \mathbb{E}(n_{\mathcal{I}_k})\asymp n_k\{(M_{k} / d_k)\wedge 1\},
    \end{equation*}
    where $n_k\{(M_{k}/d_k)\wedge1\}:=n_k$ when $d_k=0$. In particular, when $M_k\geq d_k$, we have $\mathbb{E}(n_{\mathcal{I}_k})\asymp n_k$.
\end{theorem}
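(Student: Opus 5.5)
Since the observations in source domain $k$ are i.i.d., we have $\mathbb{E}(n_{\mathcal{I}_k})=n_k\,\mathbb{P}(i\in\mathcal{I}_k)$ for a generic index $i$. So it suffices to show that
\[
\mathbb{P}\big(-\tau A\le \varepsilon_i^{(k)}+\eta_i^{(k)}\le (1-\tau)A,\ |\eta_i^{(k)}|\le M_k\big)\asymp (M_k/d_k)\wedge 1 .
\]
The plan rests on two structural facts. First, $\eta_i^{(k)}=(\boldsymbol{X}_i^{(k)})^\top\boldsymbol{\delta}^{(k)}$ is $N(0,\sigma_{\eta,k}^2)$ with $\sigma_{\eta,k}^2=(\boldsymbol{\delta}^{(k)})^\top\Sigma^{(k)}\boldsymbol{\delta}^{(k)}\in[\zeta_l d_k^2,\zeta_u d_k^2]$, so $\sigma_{\eta,k}\asymp d_k$. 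Second, $\varepsilon_i^{(k)}$ is independent of $\boldsymbol{X}_i^{(k)}$ and hence of $\eta_i^{(k)}$. I will sandwich the probability between a product-form lower bound and the trivial upper bound $\mathbb{P}(|\eta_i^{(k)}|\le M_k)$, and then evaluate the Gaussian small-ball probability.

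\emph{Upper bound.} Dropping the residual constraint gives $\mathbb{P}(i\in\mathcal{I}_k)\le \mathbb{P}(|\eta_i^{(k)}|\le M_k)$. Using the bound $1/\sqrt{2\pi}$ on the standard normal density, this is at most $\min\{1,\,2M_k/(\sqrt{2\pi}\sigma_{\eta,k})\}\le C\{(M_k/d_k)\wedge1\}$, with $C$ depending only on $\zeta_l$.

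\emph{Lower bound.} On the event $|\eta_i^{(k)}|\le M_k$, the residual constraint is implied by
\[
\varepsilon_i^{(k)}\in[-\tau A+M_k,\,(1-\tau)A-M_k].
\]
The hypothesis $\min\{\tau,1-\tau\}A-M_k\ge c_0$ gives $-\tau A+M_k\le -c_0$ and $(1-\tau)A-M_k\ge c_0$, so this interval contains $[-c_0,c_0]$. By independence,
\[
\mathbb{P}(i\in\mathcal{I}_k)\ge \mathbb{P}\big(|\varepsilon_i^{(k)}|\le c_0\big)\,\mathbb{P}\big(|\eta_i^{(k)}|\le M_k\big).
\]
By Assumption~\ref{pkbound}, $p_{\varepsilon^{(k)}}$ is strictly positive and continuous, since it has a bounded derivative. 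Hence $\min_{|t|\le c_0}p_{\varepsilon^{(k)}}(t)>0$, and the first factor is at least $2c_0\min_{|t|\le c_0}p_{\varepsilon^{(k)}}(t)$, a positive constant.

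For the Gaussian factor I split into two cases. If $M_k\le\sigma_{\eta,k}$, then $\mathbb{P}(|N(0,1)|\le M_k/\sigma_{\eta,k})\ge 2\varphi(1)M_k/\sigma_{\eta,k}$. If $M_k>\sigma_{\eta,k}$, it is at least $\mathbb{P}(|N(0,1)|\le1)$. In both cases the factor is $\gtrsim (M_k/\sigma_{\eta,k})\wedge1\gtrsim (M_k/d_k)\wedge1$, using $\sigma_{\eta,k}\le\sqrt{\zeta_u}\,d_k$.

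\emph{Degenerate case.} When $d_k=0$, we have $\eta_i^{(k)}\equiv0$, and the probability equals $\mathbb{P}(-\tau A\le\varepsilon_i^{(k)}\le(1-\tau)A)\ge\mathbb{P}(|\varepsilon_i^{(k)}|\le c_0)>0$, which gives the convention $\mathbb{E}(n_{\mathcal{I}_k})\asymp n_k$.

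The final claim follows because $M_k\ge d_k$ makes the minimum equal to $1$. The sub-Gaussian Assumption~\ref{subGaussian} is not really needed here beyond ensuring well-defined errors. The only delicate point is that the constants are uniform in $k$. This follows because $K$ is fixed and the eigenvalue bounds $\zeta_l,\zeta_u$ and $c_0$ do not depend on $k$. The lower bound on $p_{\varepsilon^{(k)}}$ over $[-c_0,c_0]$ is handled by taking the minimum over the finitely many $k$.
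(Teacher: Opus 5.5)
Your proof is correct and follows the paper's structure: you take the upper bound by dropping the residual constraint, get the product lower bound from independence and the containment of $[-c_0,c_0]$ in $[-\tau A+M_k,(1-\tau)A-M_k]$, and treat $d_k=0$ separately. The only difference is in the Gaussian small-ball step, where the paper squares $\mathbb{P}(|\eta_i^{(k)}|\le M_k)$ and compares $[-M_k,M_k]^2$ with discs to get $\sqrt{1-\exp(-cM_k^2/d_k^2)}$, whereas you bound the normal density directly (by $1/\sqrt{2\pi}$ above and by $\varphi(1)$ on $[-1,1]$ below), which is equally valid and slightly more direct.
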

Theorem~\ref{effectivesamplesize} shows that the effective sample size
$\mathbb{E}(n_{\mathcal I_k})$ contributed by the source $k$ is determined
by the relative magnitudes of the selection parameter $M_k$ and the
source discrepancy $d_k$. If $M_k\geq d_k$, the effective sample size
from source $k$ is of the same order as its original sample size
$n_k$. In contrast, when $M_k<d_k$, the effective sample size is of the order $n_kM_k/d_k$.
Therefore, for a fixed $M_k$, a source domain contributes fewer effective observations as its discrepancy from the target domain increases.
We next study the convergence rate of the oracle SL-TL estimator $\hat{\boldsymbol{\beta}}_{\mathrm{ora}}^{(0)}$ in \eqref{eq:oracle-SL-TL}.
\begin{theorem}
    \label{oracletheorem}
    Suppose that Assumptions \ref{pkbound}-\ref{Gaussian maximal inequalit} are satisfied and $K$, $\{\boldsymbol{X}_i^{(k)}\}_{i=1}^{n_k}$ are fixed. Assume that $\mathcal{I}_{k}$ in \eqref{selected sample subset} and $\omega_{i}^{(k)}$ in \eqref{omega} are known. If the regularization parameter satisfies $\lambda_{\mathrm{ora}}\asymp\rho_{\mathcal{I}}\sqrt{\log p/(n_0+\mathbb{E}(n_{\mathcal{I}}))}$, and $\Psi_ns\sqrt{\log p/(n_0+\mathbb{E}(n_{\mathcal{I}}))}=o(1)$, provided that $\Psi_n=O(1)$. 
    Suppose $n_0+\mathbb{E}(n_{\mathcal{I}})\to\infty$, then the oracle estimator satisfies
    \begin{center}
        $\parallel\hat{\boldsymbol{\beta}}_{\mathrm{ora}}^{(0)}-\boldsymbol{\beta}^{(0)}\parallel^2_2=O_p(s\log p/(n_0+\mathbb{E}(n_\mathcal{I}))).$
    \end{center}
\end{theorem}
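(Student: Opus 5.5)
The argument adapts the standard $\ell_1$-penalized quantile regression analysis of \citet{Belloni2011} to the weighted, pooled loss in \eqref{eq:oracle-SL-TL}. Write the weighted empirical loss as
\[
L_N(\boldsymbol\beta)=N^{-1}\sum_{k=0}^K\sum_{i\in\mathcal I_k}\omega_i^{(k)}\rho_\tau\bigl(Y_i^{(k)}-(\boldsymbol X_i^{(k)})^\top\boldsymbol\beta\bigr).
\]
By Theorem \ref{pkboundedwkbounded}, all weights on the selected observations lie in $[c_\omega,C_\omega]$, and the target weights equal one. The effective normalized sample is therefore $n_0+n_{\mathcal I}$, and the factor $\rho_{\mathcal I}=(n_0+\mathbb E n_{\mathcal I})/N$ converts between the $1/N$ normalization and this effective size. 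Since $\mathcal I_k$ depends only on $(\varepsilon_i^{(k)},\eta_i^{(k)})$ and $\boldsymbol X$ is fixed, a Chernoff bound gives $n_{\mathcal I}/\mathbb E(n_{\mathcal I})\to1$ in probability (or at least $\asymp$). The plan is to prove the bound with $n_0+n_{\mathcal I}$ and then replace it by $n_0+\mathbb E(n_{\mathcal I})$.

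\textbf{Step 1 (score bound and cone).} Let $\boldsymbol S=\partial L_N(\boldsymbol\beta^{(0)})$, with components $N^{-1}\sum\omega_i^{(k)}X_{ij}^{(k)}\{\tau-\mathds 1(Y_i^{(k)}\le (\boldsymbol X_i^{(k)})^\top\boldsymbol\beta^{(0)})\}$. The key point is that each summand is centered, $\mathbb E[\omega\,\{\tau-\mathds 1(r\le0)\}\mid\boldsymbol X]=0$. For source observations, the change of variables behind Theorem \ref{prop:RIW-QR} gives
\[
\mathbb E[\omega_i^{(k)}\mathds 1(i\in\mathcal I_k)\{\tau-\mathds 1(r_i^{(k)}\le0)\}]=\int g_\tau(r)\{\tau-\mathds 1(r\le0)\}\,dr\cdot\mathds 1(|\eta_i^{(k)}|\le M_k)=0,
\]
because $\int_{-\tau A}^0 g_\tau=\tau$ and the indicator on $r$ is already inside $g_\tau$. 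The summands are bounded by $C_\omega\Psi_n$. Hoeffding's inequality with a union bound over $p$ coordinates then gives $\|\boldsymbol S\|_\infty\lesssim N^{-1}\sqrt{(n_0+\mathbb E n_{\mathcal I})\log p}=\rho_{\mathcal I}\sqrt{\log p/(n_0+\mathbb E n_{\mathcal I})}$, since only selected terms are nonzero and their variance sum is of order $n_0+\mathbb E n_{\mathcal I}$. With $\lambda_{\rm ora}\ge 2\|\boldsymbol S\|_\infty$ (with high probability), the usual argument places $\boldsymbol v=\hat{\boldsymbol\beta}_{\rm ora}^{(0)}-\boldsymbol\beta^{(0)}$ in $\mathcal E(H_0,3)$.

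\textbf{Step 2 (restricted strong convexity).} For $\boldsymbol v$ in the cone with $\|\boldsymbol v\|_2\le r$, lower bound the expected excess loss $\mathbb E[L_N(\boldsymbol\beta^{(0)}+\boldsymbol v)-L_N(\boldsymbol\beta^{(0)})]$ using Knight's identity. Under the weighting, the conditional law of $r$ on a selected source observation has density $g_\tau$, which is uniform on $[-\tau A,(1-\tau)A]$, so the curvature there is $1/A$ near zero. On target observations, Assumption \ref{kappaF} supplies $\kappa$. This yields a lower bound of the form
\[
c\,N^{-1}\sum_{k,\,i\in\mathcal I_k}\min\{(\boldsymbol X_i^\top\boldsymbol v)^2,\ \xi|\boldsymbol X_i^\top\boldsymbol v|\}.
\]
Because $|\boldsymbol X_i^\top\boldsymbol v|\le\Psi_n\|\boldsymbol v\|_1\le4\sqrt s\,\Psi_n\|\boldsymbol v\|_2$, which is small under the rate condition $\Psi_ns\sqrt{\log p/(n_0+\mathbb E n_{\mathcal I})}=o(1)$, the quadratic branch is active. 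Assumption \ref{Gaussian maximal inequalit} then gives the bound $\ge c\,\rho_{\mathcal I}\phi^2\|\boldsymbol v\|_2^2$.

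\textbf{Step 3 (empirical process) — the main obstacle.} I must control
\[
\sup_{\boldsymbol v\in\text{cone},\ \|\boldsymbol v\|_2\le r}\bigl|(L_N-\mathbb E L_N)(\boldsymbol\beta^{(0)}+\boldsymbol v)-(L_N-\mathbb E L_N)(\boldsymbol\beta^{(0)})\bigr|
\]
by $C\,\rho_{\mathcal I}\,r\sqrt{s\log p/(n_0+\mathbb E n_{\mathcal I})}$. The difficulty is that the weights and the selection indicators are random through $\varepsilon$. I plan to use symmetrization and the Ledoux--Talagrand contraction inequality: $\rho_\tau$ is $1$-Lipschitz and $\omega\,\mathds 1_{\mathcal I}$ is bounded by $C_\omega$ and is a fixed multiplier given $(\varepsilon,\boldsymbol X)$. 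This reduces the problem to $\|N^{-1}\sum\epsilon_i\omega_i\boldsymbol X_i\|_\infty\|\boldsymbol v\|_1$, bounded as in Step 1, followed by a peeling argument over $r$.

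\textbf{Conclusion.} Combining the three steps with the basic inequality and $\|\boldsymbol v_{H_0}\|_1\le\sqrt s\|\boldsymbol v\|_2$ gives
\[
\rho_{\mathcal I}\phi^2\|\boldsymbol v\|_2^2\lesssim(\lambda_{\rm ora}+\rho_{\mathcal I}\sqrt{\log p/(n_0+\mathbb E n_{\mathcal I})})\sqrt s\|\boldsymbol v\|_2 .
\]
Hence $\|\boldsymbol v\|_2^2\lesssim s\log p/(n_0+\mathbb E n_{\mathcal I})$.
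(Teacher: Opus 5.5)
Your route is essentially the paper's: the basic inequality via Knight's identity, the centered weighted score bounded by a concentration inequality plus a union bound, the cone $\mathcal E(H_0,3)$, local curvature ($\kappa$ on target terms, $1/A$ on selected source terms) turned into $\rho_{\mathcal I}\phi^2\|\boldsymbol v\|_2^2$ via Assumption \ref{Gaussian maximal inequalit}, and symmetrization plus contraction for the fluctuation. Your centering computation $\mathbb E[\omega_i^{(k)}\mathds{1}\{i\in\mathcal I_k\}\psi_\tau(r_i^{(k)})\mid\boldsymbol X]=0$ is exactly the paper's Step 1.

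The genuine gap is localization. Your Step 2 quadratic lower bound holds only when $\Psi_n\|\boldsymbol v\|_1\le\min\{\xi,A\min(\tau,1-\tau)\}$. You justify this by saying $4\sqrt s\,\Psi_n\|\boldsymbol v\|_2$ is small under the rate condition. For $\boldsymbol v=\hat{\boldsymbol\beta}_{\mathrm{ora}}^{(0)}-\boldsymbol\beta^{(0)}$, that presupposes $\|\boldsymbol v\|_2\lesssim\sqrt{s\log p/(n_0+\mathbb E(n_{\mathcal I}))}$, which is the conclusion. Peeling over $r$ makes the fluctuation bound uniform in the radius, but it does not extend the curvature beyond the local region. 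So the final ``combine the three steps'' is circular as written.

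The missing step is the paper's convexity argument. Fix the deterministic radius $R_n=D_\epsilon s\lambda_{\mathrm{ora}}/\rho_{\mathcal I}$. Then $\Psi_nR_n=o(1)$ by the rate condition, and one Markov bound on $\sup_{\|\boldsymbol\delta\|_1\le R_n}|\mathcal R(\boldsymbol\delta)|$ suffices, with no peeling needed. Suppose $\|\hat{\boldsymbol\delta}^{(0)}\|_1>R_n$ and rescale to $\tilde{\boldsymbol\delta}=t\hat{\boldsymbol\delta}^{(0)}$ with $\|\tilde{\boldsymbol\delta}\|_1=R_n$. Convexity of the penalized objective $Q$ gives $Q(\boldsymbol\beta^{(0)}+\tilde{\boldsymbol\delta})\le Q(\boldsymbol\beta^{(0)})$, so the basic inequality and the cone still hold at $\tilde{\boldsymbol\delta}$. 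There the quadratic curvature is legitimate, and comparing the upper and lower bounds gives a contradiction for large $D_\epsilon$. Hence $\|\hat{\boldsymbol\delta}^{(0)}\|_1=O_p(s\lambda_{\mathrm{ora}}/\rho_{\mathcal I})$, after which your Steps 2--3 apply to $\hat{\boldsymbol\delta}^{(0)}$ itself.

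Alternatively, you could keep your global $\min\{u^2,\xi|u|\}$ bound. In the cone, $\min\{u_i^2,\xi|u_i|\}\ge u_i^2\min\{1,\xi/(4\Psi_n\sqrt s\|\boldsymbol v\|_2)\}$, and the linear branch is then excluded precisely by $\Psi_ns\sqrt{\log p/(n_0+\mathbb E(n_{\mathcal I}))}=o(1)$. That route, however, needs a fluctuation bound uniform over all radii, which an expectation-plus-Markov contraction bound does not supply without an extra concentration step.

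Two smaller points:
\begin{enumerate}
\item Unconditional Hoeffding over all $N$ summands gives $\sqrt{N\log p}/N$, not $\sqrt{(n_0+\mathbb E(n_{\mathcal I}))\log p}/N$. Use Bernstein, as the paper does, or Hoeffding conditional on the selection indicators; the latter is valid since $\mathbb E[\omega_i^{(k)}\psi_\tau(r_i^{(k)})\mid i\in\mathcal I_k]=0$.
\item Your claim $n_{\mathcal I}/\mathbb E(n_{\mathcal I})\to1$ needs $\mathbb E(n_{\mathcal I})\to\infty$, which is not assumed. What you actually need is $(n_0+n_{\mathcal I})/(n_0+\mathbb E(n_{\mathcal I}))\to1$, and this follows from Chebyshev and $\operatorname{Var}(n_{\mathcal I})\le\mathbb E(n_{\mathcal I})$.
\end{enumerate}
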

Theorem~\ref{oracletheorem} shows that the squared $\ell_2$-error of the oracle SL-TL estimator scales with the total effective sample size $n_0+\mathbb E(n_{\mathcal I})$, rather than with the target sample size $n_0$ alone. 
We next compare its squared $\ell_2$-convergence rate with those of the Target-only estimator $\hat{\boldsymbol\beta}_{\mathrm T}^{(0)}$ and the oracle Trans-SQR estimator $\hat{\boldsymbol\beta}_{\mathrm S}$. Consider a single source domain setting with sample size $n_1$ satisfying $\|\boldsymbol{\delta}^{(1)}\|_1\leq h$ so that it is included by the oracle Trans-SQR. 

For brevity, let $N_{\mathrm{SL}}=n_0+\mathbb E(n_{\mathcal I_1})$ and $N_{\mathrm S}=n_0+n_1$. Note that the squared $\ell_2$-convergence rates of the Target-only, oracle Trans-SQR, and oracle SL-TL estimators are $r_{\mathrm{T}}:=s\log p/{n_0}$, $r_{\mathrm{S}}\asymp s\log p/\sqrt{n_0+n_1}+h\sqrt{\log p/n_0}$, and $r_{\mathrm{SL}}:=s\log p/{N_{\mathrm{SL}}}$, respectively. 
To study how the rate comparison changes as the source-target discrepancy level $h$ increases, we define two benchmark thresholds $h_a=s\sqrt{\log p/N_{\mathrm{S}}}$ and $h_b=s\sqrt{\log p/n_0}$. Under fixed $M_1$ and $h_b=o_p(1)$, the rate comparison is summarized in Table \ref{tab:rate-comparison-single-source}.  
\begin{table}[!htbp]
\centering
\caption{Comparison of the squared $\ell_2$-convergence rates of the Target-only, oracle Trans-SQR, and oracle SL-TL estimators in the single source setting.}
\label{tab:rate-comparison-single-source}\par
\renewcommand{\arraystretch}{1.6}
\resizebox{\linewidth}{!}{\begin{tabular}{c r r c c} 
\toprule
 & \multicolumn{1}{c}{Target-only} & \multicolumn{1}{c}{Oracle Trans-SQR} & \multicolumn{1}{c}{Oracle SL-TL} & \multicolumn{1}{c}{$N_{\mathrm{SL}}$} \\
\midrule
$h\lesssim h_a$ & $s\log p/n_0$ & $s\log p/\sqrt{n_0N_\mathrm{S}}$ & $s\log p/N_\mathrm{S}$ & $N_\mathrm{SL}\asymp N_\mathrm{S}$ \\
$h_a\ll h\ll h_b$   & $s\log p/n_0$ & $h\sqrt{\log p/n_0}$ & $s\log p/N_\mathrm{S}$ & $N_\mathrm{SL}\asymp N_\mathrm{S}$\\
$h\gtrsim h_b$ & $s\log p/n_0$ & $h\sqrt{\log p/n_0}$ & $s\log p/(n_0+n_1\{(M_1/d_1)\wedge1\})$ & $N_\mathrm{SL}\asymp n_0+n_1\{(M_1/d_1)\wedge1\}$\\
\bottomrule
\end{tabular}}
\end{table}

As shown in Table \ref{tab:rate-comparison-single-source}, since $h_b=o_p(1)$ and $d_1=\|\boldsymbol{\delta}^{(1)}\|_2\leq\|\boldsymbol{\delta}^{(1)}\|_1\leq h$, we have $N_\mathrm{SL}\asymp N_\mathrm{S}$ when $h\lesssim h_a$ or $h_a\ll h\ll h_b$. 
The oracle SL-TL estimator attains the smallest rate across all regimes.  
When $h\gtrsim h_b$, oracle Trans-SQR is no better than Target-only, while oracle SL-TL attains $s\log p/(n_0+n_1\{(M_1/d_1)\wedge1\})$ and remains no worse than Target-only since $N_{\mathrm{SL}}\gtrsim n_0$.

\subsection{When the selected subsets and weights are unknown}
\label{sub3.2}
We next investigate the theoretical properties of the cross-fitted SL-TL estimator $\hat{\boldsymbol{\beta}}_{\mathrm{SL}\text{-}\mathrm{TL}}^{(0)}$ in \eqref{betaSL-TLhat0}. 
We first establish Lemmas \ref{lemma:selection-consistency} and \ref{lemma hat p_k}, which provide the error bounds arising from the estimation of the 
selected sample subsets and densities, respectively.
Based on these results, we then derive the convergence rate of the proposed estimator $\hat{\boldsymbol\beta}_{\mathrm{SL\text{-}TL}}^{(0)}$. We begin with an assumption on the initial estimators for the subsequent theoretical analysis.

\begin{assumption}
    \label{initial beta}
Suppose that \(K\) is fixed. For each $k=0,1,\cdots,K$, the initial estimator $\tilde{\boldsymbol{\beta}}^{(k)}$ in \eqref{eqinitialbetak} satisfies $\|\tilde{\boldsymbol{\beta}}^{(k)}-\boldsymbol{\beta}^{(k)}\|_1
=O_p(\gamma_k)$ with $\gamma_k=o(1)$.
\end{assumption}
Assumption \ref{initial beta} requires the initial estimators to be consistent in the $\ell_1$ norm. 
In Algorithm \ref{alg:SL-TL}, $\tilde {\boldsymbol{\beta}}^{(k)}$ is constructed using the $\ell_1$-penalized SQR estimator proposed by \citet{tan2022high}. Under standard sparsity and regularity conditions, its $\ell_1$-estimation error typically satisfies $\|\tilde{\boldsymbol{\beta}}^{(k)}-\boldsymbol{\beta}^{(k)}\|_1
=O_p(s_k\sqrt{\log p/n_k})$ where $s_k=\|\boldsymbol{\beta}^{(k)}\|_0$. Therefore, Assumption \ref{initial beta} holds whenever $s_k\sqrt{\log p/n_k}=o(1)$.


Let $\mathcal X=\{\bm X_i^{(k)}:i\in\mathcal J_k,\;k=0,\ldots,K\}$ denote the collection of design vectors, and set $L_\tau=-\tau A$ and $U_\tau=(1-\tau)A$. For $k=1,\ldots,K$ and $j=1,2,3$, Algorithm \ref{alg:SL-TL} yields $\hat{\mathcal I}_{k,j} = \{ i\in\mathcal J_{k,j}: L_{\tau} \le\tilde\varepsilon_i^{(k)} + \tilde\eta_i^{(k)} \le U_{\tau},~| \tilde\eta_i^{(k)}|  \le M_k \}$.
Since $r_i^{(k)}=\varepsilon_i^{(k)}+\eta_i^{(k)}$, we define the inner and outer perturbed oracle subsets by $ \mathcal{I}_{k,j}^{-}=\{i\in\mathcal{J}_{k,j}:L_\tau+a_n\le r_i^{(k)}\le U_\tau-a_n,~|\eta_i^{(k)}|\le M_k-a_n\}$ and $\mathcal{I}_{k,j}^{+}=\{i\in\mathcal{J}_{k,j}:L_\tau-a_n\le r_i^{(k)}\le U_\tau+a_n,~|\eta_i^{(k)}|\le M_k+a_n\}$, respectively, 
where $a_n=o(1)$ is defined in Lemma \ref{lem:S1} of Section \ref{supp:lemma S1} in Supplementary Material.

\begin{lemma}
\label{lemma:selection-consistency}
Suppose that the assumptions of Lemma \ref{lem:S1} and \ref{lem:S2} in Supplementary Material hold. 
Define
$\pi_k
=
\sum_{i\in\mathcal J_k}
\mathbb P
\{
L_\tau
\le r_i^{(k)}
\le U_\tau,\;
|\eta_i^{(k)}|
\le M_k
\}/n_k$ and suppose that $\min_{1\le k\le K}\pi_k
\ge c_\pi>0.$ For each \(k=1,\ldots,K\) and \(j=1,2,3\), we have $
    \mathbb{P}(\mathcal I_{k,j}^{-} \subseteq \hat{\mathcal I}_{k,j} \subseteq \mathcal I_{k,j}^{+})\xrightarrow{}1 
    $ 
and $\mathbb{E}|\mathcal{I}_{k,j}^{-}|
\asymp
\mathbb{E}|\hat{\mathcal{I}}_{k,j}|
\asymp
\mathbb{E}|\mathcal{I}_{k,j}|
\asymp
\mathbb{E}|\mathcal{I}_{k,j}^{+}|.$ 
If, in addition, $n_{k,j}\pi_k\to\infty$, then
$|\hat{\mathcal{I}}_{k,j}|/({n_{k,j}\pi_k})
\overset{p}{\longrightarrow}1.$
\end{lemma}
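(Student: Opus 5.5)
The plan is to reduce everything to a uniform control of the perturbations in the screening statistics, which is presumably what Lemma \ref{lem:S1} supplies. For $i\in\mathcal J_{k,j}$ write
\[
\tilde\varepsilon_i^{(k)}+\tilde\eta_i^{(k)}-r_i^{(k)}=(\boldsymbol X_i^{(k)})^\top(\boldsymbol\beta^{(0)}-\tilde{\boldsymbol\beta}^{(0)})
\]
and
\[
\tilde\eta_i^{(k)}-\eta_i^{(k)}=(\boldsymbol X_i^{(k)})^\top\{(\tilde{\boldsymbol\beta}^{(k)}-\boldsymbol\beta^{(k)})-(\tilde{\boldsymbol\beta}^{(0)}-\boldsymbol\beta^{(0)})\}.
\]
By H\"older's inequality and Assumption \ref{initial beta}, both are bounded by $\Psi_n(\|\tilde{\boldsymbol\beta}^{(k)}-\boldsymbol\beta^{(k)}\|_1+\|\tilde{\boldsymbol\beta}^{(0)}-\boldsymbol\beta^{(0)}\|_1)=O_p(\Psi_n(\gamma_k+\gamma_0))$, uniformly in $i$. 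I take Lemma \ref{lem:S1} to assert exactly that, on an event $\Omega_n$ with $\mathbb P(\Omega_n)\to1$, $\max_{i}|\tilde\varepsilon_i^{(k)}+\tilde\eta_i^{(k)}-r_i^{(k)}|\le a_n$ and $\max_i|\tilde\eta_i^{(k)}-\eta_i^{(k)}|\le a_n$. Sample splitting matters here: on fold $\mathcal J_{k,j}$ the initial estimators are built from other folds, so conditioning on them leaves the fold's observations i.i.d.

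\textbf{Step 1 (sandwich).} On $\Omega_n$ the inclusions are deterministic. If $i\in\mathcal I^-_{k,j}$, then $L_\tau\le r_i^{(k)}-a_n\le \tilde\varepsilon_i^{(k)}+\tilde\eta_i^{(k)}\le r_i^{(k)}+a_n\le U_\tau$ and $|\tilde\eta_i^{(k)}|\le M_k$, so $i\in\hat{\mathcal I}_{k,j}$. Symmetrically, $i\in\hat{\mathcal I}_{k,j}$ forces $i\in\mathcal I^+_{k,j}$. This gives $\mathbb P(\mathcal I^-_{k,j}\subseteq\hat{\mathcal I}_{k,j}\subseteq\mathcal I^+_{k,j})\ge\mathbb P(\Omega_n)\to1$.

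\textbf{Step 2 (orders of expectations).} Note $\mathcal I^-_{k,j}\subseteq\mathcal I_{k,j}\subseteq\mathcal I^+_{k,j}$ deterministically, so it suffices to show $\mathbb E|\mathcal I^+_{k,j}|-\mathbb E|\mathcal I^-_{k,j}|=o(n_{k,j})$ while $\mathbb E|\mathcal I_{k,j}|\ge n_{k,j}\pi_k\ge c_\pi n_{k,j}$. The difference is bounded by
\[
\sum_{i\in\mathcal J_{k,j}}\Big[\mathbb P\{r_i^{(k)}\in[L_\tau-a_n,L_\tau+a_n]\cup[U_\tau-a_n,U_\tau+a_n]\}+\mathbb P\{M_k-a_n<|\eta_i^{(k)}|\le M_k+a_n\}\Big].
\]
The first term is $O(a_n)$ per observation: conditionally on $\boldsymbol X_i^{(k)}$, $r_i^{(k)}$ has density $p_{\varepsilon^{(k)}}(\cdot-\eta_i^{(k)})\le B$ (the H\"older class bound). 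The second term needs $\eta_i^{(k)}$ to have bounded density near $\pm M_k$, or at least continuity of its distribution. I expect this to be among the conditions of Lemma \ref{lem:S2}; under Gaussian design it is $O(a_n/d_k)$, and the term vanishes if $d_k=0$. Hence the difference is $O(n_{k,j}a_n)=o(n_{k,j})$, and all four expectations lie between $c_\pi n_{k,j}(1-o(1))$ and $n_{k,j}$. For $\hat{\mathcal I}_{k,j}$, I sandwich on $\Omega_n$ and bound the complement contribution by $n_{k,j}\mathbb P(\Omega_n^c)=o(n_{k,j})$.

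\textbf{Step 3 (ratio consistency).} Write $|\mathcal I_{k,j}|=\sum_{i\in\mathcal J_{k,j}}\xi_i$ with independent Bernoulli indicators. Its mean is $n_{k,j}\pi_k$, up to the fold representing $\mathcal J_k$, which is exact in expectation under random splitting. Its variance is at most $n_{k,j}\pi_k$, so Chebyshev gives $|\mathcal I_{k,j}|/(n_{k,j}\pi_k)\to1$ in probability when $n_{k,j}\pi_k\to\infty$. The same argument applies to $|\mathcal I^\pm_{k,j}|$, whose means differ from $n_{k,j}\pi_k$ by $o(n_{k,j})=o(n_{k,j}\pi_k)$ since $\pi_k\ge c_\pi$. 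Combining this with the sandwich on $\Omega_n$ gives $|\hat{\mathcal I}_{k,j}|/(n_{k,j}\pi_k)\to1$ in probability.

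The main obstacle is the boundary-mass control of $|\eta_i^{(k)}|$ near $M_k$ in Step 2. It requires a density or anti-concentration condition on $(\boldsymbol X^{(k)})^\top\boldsymbol\delta^{(k)}$ that is not implied by Assumptions \ref{pkbound}--\ref{Gaussian maximal inequalit}, so I would rely on Lemma \ref{lem:S2} to provide it.
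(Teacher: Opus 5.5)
Your proposal is correct and follows the paper's proof essentially step for step: the event of Lemma~\ref{lem:S1} yields the deterministic sandwich, and Lemma~\ref{lem:S2} gives $\mathbb E|\mathcal I^{+}_{k,j}|-\mathbb E|\mathcal I^{-}_{k,j}|=O(n_{k,j}a_n)$ (its hypothesis is exactly the boundary condition on $|\eta_i^{(k)}|$ near $M_k$ that you anticipated), and Chebyshev on $|\mathcal I^{\pm}_{k,j}|$ is transferred to $|\hat{\mathcal I}_{k,j}|$ through the sandwich. The one step to state more carefully is the variance in Step 3, since under fixed design with a random fold the indicators are independent only conditionally on $\mathcal J_{k,j}$, where the mean is $\sum_{i\in\mathcal J_{k,j}}\mathbb P\{(r_i^{(k)},\eta_i^{(k)})\in R_k\}$ rather than $n_{k,j}\pi_k$; the paper handles this with the law of total variance and the sampling-without-replacement formula to get $\operatorname{Var}|\mathcal I^{\diamond}_{k,j}|\le 3n_{k,j}\pi_k^{\diamond}(a_n)$, while your sharper bound $n_{k,j}\pi_k$ also holds unconditionally because fold-membership indicators are negatively correlated, and you should say so explicitly.
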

Lemma \ref{lemma:selection-consistency} shows that the estimated sample selection subset $\hat{\mathcal I}_{k,j}$ provides a good approximation to the oracle subset $\mathcal{I}_{k,j}$. Specifically, with probability tending to one, it is sandwiched between two slightly perturbed oracle subsets, and 
its expected cardinality is asymptotically equivalent to that of the
oracle subset $\mathcal I_{k,j}$. Therefore, estimating the sample selection subset has an asymptotically negligible effect on the effective source sample size.
Having controlled the discrepancy arising from the estimated sample selection subset, we now examine the additional error induced by density estimation. The following lemma establishes a uniform convergence rate for the B-spline projection estimator $\hat p_{\varepsilon^{(k)}}(\cdot)$ in \eqref{Bsplinehat-pk}.
\begin{assumption}
For each $j=1,2,3$, the restricted eigenvalue condition in Assumption \ref{Gaussian maximal inequalit} holds for the $j$-th fold, with the target design matrix and $n_0$ replaced by their counterparts, $\mathcal I_k$ replaced by $\mathcal I_{k,j}^{-}$ for all $k$, and $n_{\mathcal I}$ replaced by $\sum_{k=1}^{K}|\mathcal I_{k,j}^{-}|$.
\label{assumptionIjk-}
\end{assumption}

Assumption \ref{assumptionIjk-} ensures sufficient restricted curvature required to control the $\ell_2$ estimation error of the cross-fitted estimator.

\begin{lemma}
\label{lemma hat p_k}
Suppose that Assumptions \ref{pkbound}, \ref{subGaussian} and \ref{initial beta} are satisfied. For each $k=1,2,\ldots,K$, assume that the error density function
$p_{\varepsilon^{(k)}}(\cdot)$ of
$\varepsilon^{(k)}$ belongs to
$\mathcal H^t(\mathbb R,B)$ defined in \eqref{holder}, for some $0<t\leq r$ and $B>0$,
where $r$ is the order of the B-spline basis. Recall that $\Psi_n=\max_{1\le k\le K}\max_{i\in\mathcal D^{(k)}}\|\bm X_i^{(k)}\|_\infty$. Suppose that $r\ge 2$, $j_k2^{j_k}/n_k=o(1)$, and $2^{2j_k}\Psi_n\gamma_k=o(1)$. Then, the density estimator $\hat p_{\varepsilon^{(k)}}(\cdot)$ in \eqref{Bsplinehat-pk} holds that
\begin{equation*}
    \sup_{u\in\mathbb R}
|
\hat p_{\varepsilon^{(k)}}(u)
-
p_{\varepsilon^{(k)}}(u)
|
=
O_p(
2^{2j_k}\Psi_n\gamma_k
+
2^{-j_kt}
+
(j_k2^{j_k}/{n_k}
)^{1/2}
).
\end{equation*}
Moreover, choosing $2^{j_k}
\asymp
(
n_k/{\log n_k}
)^{1/(2t+1)}$, gives
\begin{equation*}
 \sup_{u\in\mathbb R}
|
\hat p_{\varepsilon^{(k)}}(u)
-
p_{\varepsilon^{(k)}}(u)
|
=
O_p(
2^{2j_k}\Psi_n\gamma_k
+
(
\log n_k/{n_k}
)^{t/(2t+1)}
)
=O_p(q_k)
=o_p(1),
\end{equation*}
where $q_k=u_k+v_k$ with $u_k=2^{2j_k}\Psi_n\gamma_k$ and $v_k=(
\log n_k/{n_k}
)^{t/(2t+1)}$.
\end{lemma}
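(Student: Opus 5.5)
We prove the bound on the density estimator by splitting the error into three pieces. Let $\check p_{\varepsilon^{(k)}}$ denote the B-spline projection estimator built from the \emph{true} errors $\{\varepsilon_i^{(k)}\}_{i\in\mathcal J_{k,2}}$. The estimator is $\hat p_{\varepsilon^{(k)}}(u)=\sum_{l}\hat\theta_{l}\,2^{j_k}N_{r}(2^{j_k}u-l)$, with coefficients equal to empirical averages of dual functionals applied to the residuals $\tilde\varepsilon_i^{(k)}$. We then write
\begin{equation*}
\hat p_{\varepsilon^{(k)}}-p_{\varepsilon^{(k)}}
=(\hat p_{\varepsilon^{(k)}}-\check p_{\varepsilon^{(k)}})
+(\check p_{\varepsilon^{(k)}}-\mathbb E\check p_{\varepsilon^{(k)}})
+(\mathbb E\check p_{\varepsilon^{(k)}}-p_{\varepsilon^{(k)}}),
\end{equation*}
and bound the three terms in sup-norm by $2^{2j_k}\Psi_n\gamma_k$, $(j_k2^{j_k}/n_k)^{1/2}$ and $2^{-j_kt}$, respectively.

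\emph{Bias term.} The projection onto the spline space $V_{j_k}$ of order $r\ge t$ reproduces polynomials of degree $<r$. Its kernel $K_{j}(u,v)=2^{j}K(2^ju,2^jv)$ is bounded, has bounded support in $|u-v|$, and has vanishing moments up to order $r-1$. A Taylor expansion to order $q$ with the Hölder remainder in \eqref{holder} gives $\|\mathbb E\check p-p\|_\infty\lesssim B2^{-j_kt}$, uniformly over $\mathcal H^t(\mathbb R,B)$. This is the standard argument of \citet{Gine2008AdaptiveEO}.

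\emph{Stochastic term.} Sample splitting makes $\mathcal J_{k,2}$ independent of $\tilde{\boldsymbol\beta}^{(k)}$, so the true errors there are i.i.d. The class $\{K_{j}(u,\cdot):u\in\mathbb R\}$ is VC-type with envelope $\lesssim 2^{j}$, and its variance is $\lesssim 2^{j}\|p\|_\infty\le 2^jB$. Talagrand's inequality, or the Giné--Guillou bound used in \citet{Gine2008AdaptiveEO}, then gives
\[
\mathbb E\|\check p-\mathbb E\check p\|_\infty\lesssim \sqrt{j2^j/n_k}+j2^j/n_k .
\]
Since $j_k2^{j_k}/n_k=o(1)$, the first term dominates.

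\emph{Residual perturbation term.} We expect this to be the main obstacle. Write $\tilde\varepsilon_i^{(k)}=\varepsilon_i^{(k)}-(\boldsymbol X_i^{(k)})^\top(\tilde{\boldsymbol\beta}^{(k)}-\boldsymbol\beta^{(k)})$. By Hölder's inequality and Assumption \ref{initial beta}, uniformly in $i$,
\[
|\tilde\varepsilon_i^{(k)}-\varepsilon_i^{(k)}|\le \Psi_n\|\tilde{\boldsymbol\beta}^{(k)}-\boldsymbol\beta^{(k)}\|_1=O_p(\Psi_n\gamma_k)=:\Delta .
\]
We need a kernel that is Lipschitz in its second argument with constant $\lesssim 2^{2j}$: the factor $2^j$ is the normalization, and a further $2^j$ comes from the inner scaling. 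For B-splines of order $r\ge2$ the functions $N_r$ are Lipschitz, which is exactly why $r\ge2$ is assumed. Hence, pointwise in $u$,
\[
|\hat p(u)-\check p(u)|\le n_{k,2}^{-1}\sum_{i}|K_j(u,\tilde\varepsilon_i)-K_j(u,\varepsilon_i)|\lesssim 2^{2j}\Delta .
\]
The delicate point is that the dual functionals defining $\hat\theta_l$ must themselves be Lipschitz and compactly supported with a uniform $2^{j}$ scaling. If the estimator uses the plain $L^2$ projection, its kernel has exponentially decaying rather than compact support. In that case we will invoke the exponential decay of the Gram inverse for B-splines to sum over $l$ with constants independent of $j$. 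The condition $2^{2j_k}\Psi_n\gamma_k=o(1)$ ensures this term is negligible, and the bound holds uniformly in $u$ because the Lipschitz estimate is deterministic given $\Delta$.

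Combining the three bounds gives the first display. For the second, substitute $2^{j_k}\asymp(n_k/\log n_k)^{1/(2t+1)}$, so that $j_k\asymp\log n_k$. Both $2^{-j_kt}$ and $(j_k2^{j_k}/n_k)^{1/2}$ then become $(\log n_k/n_k)^{t/(2t+1)}$. Together with the assumed $u_k=o(1)$, this gives $q_k=o(1)$. Assumptions \ref{pkbound} and \ref{subGaussian} enter only to guarantee boundedness of $p$ and the integrability used in the empirical-process step.
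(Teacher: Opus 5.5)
Your proposal is correct and follows essentially the same route as the paper. The paper also compares $\hat p_{\varepsilon^{(k)}}$ with the projection estimator built from the true errors. It bounds the plug-in term by $C2^{2j_k}\max_i|\tilde\varepsilon_i^{(k)}-\varepsilon_i^{(k)}|\le C2^{2j_k}\Psi_n\|\tilde{\boldsymbol\beta}^{(k)}-\boldsymbol\beta^{(k)}\|_1$, using Lipschitz continuity of $N_{0,r}$ for $r\ge2$, compact support of the B-splines and exponential decay of the inverse Gram entries $b_{\ell m}$, and then inserts the resolution choice exactly as you do.

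The only difference is that the paper cites \citet{Gine2008AdaptiveEO} directly for the combined bias-plus-stochastic bound $O_p(2^{-j_kt}+\sqrt{j_k2^{j_k}/n_k})$, whereas you unpack it into a polynomial-reproduction bias argument and a Talagrand/Gin\'e--Guillou bound. One small correction: for the $L^2$ projection used here, the kernel is exponentially localized rather than compactly supported in $|u-v|$, but your bias argument goes through unchanged with that localization.
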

The error bound has two components under the above choice of the resolution parameter. Recalling the definition of the estimator $\hat p_{\varepsilon^{(k)}}(\cdot)$ in \eqref{Bsplinehat-pk}, the term $u_k$ is the plug-in error arising from replacing the unobserved true errors $\varepsilon_i^{(k)}$ with their estimated residuals $\tilde\varepsilon_i^{(k)}$. This term depends on both the initial coefficient estimation error and the resolution parameter of the B-spline basis. The term $v_k$ is the intrinsic uniform error of the B-spline projection density estimator based on the true errors.

\begin{remark}
The uniform error in Lemma \ref{lemma hat p_k} contains both a
plug-in error and an intrinsic error. Let
$a_k=\log n_k/{n_k}.$
Under the choice
$2^{j_k}\asymp a_k^{-1/(2t+1)}$,
the B-spline projection estimator satisfies
$q_k
=
\Psi_n\gamma_k a_k^{-2/(2t+1)}
+
a_k^{t/(2t+1)}
.$
For comparison, the second-order kernel density estimator employed in \citet{zhao2026residual} has the uniform error bound $O_p(q_k^{\mathrm{K}})$, where 
$q_k^{\mathrm{K}}
=
\Psi_n\gamma_k/{w_k^2}
+
w_k^2
+
(\log n_k/{(n_kw_k)})^{1/2}
.$
Under the conventional bandwidth choice
$w_k\asymp a_k^{1/5},$
this becomes
$q_k^{\mathrm{K}}
=
\Psi_n\gamma_k a_k^{-2/5}
+
a_k^{2/5}
.$
For $t>2$, both the two components of $q_k$ are of strictly smaller order than their counterparts in $q_k^{\mathrm{K}}$.
Therefore, under these tuning choices, the B-spline projection estimator achieves a
faster convergence rate when $t>2$ and $r\ge t$, and has the same order when $t=2$.
\end{remark}

The density estimation error characterized above directly affects the accuracy of the estimated importance weights and the performance of the resulting transfer estimator. Building on the above results, we next establish the convergence rate of the proposed estimator.

\begin{theorem}
    \label{estimatebetabound}
    Suppose that Assumptions \ref{pkbound}\text{-}\ref{assumptionIjk-} are satisfied, the conditions of Lemmas \ref{lemma:selection-consistency}-\ref{lemma hat p_k} are satisfied, and $\{\boldsymbol{X}_i^{(k)}\}_{i=1}^{n_k}$ are fixed. If the regularization parameter satisfies $\lambda=2(\lambda_1+\lambda_2)$ with $\lambda_1\asymp\Psi_n\rho_{\mathcal{I}}\sqrt{\log p/(n_{0}+\mathbb{E}(n_{\mathcal{I}}))}$ and $\lambda_2\asymp\Psi_n\rho_{\mathcal{I}}\sum_{k=1}^{K}\pi_{M_k}[(\Psi_n\gamma_0)\wedge(M_k+\Psi_n(\gamma_k+h_k))+q_k]$.
    Provided that $\Psi_ns(\lambda_1+\lambda_2)/\rho_{\mathcal{I}}=o_p(1)$, 
    as $\min_{0\leq k\leq K}n_k\to\infty$, then the cross-fitted SL-TL estimator $\hat{\boldsymbol\beta}_{\mathrm{SL\text{-}TL}}^{(0)}$ in \eqref{betaSL-TLhat0} satisfies
\begin{align*}
    \|\hat{\boldsymbol\beta}_{\mathrm{SL\text{-}TL}}^{(0)}-\boldsymbol{\beta}^{(0)}\|^2_2=O_p\big\{&\frac{s\mathrm{log}p}{n_0+\mathbb{E}(n_\mathcal{I})}+s\Psi_n^2(\sum_{k=1}^{K}\pi_{M_k}[(\Psi_n\gamma_0)\wedge(M_k+\Psi_n(\gamma_k+h_k))+q_k])^2\big\}.
\end{align*}
Moreover, assume the following conditions: (i) For $k=1,2,\ldots,K$, $n_k\gg n_0$, such that $\gamma_k\ll\gamma_0$ and $q_k\ll\gamma_0$, (ii) $\Psi_n=O_p(1)$, and (iii) $M_k,k=1,2,\ldots,K$ are fixed. It has the following
\begin{equation*}
    \|\hat{\boldsymbol\beta}_{\mathrm{SL\text{-}TL}}^{(0)}-\boldsymbol{\beta}^{(0)}\|^2_2=O_p\big\{\frac{s\mathrm{log}p}{n_0+\mathbb{E}(n_\mathcal{I})}+s(\sum_{k=1}^{K}\pi_{M_k}[\gamma_0\wedge(M_k+h_k)])^2\big\}.
\end{equation*}
\end{theorem}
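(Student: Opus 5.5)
The plan is to analyse each fold estimator $\hat{\boldsymbol\beta}^{(0)}_j$, $j=1,2,3$, separately. Since $\|\cdot\|_2^2$ is convex, $\|\hat{\boldsymbol\beta}_{\mathrm{SL\text{-}TL}}^{(0)}-\boldsymbol\beta^{(0)}\|_2^2\le \frac13\sum_j\|\hat{\boldsymbol\beta}^{(0)}_j-\boldsymbol\beta^{(0)}\|_2^2$, so it suffices to bound a single fold, say $j=1$ (weighted fold $\mathcal J_{k,3}$). Conditionally on the data used for initial estimation and density estimation, $\tilde{\boldsymbol\beta}^{(k)}$ and $\hat p_{\varepsilon^{(k)}}$ are fixed, and the observations in $\mathcal J_{k,3}$ remain independent. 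This is the point of sample splitting.

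Write $\hat L(\boldsymbol\beta)$ for the weighted empirical loss of Step 5 and $\boldsymbol v=\boldsymbol\beta-\boldsymbol\beta^{(0)}$. The first step is the basic inequality from optimality:
\[
\hat L(\boldsymbol\beta^{(0)}+\hat{\boldsymbol v})-\hat L(\boldsymbol\beta^{(0)})\le\lambda(\|\boldsymbol\beta^{(0)}\|_1-\|\boldsymbol\beta^{(0)}+\hat{\boldsymbol v}\|_1).
\]
Next, decompose a subgradient $\hat{\boldsymbol S}$ of $\hat L$ at $\boldsymbol\beta^{(0)}$ as $\hat{\boldsymbol S}=\boldsymbol S_{\mathrm{ora}}+(\hat{\boldsymbol S}-\boldsymbol S_{\mathrm{ora}})$. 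Here $\boldsymbol S_{\mathrm{ora}}$ is the oracle score built with the true weights $\omega_i^{(k)}$ and the oracle subsets $\mathcal I_{k,3}$.

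\emph{Oracle part.} By Theorem \ref{prop:RIW-QR}, $\boldsymbol S_{\mathrm{ora}}$ has mean zero. By Theorem \ref{pkboundedwkbounded}, its summands are bounded by $C_\omega\Psi_n$. Hoeffding's inequality with a union bound over $p$ coordinates then gives $\|\boldsymbol S_{\mathrm{ora}}\|_\infty\lesssim\lambda_1$ with high probability, exactly as in the proof of Theorem \ref{oracletheorem}. The factor $\rho_{\mathcal I}$ appears because the normalisation is by $\sum_k n_{k,3}$ rather than by the effective size.

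\emph{Perturbation part.} This splits into two terms.
\begin{itemize}
\item \emph{Selection error.} On the event of Lemma \ref{lemma:selection-consistency}, $\hat{\mathcal I}_{k,3}\triangle\mathcal I_{k,3}\subseteq\mathcal I^+_{k,3}\setminus\mathcal I^-_{k,3}$, a boundary layer of width $a_n$. Its expected size is $O(a_n n_{k,3})$, which is of lower order and can be absorbed into $q_k$ and $\gamma$ terms.
\item \emph{Weight error on common indices.} Write
\[
|\hat\omega_i^{(k)}-\omega_i^{(k)}|\le \frac{|g_\tau(\tilde r_i)-g_\tau(r_i)|}{\hat p}+g_\tau(r_i)\frac{|\hat p(\tilde\varepsilon_i)-p(\varepsilon_i)|}{\hat p\,p}.
\]
The density term is bounded by $q_k+\|p'\|_\infty\Psi_n\gamma_k$, using Lemma \ref{lemma hat p_k}, Assumption \ref{pkbound}, and the fact that $\hat p$ is bounded below on the selected region with probability tending to one (by uniform consistency and Theorem \ref{pkboundedwkbounded}). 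The indicator term is nonzero only when $r_i$ lies within $|\tilde r_i-r_i|\le\Psi_n(\gamma_0+\gamma_k)$ of the edges $L_\tau,U_\tau$. I will bound the relevant residual shift by $(\Psi_n\gamma_0)\wedge(M_k+\Psi_n(\gamma_k+h_k))$: the first branch comes from $\tilde{\boldsymbol\beta}^{(0)}$ error, and the second from the trivial bound $|\tilde\eta_i|+|\eta_i|$ under the screening constraint together with $|\eta_i|\le\Psi_n h_k$.
\end{itemize}
Summing over $k$ with proportions $\pi_{M_k}$ and multiplying by $\Psi_n$ from $\boldsymbol X_i$ gives $\|\hat{\boldsymbol S}-\boldsymbol S_{\mathrm{ora}}\|_\infty\lesssim\lambda_2$. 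Hence $\lambda\ge2\|\hat{\boldsymbol S}\|_\infty$, and $\hat{\boldsymbol v}$ lies in the cone $\mathcal E(H_0,3)$.

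\emph{Curvature.} The restricted strong convexity step lower-bounds $\hat L(\boldsymbol\beta^{(0)}+\boldsymbol v)-\hat L(\boldsymbol\beta^{(0)})-\hat{\boldsymbol S}^\top\boldsymbol v$ over the cone. I would use Knight's identity and bound the weights below by $c_\omega/2$ on $\hat{\mathcal I}_{k,3}\supseteq\mathcal I^-_{k,3}$. Assumption \ref{kappaF} gives local quadratic growth $\kappa\min(\boldsymbol v^\top\Sigma\boldsymbol v,\xi\|\cdot\|)$. Assumption \ref{assumptionIjk-}, applied on $\mathcal I^-_{k,3}$, gives curvature $\gtrsim\rho_{\mathcal I}\phi^2\|\boldsymbol v\|_2^2$. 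The empirical-process deviation is handled by symmetrisation and contraction, using Lipschitz continuity of $\rho_\tau$ and bounded weights. The condition $\Psi_n s(\lambda_1+\lambda_2)/\rho_{\mathcal I}=o_p(1)$ keeps the solution within the quadratic region, via the standard convexity/scaling argument of \citet{Belloni2011}.

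\emph{Conclusion.} Combining these yields $\rho_{\mathcal I}\|\hat{\boldsymbol v}\|_2^2\lesssim\lambda\sqrt s\|\hat{\boldsymbol v}\|_2$, hence $\|\hat{\boldsymbol v}\|_2^2\lesssim s(\lambda_1+\lambda_2)^2/\rho_{\mathcal I}^2$. This is the first display. For the simplified display, condition (i) gives $\gamma_k,q_k\ll\gamma_0$, so $q_k$ is dominated. Condition (ii) gives $\Psi_n=O_p(1)$. Condition (iii) gives that $M_k+\gamma_k+h_k\asymp M_k+h_k$ up to the dominated $\gamma_k$.

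The main obstacle is the weight-perturbation step. The indicator $g_\tau$ is discontinuous, so $\hat\omega-\omega$ is not small pointwise near the boundary. The boundary contribution must be controlled in aggregate (through the density of $r_i$ near $L_\tau,U_\tau$ and conditional independence) to obtain the clean $\wedge$ form. I would first try to bound it crudely by the per-observation shift as described above, accepting the stated non-optimal $\lambda_2$.
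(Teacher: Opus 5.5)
Your overall architecture matches the paper: reduction to one fold, Knight-type basic inequality, cone condition, curvature restricted to $\mathcal I^-_{k,3}$ with weights bounded below, symmetrisation and localisation. The score step, however, has a genuine gap.

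You control the \emph{realized} difference $\hat{\boldsymbol S}-\boldsymbol S_{\mathrm{ora}}$. You then assert that the boundary layer $\mathcal I^+_{k,3}\setminus\mathcal I^-_{k,3}$, of expected size $O(a_n n_{k,3})$, is of lower order and can be absorbed into $q_k$ and the $\gamma$ terms. It cannot. Since $a_n=\sqrt{\rho_n}$ with $\rho_n=b_n\bar\gamma_n\gtrsim\Psi_n\bar\gamma_n$, we get $a_n\gg\Psi_n\gamma_0$. Under (i) with $\Psi_n\asymp1$, also $a_n\gtrsim\sqrt{\gamma_0}\gg\gamma_0\gg q_k$.

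Moreover, on $\hat{\mathcal I}_{k,3}\cap\mathcal I_{k,3}$ the difference $g_\tau(\tilde r_i)-g_\tau(r_i)$ vanishes identically. So the whole boundary effect sits in the symmetric difference, where the weight error is of order $1/A$, not small, and a per-observation shift bound cannot help. Summing $O(\Psi_n)$ contributions over the layer without cancellation adds a term of order $\Psi_n\rho_{\mathcal I}\sum_k\pi_{M_k}a_n$ to $\|\hat{\boldsymbol S}\|_\infty$. That term is not in $\lambda_2$, and it would inflate the simplified rate from $s\gamma_0^2$ to roughly $s\gamma_0$.

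For the edges $L_\tau,U_\tau$ you could rescue a counting argument at the finer scale $|d_i^{(0,k)}|$ (defined below), using $\sup_u p_{\varepsilon^{(k)}}(u)\le B$. That gives the same order as $\lambda_2$, not a lower one. For the $M_k$-edge no such rescue exists. The set $\{i:|\tilde\eta_i^{(k)}|\le M_k\}\triangle\{i:|\eta_i^{(k)}|\le M_k\}$ is deterministic given the first two folds and the fixed design. The boundary condition underlying Lemma \ref{lem:S2} controls its size only at scale $a_n$.

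The missing idea is that the score step should not compare with the oracle subsets at all. The paper conditions on $\mathcal G_1$, the folds used to build $\tilde{\boldsymbol\beta}^{(k)}$ and $\hat p_{\varepsilon^{(k)}}$. It writes $(I)=\Lambda_1+\Lambda_2$: a conditionally centred part, handled by Bernstein exactly as in the oracle proof to give $\lambda_1$, plus the conditional mean $\Lambda_2$. In $\Lambda_2$ it replaces $\hat\omega_i^{(k)}$ by $\omega_i^{\circ,k}=1/\{Ap_{\varepsilon^{(k)}}(\varepsilon_i^{(k)})\}$ at uniform cost $O_p(q_k)$. Let $d_i^{(0,k)}=(\boldsymbol X_i^{(k)})^\top(\tilde{\boldsymbol\beta}^{(0)}-\boldsymbol\beta^{(0)})$, so that $\tilde\varepsilon_i^{(k)}+\tilde\eta_i^{(k)}=r_i^{(k)}-d_i^{(0,k)}$. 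Then the density cancels and
\[
\mathbb E\big[\omega_i^{\circ,k}\mathds{1}\{i\in\hat{\mathcal I}_{k,3}\}\psi_\tau(r_i^{(k)})\mid\mathcal G_1\big]=\frac{\mathds{1}\{|\tilde\eta_i^{(k)}|\le M_k\}}{A}\int_{L_\tau+d_i^{(0,k)}}^{U_\tau+d_i^{(0,k)}}\psi_\tau(u)\,du .
\]
This is at most $|d_i^{(0,k)}|/A$ in absolute value, because $\int_{L_\tau}^{U_\tau}\psi_\tau(u)\,du=0$.

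Consequently:
\begin{itemize}
\item the $M_k$-screening error costs nothing in bias, since it only changes which terms are summed;
\item the window shift costs $|d_i^{(0,k)}|\le(\Psi_n\gamma_0)\wedge\{M_k+\Psi_n(\gamma_k+h_k)\}$, which is exactly the $\wedge$ term in $\lambda_2$;
\item no boundary counting is needed.
\end{itemize}
The sandwich $\mathcal I^-_{k,3}\subseteq\hat{\mathcal I}_{k,3}$ from Lemma \ref{lemma:selection-consistency} is used only in the curvature step, where your treatment agrees with the paper's.
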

Theorem \ref{estimatebetabound} decomposes the squared $\ell_2$-estimation error of the estimator $\hat{\boldsymbol\beta}_{\mathrm{SL\text{-}TL}}^{(0)}$ into an oracle term 
and an additional error due to estimating the quantities required for feasible transfer.
The oracle term $s\log p/(n_0+\mathbb{E}(n_{\mathcal{I}}))$ can be substantially smaller than the Target-only rate $s\log p/n_0$ when sufficiently many informative source observations are retained. The additional term depends on the source contribution $\pi_{M_k}$, the source-target discrepancy $h_k$, the initial estimation error $\gamma_k$ and the density estimation error $q_k$.
Under the additional conditions of Theorem \ref{estimatebetabound}, the effects of $\gamma_k$ and $q_k$ are asymptotically negligible, so the additional term reduces to $s(\sum_{k=1}^{K}\pi_{M_k}[\gamma_0\wedge(M_k+h_k)])^2$.

\section{Numerical Studies}
\label{sec: Numerical Studies}
\subsection{Simulations}



In this section, we conduct Monte Carlo simulations to evaluate the performance of SL-TL for high-dimensional quantile regression and compare it with Trans-SQR \citep{zhang2022transfer}, a source-level transfer learning method. For both approaches, we consider both an oracle setting and a feasible setting. In the oracle setting, the transferable source set is assumed known for Trans-SQR, whereas the selected source subsets and importance weights are known for SL-TL. In the feasible setting, these quantities are estimated from the data. We denote the oracle and feasible procedures by O-Trans-SQR, O-SL-TL, F-Trans-SQR, and F-SL-TL, respectively. The Target-only $\ell_1$-penalized estimator in \eqref{hat_beta_T} is included as a benchmark. The same data generating mechanism is used throughout the simulations.

We consider one target domain ($k=0$) and $K$ source domains ($k=1,\ldots,K$), where $K\in\{1,10,20\}$. The target and source sample sizes are set to $n_0=300$ and $n_k=450$, respectively. We set the covariate dimension to $p=500$ and the sparsity level to $s=16$. The results are based on 100 Monte Carlo replications for each combination of $K$ and $\tau$.

For the target domain and each source domain, the response is generated from the linear quantile regression model
\begin{equation*}
    Y_i^{(k)}=(\boldsymbol{X}_i^{(k)})^\top \boldsymbol{\beta}^{(k)}+\varepsilon_i^{(k)},
    \quad k=0,1,\ldots,K,
\end{equation*}
where $\mathbb{P}(\varepsilon_i^{(k)}\leq 0\mid \boldsymbol{X}_i^{(k)})=\tau$. The target covariates are generated as $\boldsymbol{X}_i^{(0)}\operatorname*{\sim}N_p(\boldsymbol{0}_p,\Sigma)$, where $\Sigma_{jl}=0.7^{|j-l|},~1\leq j,l\leq p$. For each source domain $k$, we generate $\mu_k\operatorname*{\sim}U(0.3,0.5)$ and set $\boldsymbol{X}_i^{(k)}\operatorname*{\sim}N_p(\boldsymbol{0}_p,\Sigma^{(k)})$, where $\Sigma^{(k)}=\Sigma+\boldsymbol v^{(k)}{(\boldsymbol v^{(k)})}^\top/p$ and $\boldsymbol v^{(k)}\sim N_p(\boldsymbol{0}_p,\mu_k^2\boldsymbol{I}_p)$.

The target coefficient vector is sparse and defined as $\boldsymbol{\beta}^{(0)}=(\boldsymbol{1}^\top_s, \boldsymbol{0}^\top_{p-s})^\top$. For each source domain $k$, recall that the coefficient discrepancy vector is defined as $\boldsymbol{\delta}^{(k)}=\boldsymbol{\beta}^{(k)}-\boldsymbol{\beta}^{(0)}$. We randomly select two index sets, $T_{k,1}\subseteq[s]$ and $T_{k,2}\subseteq\{s+1,\ldots,p\}$ with $|T_{k,1}|=|T_{k,2}|=6$ and let $T_k=T_{k,1}\cup T_{k,2}$. We further generate independent Rademacher random variables satisfying $P(\sigma_j^{(k)}=1)=P(\sigma_j^{(k)}=-1)=1/2$. The entries of the coefficient discrepancy vector are defined as
\begin{equation*}
    {\delta}_j^{(k)}=\Delta_k\sigma_j^{(k)}I\{j\in T_k\}/|T_k|,\quad j=1,2,\ldots,p,
\end{equation*}
 where $\Delta_k$ are independently drawn from $U(0.1,4)$ and the source coefficient vector is then given by $\boldsymbol{\beta}^{(k)}=\boldsymbol{\beta}^{(0)}+\boldsymbol{\delta}^{(k)}$. By construction, $\boldsymbol{\delta}^{(k)}$ has exactly 12 nonzero entries and satisfies $\|\boldsymbol{\delta}^{(k)}\|_1=\Delta_k$. To ensure that zero is the conditional $\tau$-th quantile of the error distribution, we generate $U_i^{(k)}~\overset{\mathrm{iid}}{\operatorname*{\operatorname*{\sim}}}~N(0,1)$ and set $\varepsilon_i^{(k)}=U_i^{(k)}-\Phi^{-1}(\tau),~k=0,1,\ldots,K$, where $\Phi(\cdot)$ denotes the standard normal distribution function. Since $U_i^{(k)}$ is independent of $\boldsymbol{X}_i^{(k)}$, then $P(\varepsilon_i^{(k)}\leq0\mid \boldsymbol{X}_i^{(k)})=P(U_i^{(k)}\leq\Phi^{-1}(\tau))=\tau$. Therefore, $Q_\tau(Y_i^{(k)}\mid \boldsymbol{X}_i^{(k)})=(\boldsymbol{X}_i^{(k)})^\top\boldsymbol{\beta}^{(k)}$.

For a given value of $K$, the domain specific quantities $\{\mu_k,\Delta_k\}_{k=1}^K$ are generated once and kept fixed throughout the 100 Monte Carlo replications, while all other random quantities are regenerated in each replication. 
All regularization parameters are selected by five-fold cross-validation. For estimators based on convolution-smoothed quantile loss, the smoothing bandwidth is set to $\max\{0.05,\sqrt{\tau(1-\tau)}\left(\log p/n\right)^{1/4}\}$, where $n$ denotes the actual sample size used in the corresponding fit. 
For both O-SL-TL and F-SL-TL, the screening parameters in \eqref{selected sample subset} and \eqref{hatIk3SL-TL} are selected by grid search at each quantile level $\tau$, with $A\in\{1.1,1.2,\ldots,3.4\}$ and $M_k=M\in\{0.4,0.5,\ldots,0.7\}$ for all $k$, subject to the condition $\min (\tau,1-\tau)A>M$ in Section \ref{sup:S2.2} of Supplementary Material. 
The admissible pair with the best average performance over the Monte Carlo replications is selected.
We next report the numerical results for the oracle and feasible settings separately.

\subsubsection{Oracle setting}

In the oracle setting, we compare Target-only, O-Trans-SQR, and O-SL-TL, where the corresponding transferable information is assumed to be known. The data generating mechanism and the screening parameters are specified as described above. We next specify the implementation details for the transfer learning methods.  

We evaluate the three estimators at the quantile levels $\tau\in\{0.2,0.5,0.7\}$ with the number of source domains set to $K\in\{1,10,20\}$.
When $K=1$, we investigate the impact of the discrepancy between the source and target domains by considering three different levels of the source-target coefficient discrepancy $\Delta_1=\|\boldsymbol \delta^{(1)}\|_1$ over $\{0.5,2.5,7.0\}$. These settings represent increasing degrees of heterogeneity between the source and target domains. When $K=10$ or $20$, we consider the transferability thresholds $h\in\{0.5,1.0,2.5\}$ in $\mathcal{A}_{h}=\{1\leq k\leq K:\|\boldsymbol{\delta}^{(k)}\|_1\leq h\}$ for O-Trans-SQR. Under these settings, O-Trans-SQR selects 1, 4, and 7 transferable source domains when $K=10$, and selects 3, 5, and 11 transferable source domains when $K=20$, respectively. Since $h$ is specific to the source-level screening rule of O-Trans-SQR, the results for Target-only and O-SL-TL do not vary with $h$. The numerical results are reported in Table \ref{tab:oracle_l2_error}, where each entry is the mean $\ell_2$ estimation error over 100 Monte Carlo replications, with the corresponding standard deviation given in parentheses. 
\begin{table}[!htbp]
\centering
\caption{Mean $\ell_2$ estimation errors (standard deviations) of different methods under the oracle setting with $n_0=300$, $n_k=450$, and $p=500$.}
\label{tab:oracle_l2_error}
\small
\setlength{\tabcolsep}{5pt}
\begin{tabular}{cccrrr}
\toprule
\multirow{2}{*}{$K$}
& \multirow{2}{*}{Setting}
& \multirow{1}{*}{Method}
& \multicolumn{1}{c}{$\tau=0.2$} & \multicolumn{1}{c}{$\tau=0.5$} & \multicolumn{1}{c}{$\tau=0.7$} \\
\cmidrule(lr){3-6}

& & Target-only
& 0.665 (0.112)
& 0.568 (0.104)
& 0.593 (0.113) \\

\midrule

\multirow{6}{*}{$1$}

& \multirow{2}{*}{$\Delta_1=0.5$}
& O-Trans-SQR
& 0.363 (0.074)
& 0.337 (0.055)
& 0.357 (0.064) \\

& & O-SL-TL
& 0.307 (0.057)
& 0.283 (0.034)
& 0.293 (0.046) \\

\cmidrule(lr){3-6}

& \multirow{2}{*}{$\Delta_1=2.5$}
& O-Trans-SQR
& 0.491 (0.081)
& 0.470 (0.470)
& 0.483 (0.080) \\

& & O-SL-TL
& 0.442 (0.068)
& 0.372 (0.051)
& 0.411 (0.063) \\

\cmidrule(lr){3-6}

& \multirow{2}{*}{$\Delta_1=7.0$}
& O-Trans-SQR
& 0.929 (0.135)
& 0.867 (0.123)
& 0.895 (0.127) \\

& & O-SL-TL
& 0.563 (0.085)
& 0.494 (0.078)
& 0.538 (0.084) \\

\midrule

\multirow{4}{*}{$10$}

& $h=0.5$ & \multirow{3}{*}{O-Trans-SQR}
& 0.373 (0.069)
& 0.339 (0.057)
& 0.349 (0.063) \\

& $h=1.0$ & 
& 0.240 (0.043)
& 0.203 (0.045)
& 0.222 (0.048) \\

& $h=2.5$ &
& 0.196 (0.035)
& 0.178 (0.036)
& 0.182 (0.043) \\

\cmidrule(lr){3-6}

& -- & O-SL-TL
& 0.179 (0.080)
& 0.125 (0.056)
& 0.131 (0.057) \\

\midrule

\multirow{4}{*}{$20$}

& $h=0.5$ & \multirow{3}{*}{O-Trans-SQR}
& 0.256 (0.043)
& 0.229 (0.049)
& 0.237 (0.047) \\

& $h=1.0$ &
& 0.211 (0.044)
& 0.194 (0.043)
& 0.205 (0.045) \\

& $h=2.5$ &
& 0.170 (0.040)
& 0.159 (0.044)
& 0.168 (0.043) \\

\cmidrule(lr){3-6}

& -- & O-SL-TL
& 0.154 (0.056)
& 0.105 (0.061)
& 0.130 (0.068) \\

\bottomrule
\end{tabular}
\end{table}

As shown in Table \ref{tab:oracle_l2_error}, O-SL-TL consistently achieves the smallest mean $\ell_2$ estimation error across all considered combinations of $\tau,K$ and parameter settings.
This demonstrates the advantage of sample-level transfer over source-level transfer, as O-SL-TL is able to retain informative observations from source domains while reducing the impact of heterogeneous information. In particular, when $K=1$ and $\Delta_1=7.0$, O-Trans-SQR performs even worse than Target-only, indicating the occurrence of negative transfer caused by directly incorporating a highly heterogeneous source domain. This result is consistent with the theoretical comparison in Table \ref{tab:rate-comparison-single-source}. 
In contrast, O-SL-TL avoids negative transfer in this challenging setting by selectively utilizing transferable source samples.

\subsubsection{Feasible setting}
In the feasible setting, we compare the performance of Target-only, F-Trans-SQR, and F-SL-TL, with the transferable information estimated from the data. Target-only and F-SL-TL are defined in \eqref{hat_beta_T} and \eqref{betaSL-TLhat0}, respectively, while F-Trans-SQR follows Algorithm 2 of \citet{zhang2022transfer}.
It splits the target sample into training and validation subsets, selects sources according to their reduction in the target validation quantile loss, and applies the procedure in Algorithm \ref{alg:Trans-SQR} of Supplementary Material to the selected sources.
For F-SL-TL, the implementation details of the B-spline projection density estimator are provided in Section \ref{supp:Details} of Supplementary Material.  

We evaluate the three methods at the quantile levels \(\tau\in\{0.2,0.5,0.7\}\), with the number of source domains set to \(K\in\{10,20\}\). 
The numerical results are summarized in Table \ref{tab:feasible_l2_error}, where each entry reports the mean $\ell_2$ estimation error over 100 Monte Carlo replications, with the corresponding standard deviation given in parentheses.
As shown in Table \ref{tab:feasible_l2_error}, F-SL-TL consistently achieves the smallest mean $\ell_2$ estimation error across all considered values of $K$ and $\tau$, while F-Trans-SQR also consistently outperforms Target-only.
As $K$ increases from $10$ to $20$, the mean estimation errors of both transfer learning methods generally decrease, indicating gains from additional source information. 
Nevertheless, F-SL-TL maintains the best overall performance, highlighting the advantage of identifying and exploiting transferable information at the sample level rather than treating each source domain as an indivisible unit.

\begin{table}[htbp]
\centering
\caption{Mean $\ell_2$ estimation errors (standard deviations) of different methods under the feasible setting with $n_0=300$, $n_k=450$, and $p=500$.}
\label{tab:feasible_l2_error}
\small
\setlength{\tabcolsep}{7pt}
\begin{tabular}{ccrrr}
\toprule
\multirow{1}{*}{$K$}
& \multirow{1}{*}{Method}
&  \multicolumn{1}{c}{$\tau=0.2$}
&  \multicolumn{1}{c}{$\tau=0.5$}
&  \multicolumn{1}{c}{$\tau=0.7$} \\
\midrule

-- & Target-only
& 0.665 (0.112)
& 0.568 (0.104)
& 0.593 (0.113) \\

\midrule

\multirow{2}{*}{$10$}
& F-Trans-SQR
& 0.419 (0.059)
& 0.425 (0.055)
& 0.433 (0.054) \\

& F-SL-TL
& 0.402 (0.071)
& 0.366 (0.121)
& 0.371 (0.074) \\

\midrule

\multirow{2}{*}{$20$}
& F-Trans-SQR
& 0.406 (0.064)
& 0.421 (0.055)
& 0.426 (0.054) \\

& F-SL-TL
& 0.389 (0.070)
& 0.345 (0.129)
& 0.370 (0.128) \\

\bottomrule
\end{tabular}
\end{table}

\subsection{Real Data Analysis}
To evaluate the practical performance of the proposed sample-level transfer learning method, we analyze GTEx v11 tissue-level gene expression data (\citealp{GTExConsortium2020}; \url{https://gtexportal.org/}), measured in TPM, treating each tissue as a domain. We compare F-SL-TL method with the Target-only estimator and the F-Trans-SQR method to examine whether sample-level transfer from heterogeneous source tissues improves out-of-sample quantile prediction. 
The response variable is the expression level of JAM2. JAM2 is involved in cell adhesion and lymphocyte transendothelial migration \citep{JOHNSONLEGER20022479}, and rare loss-of-function variants in JAM2 have been associated with primary familial brain calcification \citep{Cen2020,Schottlaender2020}, making it a biologically relevant target for cross-tissue transfer learning. Candidate predictor genes are obtained from MODULE\_137 and its additional enriched genes \citep{Segal2004}. After removing duplicate gene symbols and matching them to GTEx v11 annotations, we retain genes shared across the selected tissues and use their expression levels as predictors, excluding JAM2.


To construct tissue domains, we first retain tissues with more than 180 samples with available JAM2 expression. All brain tissues are included due to their biological relevance, together with the seven largest non-brain tissues, yielding 20 domains (13 brain and 7 non-brain). 
Among the retained brain tissues, nine with more than 250 JAM2 samples are considered in turn as target domains, with the remaining 19 tissues serving as source domains (8,068–8,114 samples per experiment). Target tissues and their sample sizes are reported in Table \ref{tab:target_brain_tissues} of Section \ref{supp:Tables and Figures} in Supplementary Material. 
Each experiment evaluates cross-domain quantile prediction using one target tissue and multiple heterogeneous source tissues with F-SL-TL.










Before model fitting, gene expression measurements are transformed using
$\log_2(\mathrm{TPM}+1)$. Predictors and JAM2 are standardized within each tissue domain, and predictors with zero variance in any selected domain are removed, leaving 1,270 common predictors. Prediction performance is evaluated using five-fold cross-validation, where four folds are used for model fitting and tuning and the remaining fold for out-of-sample prediction. The prediction performance is measured at $\tau\in\{0.2,0.5,0.7\}$.

For a method $\mathcal{M}$, the mean quantile prediction error is defined as $\operatorname{MQE}_{\mathcal{M}}(\tau)=\sum_{i=1}^{n_0}\rho_\tau(\\Y_i^{(0)}-\hat Q_{\mathcal{M},-v(i)}(X_i^{(0)}))/n_0$, where $v(i)$ denotes the validation fold containing observation $i$. The relative prediction error is $\operatorname{RPE}^{\operatorname{MQE}}_{\mathcal{M}}(\tau)=\operatorname{MQE}_{\mathcal{M}}(\tau)/{\operatorname{MQE}_{\text{Target-only}}(\tau)}$, where values below one indicate improvement over the Target-only estimator. We additionally report the mean squared prediction error 
and its relative version, $\mathrm{RPE}^{\mathrm{MSE}}_{\mathcal{M}}(\tau)
=\mathrm{MSE}_{\mathcal{M}}(\tau)/{\mathrm{MSE}_{\text{Target-only}}(\tau)}$, with the same interpretation.
\begin{figure}[!t]
    \centering
    \includegraphics[width=0.8\textwidth]{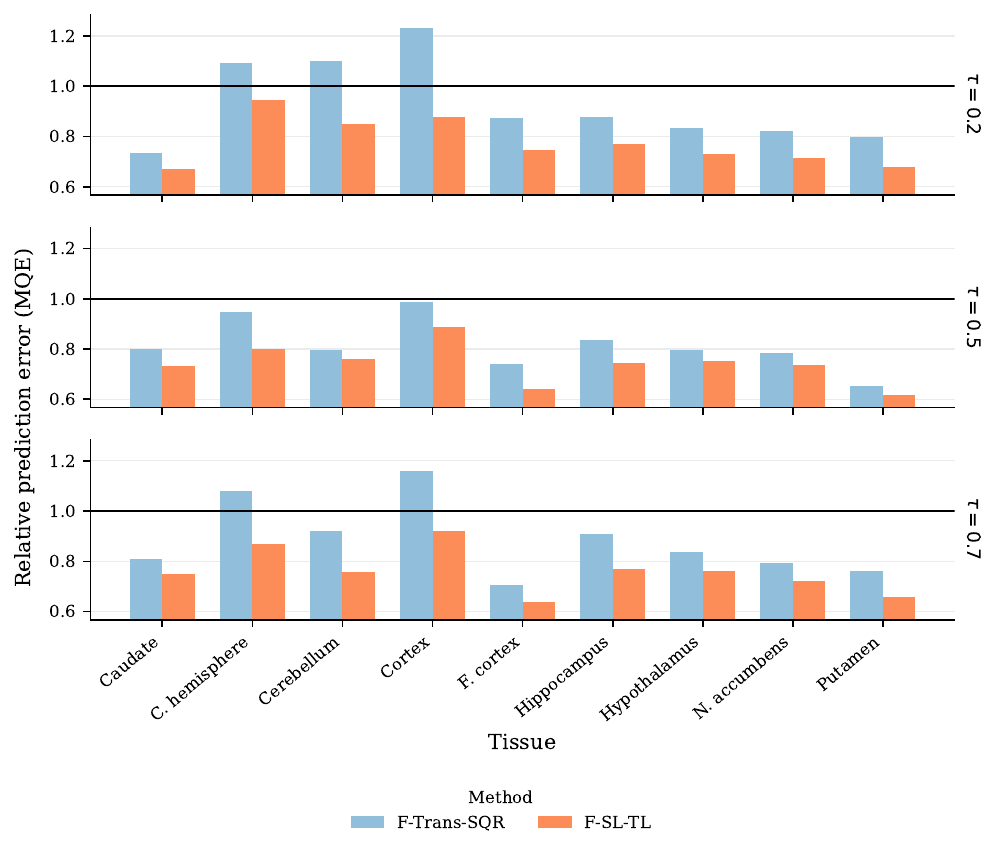}
    \caption{Relative quantile prediction errors of F-Trans-SQR and F-SL-TL relative to that of Target-only across 9 different target tissues at quantile levels 0.2, 0.5 and 0.7. The horizontal line represents an RPE of one.}
\label{fig:gtex_rpe}
\end{figure}

Figure \ref{fig:gtex_rpe} reports the relative quantile prediction errors across the nine target brain tissues. F-SL-TL consistently achieves relative errors below one and uniformly outperforms F-Trans-SQR across all target tissues and quantile levels, while F-Trans-SQR exceeds one in several tissues, indicating negative transfer. 
The corresponding MSE-based results in Figure \ref{fig:gtex_mse} of Section \ref{supp:Tables and Figures} in Supplementary Material exhibit a similar pattern, with F-SL-TL generally
yielding lower relative errors than F-Trans-SQR.

\section*{Supplementary Materials}
The Supplementary Material contains all technical proofs, additional results, and methodological details.
	\par
\section*{Acknowledgements}
The research was supported by the Public Computing Cloud, Renmin University of China.
\par

\bibliographystyle{apalike_copy}
\bibliography{./Beyond_Source_Level_Transfer__Sample_Level_Learning_for_High_Dimensional_Quantile_Regression}

\begin{thebibliography}{}

\bibitem[Belloni and Chernozhukov, 2011]{Belloni2011}
Belloni, A. and Chernozhukov, V. (2011).
\newblock $\ell_1$-penalized quantile regression in high-dimensional sparse models.
\newblock {\em The Annals of Statistics}, 39(1), 82--130.

\bibitem[Bickel et~al., 2009]{bickel2009}
Bickel, P.~J., Ritov, Y., and Tsybakov, A.~B. (2009).
\newblock Simultaneous analysis of lasso and {Dantzig} selector.
\newblock {\em The Annals of Statistics}, 37(4), 1705--1732.

\bibitem[Cand{\`e}s and Tao, 2007]{candes2007dantzig}
Cand{\`e}s, E. and Tao, T. (2007).
\newblock The {Dantzig} selector: Statistical estimation when $p$ is much larger than $n$.
\newblock {\em The Annals of Statistics}, 35(6), 2313--2351.

\bibitem[Cen et~al., 2020]{Cen2020}
Cen, Z., Chen, Y., Chen, S., et~al. (2020).
\newblock Biallelic loss-of-function mutations in {JAM2} cause primary familial brain calcification.
\newblock {\em Brain}, 143(2), 491--502.

\bibitem[Engle and Manganelli, 2004]{engle2004caviar}
Engle, R.~F. and Manganelli, S. (2004).
\newblock {CAViaR}: Conditional autoregressive value at risk by regression quantiles.
\newblock {\em Journal of Business \& Economic Statistics}, 22(4), 367--381.

\bibitem[Fan and Lv, 2008]{fan2008}
Fan, J. and Lv, J. (2008).
\newblock Sure independence screening for ultrahigh dimensional feature space.
\newblock {\em Journal of the Royal Statistical Society: Series B (Statistical Methodology)}, 70(5), 849--911.

\bibitem[Feng et~al., 2024]{FengJMLRQuantile}
Feng, X., He, X., Jiao, Y., Kang, L., and Wang, C. (2024).
\newblock Deep nonparametric quantile regression under covariate shift.
\newblock {\em Journal of Machine Learning Research}, 25(385), 1--50.

\bibitem[Gin{\'e} and Nickl, 2010]{Gine2008AdaptiveEO}
Gin{\'e}, E. and Nickl, R. (2010).
\newblock Adaptive estimation of a distribution function and its density in sup-norm loss by wavelet and spline projections.
\newblock {\em Bernoulli}, 16(4), 1137--1163.

\bibitem[{GTEx Consortium}, 2020]{GTExConsortium2020}
{GTEx Consortium} (2020).
\newblock The {GTEx Consortium} atlas of genetic regulatory effects across human tissues.
\newblock {\em Science}, 369(6509), 1318--1330.

\bibitem[Johnson-L{\'e}ger et~al., 2002]{JOHNSONLEGER20022479}
Johnson-L{\'e}ger, C.~A., Aurrand-Lions, M., Beltraminelli, N., Fasel, N., and Imhof, B.~A. (2002).
\newblock Junctional adhesion molecule-2 ({JAM-2}) promotes lymphocyte transendothelial migration.
\newblock {\em Blood}, 100(7), 2479--2486.

\bibitem[Knight, 1998]{knight1998limiting}
Knight, K. (1998).
\newblock Limiting distributions for {$L_1$} regression estimators under general conditions.
\newblock {\em The Annals of Statistics}, 26(2), 755--770.

\bibitem[Li et~al., 2022]{Lisai2022}
Li, S., Cai, T.~T., and Li, H. (2022).
\newblock Transfer learning for high-dimensional linear regression: Prediction, estimation and minimax optimality.
\newblock {\em Journal of the Royal Statistical Society: Series B (Statistical Methodology)}, 84(1), 149--173.

\bibitem[Meinshausen and Yu, 2009]{meinshausen2009}
Meinshausen, N. and Yu, B. (2009).
\newblock Lasso-type recovery of sparse representations for high-dimensional data.
\newblock {\em The Annals of Statistics}, 37(1), 246--270.

\bibitem[Novak et~al., 2017]{novak2017change}
Novak, N.~L., Geronimus, A.~T., and Martinez-Cardoso, A.~M. (2017).
\newblock Change in birth outcomes among infants born to latina mothers after a major immigration raid.
\newblock {\em International Journal of Epidemiology}, 46(3), 839--849.

\bibitem[Schottlaender et~al., 2020]{Schottlaender2020}
Schottlaender, L.~V., Abeti, R., Jaunmuktane, Z., et~al. (2020).
\newblock Bi-allelic {JAM2} variants lead to early-onset recessive primary familial brain calcification.
\newblock {\em The American Journal of Human Genetics}, 106(3), 412--421.

\bibitem[Segal et~al., 2004]{Segal2004}
Segal, E., Friedman, N., Koller, D., and Regev, A. (2004).
\newblock A module map showing conditional activity of expression modules in cancer.
\newblock {\em Nature Genetics}, 36(10), 1090--1098.

\bibitem[Tan et~al., 2022]{tan2022high}
Tan, K.~M., Wang, L., and Zhou, W.-X. (2022).
\newblock High-dimensional quantile regression: Convolution smoothing and concave regularization.
\newblock {\em Journal of the Royal Statistical Society: Series B (Statistical Methodology)}, 84(1), 205--233.

\bibitem[Tibshirani, 1996]{LASSO1996}
Tibshirani, R. (1996).
\newblock Regression shrinkage and selection via the lasso.
\newblock {\em Journal of the Royal Statistical Society: Series B (Methodological)}, 58(1), 267--288.

\bibitem[van~de Geer, 2008]{vande2008}
van~de Geer, S.~A. (2008).
\newblock High-dimensional generalized linear models and the lasso.
\newblock {\em The Annals of Statistics}, 36(2), 614--645.

\bibitem[Wainwright, 2019]{wainwright2019high}
Wainwright, M.~J. (2019).
\newblock {\em High-dimensional statistics: A non-asymptotic viewpoint}, Vol. 48, Cambridge university press.

\bibitem[Wang et~al., 2026]{wang2026deep}
Wang, Y., Xing, X., and Xu, W. (2026).
\newblock Deep regression for repeated measurements under covariate shift.
\newblock arXiv preprint arXiv:2605.24854.

\bibitem[Zhang et~al., 2023]{zhangnegative2026}
Zhang, W., Deng, L., Zhang, L., and Wu, D. (2023).
\newblock A survey on negative transfer.
\newblock {\em IEEE/CAA Journal of Automatica Sinica}, 10(2), 305--329.

\bibitem[Zhang and Zhu, 2025]{zhang2022transfer}
Zhang, Y. and Zhu, Z. (2025).
\newblock Transfer learning for high-dimensional quantile regression via convolution smoothing.
\newblock {\em Statistica Sinica}, 35(2), 939--958.

\bibitem[Zhao et~al., 2026]{zhao2026residual}
Zhao, J., Zheng, S., and Leng, C. (2026).
\newblock Residual importance weighted transfer learning for high-dimensional linear regression.
\newblock {\em Journal of the American Statistical Association}, 0, 1--27.

\end{thebibliography}

\newpage
\renewcommand{\baselinestretch}{2}


\makeatletter
\setcounter{authors}{0}
\setcounter{affil}{0}
\def\AB@authlist{}
\def\AB@affillist{}
\makeatother
\spacingset{1.9}
\title{\bfseries Supplementary Materials for ``Beyond Source-Level Transfer: Sample-Level Learning for High-Dimensional Quantile Regression"}
  \author[a]{Xiangyu Xing} 
  \author[a]{Yingxuan Wang}
  \author[a] {Wangli Xu$^*$}
  \affil[a]{Center for Applied Statistics and School of Statistics, Renmin University of China, Beijing, 100872, China}
\maketitle
\vspace{-0.8cm}
This supplementary material is organized as follows. Section \ref{supp:Appendix Theorem1} provides the proof of Theorem \ref{prop:RIW-QR}. Section \ref{supp:Appendix A} presents the proofs of Theorems \ref{pkboundedwkbounded}--\ref{oracletheorem} for the oracle setting. Section \ref{supp:Density estimation} introduces the B-spline projection density estimator, and Section \ref{supp:Appendix B} gives the proofs of Lemmas \ref{lemma:selection-consistency}--\ref{lemma hat p_k} and Theorem \ref{estimatebetabound} for the feasible setting, together with auxiliary results including Lemmas \ref{lem:S1}--\ref{lem:S2} and Proposition \ref{prop:gaussian-boundary}. Section \ref{supp:Tables and Figures} provides additional results from the real data analysis. Section \ref{supp:alg} presents the detailed oracle Trans-SQR algorithm. Finally, Section \ref{supp:Details} describes the simulation settings for the B-spline projection density estimator used in F-SL-TL.
\par

\setcounter{section}{0}
\setcounter{equation}{0}
\renewcommand{\thesection}{S\arabic{section}}
\renewcommand{\thesubsection}{S\arabic{section}.\arabic{subsection}}
\numberwithin{equation}{section}
\renewcommand{\theequation}{S\arabic{section}.\arabic{equation}}
\setcounter{table}{0}
\setcounter{figure}{0}
\setcounter{algorithm}{0}
\renewcommand{\thetable}{S\arabic{table}}
\renewcommand{\thefigure}{S\arabic{figure}}
\renewcommand{\thealgorithm}{S\arabic{algorithm}}


\section{Proof of Theorem \ref{prop:RIW-QR}} 
\label{supp:Appendix Theorem1}
In this section, we provide the proof of Theorem \ref{prop:RIW-QR}, which establishes the explicit form of the one-dimensional sample-level importance weights. 
\begin{proof}
For $k=1,\ldots,K$, recalling the definition of $\omega^{(k)}(\cdot)$, we have
\begin{align*}
&\arg\min_{\boldsymbol{\beta}\in\mathbb{R}^p}\mathbb{E}_{\boldsymbol{Z}^{(k)}\sim p_k}\left\{\omega^{(k)}(\boldsymbol{Z}^{(k)})
\rho_\tau(Y^{(k)}-(\boldsymbol{X}^{(k)})^\top\boldsymbol{\beta})\right\}\\
=&\arg\min_{\boldsymbol{\beta}\in\mathbb{R}^p}\mathbb{E}_{\boldsymbol{X}^{(k)}}\left[\mathbb{E}_{Y^{(k)}|\boldsymbol{X}^{(k)}}\left\{\omega^{(k)}(\boldsymbol{Z}^{(k)})\rho_\tau(Y^{(k)}-(\boldsymbol{X}^{(k)})^\top\boldsymbol{\beta})\right\}\right]\\
=&\arg\min_{\boldsymbol{\beta}\in\mathbb{R}^p}\mathbb{E}_{\boldsymbol{X}^{(k)}}\big\{\int\rho_\tau(Y^{(k)}-(\boldsymbol{X}^{(k)})^\top\boldsymbol{\beta})\frac{g_\tau(Y^{(k)}-(\boldsymbol{X}^{(k)})^\top\boldsymbol{\beta}^{(0)})}{p_{\varepsilon^{(k)}}(Y^{(k)}-(\boldsymbol{X}^{(k)})^\top\boldsymbol{\beta}^{(k)})}p_k(Y^{(k)}|\boldsymbol{X}^{(k)})dY^{(k)}\big\}\\
=&\arg\min_{\boldsymbol{\beta}\in\mathbb{R}^p}\mathbb{E}_{\boldsymbol{X}^{(k)}}\big\{\int\rho_\tau(Y^{(k)}-(\boldsymbol{X}^{(k)})^\top\boldsymbol{\beta})g_\tau(Y^{(k)}-(\boldsymbol{X}^{(k)})^\top\boldsymbol{\beta}^{(0)})dY^{(k)}\big\}\\
=&\arg\min_{\boldsymbol{\beta}\in\mathbb{R}^p}\mathbb{E}_{\boldsymbol{X}^{(k)}}\{\mathbb{E}_{U\sim g_{\tau}}[\rho_\tau(U-(\boldsymbol{X}^{(k)})^\top(\boldsymbol{\beta}-\boldsymbol{\beta}^{(0)}))]\},
\end{align*}
where $U=Y^{(k)}-(\boldsymbol{X}^{(k)})^\top\boldsymbol{\beta}^{(0)}$. For any fixed $\boldsymbol{X}^{(k)}$, let $d=(\boldsymbol{X}^{(k)})^\top(\boldsymbol{\beta}-\boldsymbol{\beta}^{(0)})$ and define $H(d)=\int_{\mathbb{R}}\rho_{\tau}(u-d)g_{\tau}(u)du$. Then $H^{\prime}(d)=\int_{-\infty}^dg_\tau(u)du-\tau$. By assumption $\int_{-\infty}^0g_\tau(u)du=\tau$, it therefore follows that $H^{\prime}(0)=0$. Moreover, since $H(d)$ is convex in $d$, the first-order optimality condition implies that $0\in\arg\min_{d\in\mathbb{R}}H(d)$. Consequently, for each fixed $\boldsymbol{X}^{(k)}$, the conditional risk is minimized whenever $(\boldsymbol{X}^{(k)})^\top(\boldsymbol{\beta}-\boldsymbol{\beta}^{(0)})=0$. In particular, $\boldsymbol{\beta}=\boldsymbol{\beta}^{(0)}$ satisfies this condition and therefore attains the minimum. This completes the proof.
\end{proof}

\section{Proofs of Theorems \ref{pkboundedwkbounded}--\ref{oracletheorem}}
\label{supp:Appendix A}
In this section, we provide the proofs of Theorems \ref{pkboundedwkbounded}--\ref{oracletheorem} for the oracle SL-TL estimator, where the transferable information is assumed to be known. Specifically, Theorem \ref{pkboundedwkbounded} establishes the boundedness of the oracle importance weights over the selected source observations. 
Theorem \ref{effectivesamplesize} characterizes the effective sample size contributed by each source domain through the discrepancy between the source and target domains and the sample-splitting parameter. 
Theorem \ref{oracletheorem} derives the $\ell_2$-error bound for the oracle SL-TL estimator.

\subsection{Proof of Theorem \ref{pkboundedwkbounded}}
\begin{proof}
For each $k=1,2,\dots,K$ and for all $i \in \mathcal{I}_{k}$, it is easy to show that
\begin{equation*}
|\varepsilon_i^{(k)}|\leq|\varepsilon_i^{(k)}+\eta_i^{(k)}|+|\eta_i^{(k)}|\leq A+M_k.
\end{equation*}
Under Assumption \ref{pkbound}, $p_{\varepsilon^{(k)}}(\cdot)$ is continuous and strictly positive on the compact interval $[-A-M_k,A+M_k]$. Hence, the extreme value theorem yields
\begin{equation*}
    0<\underline{p}_k:=\inf_{|u|\leq A+M_k}p_{\varepsilon^{(k)}}(u)\leq \overline{p}_k:=\sup_{|u|\leq A+M_k}p_{\varepsilon^{(k)}}(u)<\infty.
\end{equation*}
For every $i\in\mathcal{I}_k$, by the definition of $g_\tau(\cdot)$, we have $g_\tau(r_i^{(k)})=1/A$. Therefore,
\begin{equation*}
    1/(A\overline{p}_k)\leq\omega^{(k)}(\boldsymbol{Z}_i^{(k)})=1/(Ap_{\varepsilon^{(k)}}(\varepsilon_i^{(k)}))\leq1/(A\underline{p}_k).
\end{equation*}
Since $K$ is finite, define
\begin{equation*}
    c_{\omega}:=\min_{1\leq k\leq K}1/(A\overline{p}_{k})>0,\quad C_{\omega}:=\max_{1\leq k\leq K}1/(A\underline{p}_{k})<\infty.
\end{equation*}
Thus, $0<c_\omega\leq \omega^{(k)}(\boldsymbol{Z}_i^{(k)})\leq C_\omega<\infty$ uniformly over $k=1,2,\ldots,K$ and $i\in\mathcal{I}_k$.
\end{proof}

\subsection{Proof of Theorem \ref{effectivesamplesize}}
\label{sup:S2.2}
\begin{proof}
The proof follows an argument similar to that used in the proof of Proposition 3 in \citet{zhao2026residual}. By the definition of $\mathcal{I}_k$, we have
\begin{align*}
    \mathbb{E}(n_{\mathcal{I}_k})&=n_k\mathbb{E}[I\{i\in\mathcal{I}_k\}]=n_k\mathbb{P}(-\tau A\leq\varepsilon_i^{(k)}+\eta_i^{(k)}\leq(1-\tau)A,|\eta_i^{(k)}|\leq M_k)\\
    &=n_k\mathbb{P}(-\tau A\leq\varepsilon_i^{(k)}+\eta_i^{(k)}\leq(1-\tau)A\mid|\eta_i^{(k)}|\leq M_k)\mathbb{P}(|\eta_i^{(k)}|\leq M_k)\\
    &\geq n_k\mathbb{P}(-\tau A+M_k\leq\varepsilon_i^{(k)}\leq(1-\tau)A-M_k\mid|\eta_i^{(k)}|\leq M_k)\mathbb{P}(|\eta_i^{(k)}|\leq M_k)\\
    &=n_k\mathbb{P}(-\tau A+M_k\leq\varepsilon_i^{(k)}\leq(1-\tau)A-M_k)\mathbb{P}(|\eta_i^{(k)}|\leq M_k)\\
    &\geq n_k\mathbb{P}(|\varepsilon_i^{(k)}|\leq \min\{\tau,1-\tau\}A-M_k)\mathbb{P}(|\eta_i^{(k)}|\leq M_k).
\end{align*}
On the other hand, by the definition of $\mathcal{I}_k$
\begin{equation*}
    \mathbb{P}(-\tau A\leq\varepsilon_i^{(k)}+\eta_i^{(k)}
    \leq(1-\tau)A,|\eta_i^{(k)}|\leq M_k)\leq\mathbb{P}(|\eta_i^{(k)}|\leq M_k).
\end{equation*}
By Assumption \ref{pkbound} and the condition $\min\{\tau, 1-\tau\} A-M_k \geq c_0$ for some $c_0>0$, we have 
\begin{equation*}
\mathbb{P}(|\varepsilon_i^{(k)}|\leq\min\{\tau,1-\tau\}A-M_k)\geq \mathbb{P}(|\varepsilon_i^{(k)}|\leq c_0)\geq2c_0\inf_{|u|\leq c_0}p_{\varepsilon^{(k)}}(u)>0.
\end{equation*}
Therefore, for fixed $K$, $\mathbb{P}(|\varepsilon_i^{(k)}|\leq\min\{\tau,1-\tau\}A-M_k)\asymp1$ uniformly over $k=1,2,\ldots,K$. 

Next, we consider the probability $\mathbb{P}(|\eta_i^{(k)}|\leq M_k)$. Since $\boldsymbol{X}_i^{(k)}\sim N_p(\boldsymbol{0},\Sigma^{(k)})$, we have $\eta_i^{(k)}=(\boldsymbol{X}_i^{(k)})^\top\boldsymbol{\delta}^{(k)}\sim N(0,(\boldsymbol{\delta}^{(k)})^\top\Sigma^{(k)}\boldsymbol{\delta}^{(k)})$. We define $\zeta_k:=(\boldsymbol{\delta}^{(k)})^\top\Sigma^{(k)}\boldsymbol{\delta}^{(k)}/\|\boldsymbol{\delta}^{(k)}\|_2^2$ when $d_k=\|\boldsymbol{\delta}^{(k)}\|_2>0$. By the eigenvalue bounds on $\Sigma^{(k)}$, we have $\zeta_l\leq\zeta_k\leq\zeta_u$. Let $U$ and $V$ be two independent random variables following $N(0,\zeta_kd_k^2)$. Then we can obtain
\begin{align*}
    [\mathbb{P}(|\eta_i^{(k)}|\leq M_k)]^2&=\mathbb{P}(|U|\leq M_k,\mathrm{~}|V|\leq M_k)\\
    &=\int_{-M_k}^{M_k}\int_{-M_k}^{M_k}\exp\{-(u^2+v^2)/(2\zeta_k d_k^2)\}/(2\pi\zeta_k d_k^2)dudv.
\end{align*}
Since $\{u^2+v^2\leq M_k^2\}\subset[-M_k,M_k]^2\subset\{u^2+v^2\leq 2M_k^2\}$, it leads to
\begin{equation*}
    1-\exp[-M_k^2/(2\zeta_k d_k^2)]\leq[\mathbb{P}(|\eta_i^{(k)}|\leq M_k)]^2\leq1-\exp[-M_k^2/(\zeta_k d_k^2)].
\end{equation*}
Combining the results, we have
\begin{equation*}
    \mathbb{E}(n_{\mathcal{I}_k})\asymp n_k\mathbb{P}(|\eta_i^{(k)}|\leq M_k)\asymp n_k\sqrt{1-\exp(-M_k^2/d_k^2)}.
\end{equation*}
When $0<M_k/d_k<1$, we have $\sqrt{1-\exp(-M_k^2/d_k^2)}\asymp M_k/d_k$. When $M_k/d_k\geq1$, we have $\sqrt{1-\exp(-1)}\leq\sqrt{1-\exp(-M_k^2/d_k^2)}\leq 1$, so we can summarize that
\begin{equation*}
    \mathbb{E}(n_{\mathcal{I}_k})\asymp n_k\{M_{k}/d_k\wedge1\},~d_k>0.
\end{equation*}
If $d_k=\|\boldsymbol{\delta}^{(k)}\|_2=0$, then $\boldsymbol{\delta}^{(k)}=\boldsymbol{0}_p$ and hence $\eta_i^{(k)}=0$ almost surely. Since $M_k>0$, $\mathbb{P}(|\eta_i^{(k)}|\leq M_k)=1$. Therefore, $\mathbb{E}(n_{\mathcal{I}_k})\asymp n_k$ when $d_k=0$.
This completes the proof.
\end{proof}

\subsection{Proof of Theorem \ref{oracletheorem}}
\begin{proof}
Throughout the proof, all probabilities and expectations are understood conditionally on the fixed design $\{\boldsymbol{X}_i^{(k)}\}_{i=1}^{n_k}$. 
For notational simplicity, we present the proof for $K=1$. The argument for a fixed $K>1$ follows identically by summing the corresponding source terms over $k=1,2,\ldots,K$, with $n_{\mathcal{I}_1}$ replaced by $n_\mathcal{I}=\sum_{k=1}^Kn_{\mathcal{I}_k}$.
We denote
\begin{equation*}
    L(\boldsymbol{\beta}):=\frac{1}{n_{0}+n_{1}}[\sum_{i=1}^{n_{0}}\rho_{\tau}(Y_{i}^{(0)}-(\boldsymbol{X}_{i}^{(0)})^\top \boldsymbol{\beta}+\sum_{i=1}^{n_{1}}\omega_{i}^{(1)}\mathds{1}\{i\in \mathcal{I}_{1}\}\rho_{\tau}(Y_{i}^{(1)}-(\boldsymbol{X}_{i}^{(1)})^\top \boldsymbol{\beta})].
\end{equation*}
By the definition of $\hat{\boldsymbol{\beta}}_{\mathrm{ora}}^{(0)}$, we know that $L(\hat{\boldsymbol{\beta}}_{\mathrm{ora}}^{(0)})+\lambda\|\hat{\boldsymbol{\beta}}_{ora}^{(0)}\|_{1}\leq L(\boldsymbol{\beta}^{(0)})+\lambda\|\boldsymbol{\beta}^{(0)}\|_{1}$. 
Furthermore, applying Knight’s identity \citep{knight1998limiting} with $\psi_{\tau}(x)=\tau-\mathds{1}\{x\leq0\}$, $\Delta(x,z)=\mathds{1}\{x\leq z\}-\mathds{1}\{x\leq 0\}$ and $r_{i}^{(k)}=\varepsilon_{i}^{(k)}+\eta_{i}^{(k)}$, we obtain
\begin{align}
    &\frac{1}{n_{0}+n_{1}}\big[\sum_{i=1}^{n_{0}}\int_{0}^{(\boldsymbol{X}_{i}^{(0)})^{T}\hat{\boldsymbol{\delta}}^{(0)}}\Delta(\varepsilon_{i}^{(0)},z)dz+\sum_{i=1}^{n_{1}}\int_{0}^{(\boldsymbol{X}_{i}^{(1)})^{T}\hat{\boldsymbol{\delta}}^{(0)}}\omega_{i}^{(1)}\mathds{1}\{i\in \mathcal{I}_{1}\}\Delta(r_{i}^{(1)},z)dz\big]\nonumber\\
    \leq&\lambda\|\boldsymbol{\beta}^{(0)}\|_{1}-\lambda\|\hat{\boldsymbol{\beta}}_{\mathrm{ora}}^{(0)}\|_{1}
    +\frac{1}{n_{0}+n_{1}}\big[\sum_{i=1}^{n_{0}}\boldsymbol{X}_{i}^{(0)}\psi_{\tau}(\varepsilon_{i}^{(0)})+\sum_{i=1}^{n_{1}}\omega_{i}^{(1)}\mathds{1}\{i\in \mathcal{I}_{1}\}\boldsymbol{X}_{i}^{(1)}\psi_{\tau}(r_{i}^{(1)})\big]^{\top} \hat{\boldsymbol{\delta}}^{(0)},
    \label{I+II}
\end{align}
where $\hat{\boldsymbol{\delta}}^{(0)}=\hat{\boldsymbol{\beta}}_{\mathrm{ora}}^{(0)}-\boldsymbol{\beta}^{(0)}$. We define
\begin{align*}
&(I)=\frac{1}{n_{0}+n_{1}}\big[\sum_{i=1}^{n_{0}}\boldsymbol{X}_{i}^{(0)}\psi_{\tau}(\varepsilon_{i}^{(0)})+\sum_{i=1}^{n_{1}}\omega_{i}^{(1)}\mathds{1}\{i\in \mathcal{I}_{1}\}\boldsymbol{X}_{i}^{(1)}\psi_{\tau}(r_{i}^{(1)})\big],\\ 
&(II)=\frac{1}{n_{0}+n_{1}}\big[\sum_{i=1}^{n_{0}}\int_{0}^{(\boldsymbol{X}_{i}^{(0)})^{T}\hat{\delta}^{(0)}}\Delta(\varepsilon_{i}^{(0)},z)dz
+\sum_{i=1}^{n_{1}}\int_{0}^{(\boldsymbol{X}_{i}^{(1)})^{T}\hat{\delta}^{(0)}}\omega_{i}^{(1)}\mathds{1}\{i\in \mathcal{I}_{1}\}\Delta(r_{i}^{(1)},z)dz\big].
\end{align*}

\textbf{Step 1}. We first control $(I)$. We have $\mathbb{E}[X_{ij}^{(0)}\psi_{\tau}(\varepsilon_{i}^{(0)})\mid\boldsymbol{X}_{i}^{(0)}]=X_{ij}^{(0)}\mathbb{E}[\tau-\mathds{1}\{\varepsilon_{i}^{(0)}\leq0\}\mid \boldsymbol{X}_{i}^{(0)}]=0$. Using the definition of the weight $\omega_{i}^{(1)}$, we obtain
\begin{align*}
    &\mathbb{E}[\omega_{i}^{(1)}\mathds{1}\{i\in\mathcal{I}_{1}\}X_{ij}^{(1)}\psi_{\tau}(r_{i}^{(1)})\mid\boldsymbol{X}_{i}^{(1)}]  =\int_{i\in\mathcal{I}_{1}}X_{ij}^{(1)}\frac{g_\tau(r_{i}^{(1)})}{p_{\varepsilon^{(1)}}(\varepsilon_{i}^{(1)})}\psi_{\tau}(r_{i}^{(1)})p_{\varepsilon^{(1)}}(\varepsilon_{i}^{(1)})d\varepsilon_{i}^{(1)}\\
    =&X_{ij}^{(1)}\mathds{1}\{|\eta_i^{(1)}\leq M_1|\}\int_{-\tau A}^{(1-\tau)A}g_\tau(r_i^{(1)})\psi_\tau(r_i^{(1)})dr_i^{(1)}
    =0. 
\end{align*}
Let $U_{ij}^{(0)}=X_{ij}^{(0)}\psi_{\tau}(\varepsilon_{i}^{(0)})$, $U_{ij}^{(1)}=\omega_{i}^{(1)}X_{ij}^{(1)}I\{i\in\mathcal{I}_{1}\}\psi_{\tau}(r_{i}^{(1)})$. It follows that $\mathbb{E}(U_{ij}^{(0)}\mid \boldsymbol{X}_{i}^{(0)})=\mathbb{E}(U_{ij}^{(1)}\mid \boldsymbol{X}_{i}^{(1)})=0$, $|U_{ij}^{(0)}|\leq\Psi_{n}$ and $|U_{ij}^{(1)}|\leq C_{\omega}\Psi_{n}$. 
By a straightforward derivation, we obtain
\begin{align*}
    (n_0+n_1)^2V_j&:=\sum_{i=1}^{n_0}Var(U_{ij}^{(0)})+\sum_{i=1}^{n_1}Var(U_{ij}^{(1)})\\
    &=\sum_{i=1}^{n_0}\mathbb{E}[(U_{ij}^{(0)})^2]+\sum_{i=1}^{n_1}\mathbb{E}[(U_{ij}^{(1)})^2]\\
    &=\sum_{i=1}^{n_0}\mathbb{E}((X_{ij}^{(0)}\psi_{\tau}(\varepsilon_{i}^{(0)}))^2)+\sum_{i=1}^{n_1}\mathbb{E}((\omega_{i}^{(1)}X_{ij}^{(1)}\psi_{\tau}(r_{i}^{(1)}))^2I\{i\in\mathcal{I}_{1}\})\\
    &\leq n_0\Psi_{n}^2+\sum_{i=1}^{n_1}\Psi_{n}^2C_{\omega}^2P(i\in\mathcal{I}_{1})\\
    &\leq \Psi_{n}^2(1+C_{\omega}^2)(n_0+\mathbb{E}(n_{\mathcal{I}_1})).
\end{align*}
Using Bernstein's inequality \citep{wainwright2019high}, it follows that
\begin{align*}
    &\mathbb P(\frac{1}{n_0+n_1}|\sum_{i=1}^{n_0}U_{ij}^{(0)}+\sum_{i=1}^{n_1}U_{ij}^{(1)}|>t)
    \\\leq&2\exp\big\{-\frac{t^2}{2(V_j+(C_{\omega}+1)\Psi_{n}t/(n_0+n_1))}\big\}\\
    \leq&2\exp\big\{-\frac{(n_0+n_1)^2t^2}{2(\Psi_{n}^2(1+C_{\omega}^2)(n_0+\mathbb{E}(n_{\mathcal{I}_1}))+(C_{\omega}+1)(n_0+n_1)\Psi_{n}t)}\big\},
\end{align*}
 and let $t=C_t\sqrt{(n_{0}+E(n_{\mathcal{I}_{1}}))\log p}/(n_{0}+n_{1})$
 \begin{align*}
     &\mathbb P(\frac{1}{n_0+n_1}|\sum_{i=1}^{n_0}U_{ij}^{(0)}+\sum_{i=1}^{n_1}U_{ij}^{(1)}|>\frac{C_t\sqrt{(n_{0}+E(n_{\mathcal{I}_{1}}))\log p}}{n_{0}+n_{1}})\\
     \leq&2\exp\big\{-\frac{C_t^2\log p}{2[\Psi_{n}^2(1+C_{\omega}^2)+(1+C_{\omega})\Psi_{n}C_{t}\sqrt{\log p/(n_0+\mathbb{E}(n_{\mathcal{I}_1}))}]}\big\}:=2p^{-C_0},
 \end{align*}
 where $C_0:=C_t^2/2[\Psi_{n}^2(1+C_{\omega}^2)+(1+C_{\omega})\Psi_{n}C_{t}\sqrt{\log p/(n_0+\mathbb{E}(n_{\mathcal{I}_1}))}]$. Then
 \begin{align*}
     &\mathbb P\big(\|(I)\|_{\infty}>\frac{C_t\sqrt{(n_{0}+E(n_{\mathcal{I}_{1}}))\log p}}{n_{0}+n_{1}}\big)\\
     \leq&\sum_{j=1}^pP(\frac{1}{n_0+n_1}|\sum_{i=1}^{n_0}U_{ij}^{(0)}+\sum_{i=1}^{n_1}U_{ij}^{(1)}|>\frac{C_t\sqrt{(n_{0}+E(n_{\mathcal{I}_{1}}))\log p}}{n_{0}+n_{1}})\leq2p^{1-C_0}.
 \end{align*}
Therefore, by choosing the fixed constant $C_t$ sufficiently large, there exists a constant $c>0$ such that $C_0>1+c$. 
Thus, we obtain $\|(I)\|_{\infty}=O_p(\rho_{\mathcal{I}_1}\sqrt{\log p/(n_0+\mathbb{E}(n_{\mathcal{I}_1}))})$. Since $\lambda_{\mathrm{ora}}\asymp\rho_{\mathcal{I}}\sqrt{\log p/(n_0+\mathbb{E}(n_{\mathcal{I}}))}$, we have $\|(I)\|_{\infty}\lesssim\lambda_{\mathrm{ora}}/2$ with probability tending to one.

\textbf{Step 2}. Next, we control $(II)$. Decompose $(II)=\mathbb{E}[(II)\mid \boldsymbol{X}]+(II)-\mathbb{E}[(II)\mid \boldsymbol{X}]$. Let $H_0=\mathrm{supp}(\boldsymbol{\beta}^{(0)})$ and $|H_0|=s$. On the event $\|(I)\|_{\infty}\lesssim\lambda_{\mathrm{ora}}/2$ combining \eqref{I+II} with H\"{o}lder's inequality gives
\begin{align*}
    (II)&\leq\lambda_{\mathrm{ora}}\|\boldsymbol{\beta}^{(0)}\|_1-\lambda_{\mathrm{ora}}\|\hat{\boldsymbol{\beta}}_{\mathrm{ora}}^{(0)}\|_1+(I)^\top\hat{\boldsymbol{\delta}}^{(0)}\nonumber\\
    &\leq\lambda_{\mathrm{ora}}\|\boldsymbol{\beta}^{(0)}_{H_{0}}\|_1-\lambda_{\mathrm{ora}}\|\boldsymbol{\beta}^{(0)}_{H_{0}}+\hat{\boldsymbol{\delta}}_{H_{0}}^{(0)}\|_1-\lambda_{\mathrm{ora}}\|\hat{\boldsymbol{\delta}}_{H_{0}^{c}}^{(0)}\|_1+
    \frac{\lambda_{\mathrm{ora}}}{2}\|\hat{\boldsymbol{\delta}}^{(0)}\|_1\nonumber\\
    &\leq\lambda_{\mathrm{ora}}\|\boldsymbol{\beta}^{(0)}_{H_{0}}\|_1-\lambda_{\mathrm{ora}}(\|\boldsymbol{\beta}^{(0)}_{H_{0}}\|_1-\|\hat{\boldsymbol{\delta}}_{H_{0}}^{(0)}\|_1)-\lambda_{\mathrm{ora}}\|\hat{\boldsymbol{\delta}}_{H_{0}^{c}}^{(0)}\|_1+
    \frac{\lambda_{\mathrm{ora}}}{2}(\|\hat{\boldsymbol{\delta}}_{H_0}^{(0)}\|_1+\|\hat{\boldsymbol{\delta}}_{H_0^c}^{(0)}\|_1)\nonumber\\
    &=\frac{3\lambda_{\mathrm{ora}}}{2}\|\hat{\boldsymbol{\delta}}_{H_0}^{(0)}\|_1-\frac{\lambda_{\mathrm{ora}}}{2}\|\hat{\boldsymbol{\delta}}_{H_0^c}^{(0)}\|_1.
\end{align*}
Since the integral remainder term in Knight's identity is nonnegative, we have $(II)\geq0$. Therefore, the preceding inequality implies $\|\hat{\boldsymbol{\delta}}_{H_0^c}^{(0)}\|_1\leq3\|\hat{\boldsymbol{\delta}}_{H_0}^{(0)}\|_1$ and hence $\hat{\boldsymbol{\delta}}^{(0)}\in\mathcal{E}(H_0,3)$. Consequently, 
\begin{align}
    (II)&\leq\frac{3\lambda_{\mathrm{ora}}}{2}\|\hat{\boldsymbol{\delta}}_{H_0}^{(0)}\|_1-\frac{\lambda_{\mathrm{ora}}}{2}\|\hat{\boldsymbol{\delta}}_{H_0^c}^{(0)}\|_1\nonumber\\
    &\leq\frac{3\lambda_{\mathrm{ora}}}{2}\|\hat{\boldsymbol{\delta}}_{H_0}^{(0)}\|_1\nonumber\\
    &\leq\frac{3\lambda_{\mathrm{ora}}}{2}\sqrt{s}\|\hat{\boldsymbol{\delta}}^{(0)}\|_2.
    \label{IIupbound}
\end{align}
For the target unit, conditional on $\boldsymbol{X}_i^{(0)}$, we have
\begin{align}
    \mathbb{E}[\int_{0}^{(\boldsymbol{X}_{i}^{(0)})^\top\hat{\boldsymbol{\delta}}^{(0)}}\Delta(\varepsilon_{i}^{(0)},z)dz\mid\boldsymbol{X}_{i}^{(0)}]&=\int_{0}^{(\boldsymbol{X}_{i}^{(0)})^\top\hat{\boldsymbol{\delta}}^{(0)}}\mathbb{E}[\Delta(\varepsilon_{i}^{(0)},z)\mid\boldsymbol{X}_{i}^{(0)}]dz\nonumber\\
&=\int_{0}^{(\boldsymbol{X}_{i}^{(0)})^\top\hat{\boldsymbol{\delta}}^{(0)}}F_{\varepsilon^{(0)}\mid\boldsymbol{X}}(z)-F_{\varepsilon^{(0)}\mid\boldsymbol{X}}(0)dz.
\label{T4-II-target}
\end{align}
Under Assumption \ref{kappaF}, whenever $|(\boldsymbol{X}_{i}^{(0)})^\top\hat{\boldsymbol{\delta}}^{(0)}|\leq \xi$, \eqref{T4-II-target} yields
\begin{equation}
    \mathbb{E}[\int_{0}^{(\boldsymbol{X}_{i}^{(0)})^\top\hat{\boldsymbol{\delta}}^{(0)}}\Delta(\varepsilon_{i}^{(0)},z)dz\mid\boldsymbol{X}_{i}^{(0)}]\geq\kappa((\boldsymbol{X}_{i}^{(0)})^\top\hat{\boldsymbol{\delta}}^{(0)})^2/2.
    \label{T4-II-target-1}
\end{equation}
For the source unit, when $|(\boldsymbol{X}_i^{(1)})^\top\hat{\boldsymbol{\delta}}^{(0)}|\leq A\min\{\tau,1-\tau\}$, we similarly obtain that
\begin{align}
    &\mathbb{E}[\int_{0}^{(\boldsymbol{X}_{i}^{(1)})^\top\hat{\boldsymbol{\delta}}^{(0)}}\omega_i^{(1)}\mathds{1}\{i\in\mathcal{I}_1\}\Delta(r_{i}^{(1)},z)dz\mid\boldsymbol{X}_{i}^{(1)}]\nonumber\\
=&\int_0^{(\boldsymbol{X}_i^{(1)})^\top\hat{\boldsymbol{\delta}}^{(0)}}\mathbb{E}[\omega_i^{(1)}\mathds{1}\{i\in\mathcal{I}_1\}\Delta(r_i^{(1)},z)\mid\boldsymbol{X}_i^{(1)}]dz\nonumber\\
=&\mathds{1}\{|\eta_i^{(1)}|\leq M_1\}\int_0^{(\boldsymbol{X}_i^{(1)})^\top\hat{\boldsymbol{\delta}}^{(0)}}z/Adz\nonumber\\
=&\frac{\mathds{1}\{|\eta_{i}^{(1)}|\leq M_{1}\}}{2A}((\boldsymbol{X}_{i}^{(1)})^{\top}\hat{\boldsymbol{\delta}}^{(0)})^{2}.
\label{T4-II-source}
\end{align}
Recall that, throughout the proof, all probabilities and expectations are
understood conditionally on the fixed $\boldsymbol{X}$. Since $\mathbb{E}(n_{\mathcal{I}_1})=\sum_{i=1}^{n_1}\mathbb{P}(i\in\mathcal{I}_1)$
and $\mathds{1}\{i\in\mathcal{I}_1\},~i=1,2,\ldots,n_1$ are conditionally independent, we have
\begin{equation*}
    \operatorname{Var}(n_{\mathcal{I}_1})=\sum_{i=1}^{n_1}\mathbb{P}(i\in\mathcal{I}_1)(1-\mathbb{P}(i\in\mathcal{I}_1))\leq\sum_{i=1}^{n_1}\mathbb{P}(i\in\mathcal{I}_1)=\mathbb{E}(n_{\mathcal{I}_1}).
\end{equation*}
Since
\begin{align*}
    \mathbb{P}(|n_{\mathcal{I}_1}-\mathbb{E}(n_{\mathcal{I}_1})|>M\sqrt{\operatorname{Var}(n_{\mathcal{I}_1})})\leq\frac{\operatorname{Var}(n_{\mathcal{I}_1}-\mathbb{E}(n_{\mathcal{I}_1}))}{M^2\operatorname{Var}(n_{\mathcal{I}_1})}=1/M^2,~M>0,
\end{align*}
we have $n_{\mathcal{I}_1}-\mathbb{E}(n_{\mathcal{I}_1})=O_p(\sqrt{\operatorname{Var}(n_{\mathcal{I}_1})})$. 
\begin{align*}
    \frac{n_0+n_{\mathcal{I}_1}}{n_0+\mathbb{E}(n_{\mathcal{I}_1})}-1=O_p\big(\frac{\sqrt{\mathbb{E}(n_{\mathcal{I}_1})}}{n_0+\mathbb{E}(n_{\mathcal{I}_1})}\big)=o_p(1).
\end{align*}
Since $n_0+\mathbb{E}(n_{\mathcal{I}_1})\to\infty$, we obtain that
\begin{equation}
    \frac{n_0+n_{\mathcal{I}_1}}{n_0+\mathbb{E}(n_{\mathcal{I}_1})}\xrightarrow{P}1.
    \label{n1=En1}
\end{equation}

For any $\boldsymbol{\delta}\in\mathbb{R}^p$, we denote $(II)(\boldsymbol{\delta})=\frac{1}{n_0+n_1}\big\{\sum_{i=1}^{n_0}B_i^{(0)}(\boldsymbol{\delta})+\sum_{i=1}^{n_1}B_i^{(1)}(\boldsymbol{\delta})\big\}$ and $\mathcal{R}(\boldsymbol{\delta})=(II)(\boldsymbol{\delta})-\mathbb{E}[(II)(\boldsymbol{\delta})\mid\boldsymbol{X}]$, where $B_i^{(0)}(\boldsymbol{\delta})=\int_0^{(X_i^{(0)})^\top\boldsymbol{\delta}}\Delta(\varepsilon_i^{(0)},z)dz$ and $B_i^{(1)}(\boldsymbol{\delta})=\omega_i^{(1)}\mathds{1}\{i\in\mathcal{I}_1\}\int_0^{(\boldsymbol{X}_i^{(1)})^\top\boldsymbol{\delta}}\Delta(r_i^{(1)},z)dz$. 

We first establish the local curvature bound for a generic perturbation. For any  $\boldsymbol{\delta}\in\mathcal{E}(H_0,3)$ satisfying $\Psi_n\|\boldsymbol{\delta}\|_1\leq \min\{\xi,A\min(\tau,1-\tau)\}$, we have, for every target and source observation, 
\begin{center}
$    |(\boldsymbol{X}_i^{(k)})^\top\boldsymbol{\delta}|\leq
    \|\boldsymbol{X}_i^{(k)}\|_\infty\|\boldsymbol{\delta}\|_1\leq
    \Psi_n\|\boldsymbol{\delta}\|_1.$
\end{center}
Therefore, the local quadratic lower bounds in \eqref{T4-II-target-1} and \eqref{T4-II-source} are valid for $\boldsymbol{\delta}$. Combining \eqref{T4-II-target-1}, \eqref{T4-II-source}, \eqref{n1=En1}, and Assumption \ref{Gaussian maximal inequalit}, we obtain, with probability tending to one,
\begin{align}
    \mathbb{E}[(II)(\boldsymbol{\delta})\mid\boldsymbol{X}]&\geq\frac{1}{2(n_0+n_1)}[\kappa\sum_{i=1}^{n_0}((\boldsymbol{X}_{i}^{(0)})^\top\boldsymbol{\delta})^2+\sum_{i=1}^{n_1}\mathds{1}\{|\eta_{i}^{(1)}|\leq M_{1}\}((\boldsymbol{X}_{i}^{(1)})^\top{\boldsymbol{\delta}})^2/A]\nonumber\\
    &\geq\frac{\min(\kappa,A^{-1})}{2(n_0+n_1)}[\sum_{i=1}^{n_0}((\boldsymbol{X}_{i}^{(0)})^\top{\boldsymbol{\delta}})^2+\sum_{i\in\mathcal{I}_1}((\boldsymbol{X}_{i}^{(1)})^\top{\boldsymbol{\delta}})^2]\nonumber\\
    &\geq\frac{\min(\kappa,A^{-1})}{2(n_0+n_1)}{\boldsymbol{\delta}}^\top[(\boldsymbol{X}^{(0)})^\top\boldsymbol{X}^{(0)}+(\boldsymbol{X}^{(1)})^\top \mathbf{I}_{\mathcal{I}_{1}}\boldsymbol{X}^{(1)}]{\boldsymbol{\delta}}\nonumber\\
    &\geq\frac{\min(\kappa,A^{-1})}{2(n_0+n_1)}\|{\boldsymbol{\delta}}\|^2_2\phi^2(n_0+n_{\mathcal{I}_{1}})\nonumber\\
    &\geq\frac{1}{2}\rho_{\mathcal{I}_{1}}\phi^2_{min}\|{\boldsymbol{\delta}}\|^2_2,
    \label{IIdownbound}
\end{align}
where $\phi^2_{min}:=\min(\kappa,A^{-1})\phi^2$ and $\rho_{\mathcal{I}_{1}}:=(n_0+\mathbb{E}(n_{\mathcal{I}_{1}}))/((n_0+n_1)$. 
It remains to control $(II)-\mathbb{E}[(II)\mid \boldsymbol{X}]$. 
Let $\{\sigma_i^{(0)}\}_{i=1}^{n_0}$ and
$\{\sigma_i^{(1)}\}_{i=1}^{n_1}$ be two mutually independent
sequences of Rademacher random variables, independent of the data.
By Theorem \ref{pkboundedwkbounded}, $\omega_i^{(1)}\leq C_{\omega}$ on the selected observations. For any deterministic constant $R>0$, by the conditional symmetrization and the Rademacher contraction inequality yield
\begin{align*}
    &\mathbb{E}(\sup_{\|\boldsymbol{\delta}\|_1\le R}|\mathcal{R}(\boldsymbol{\delta})|\mid\boldsymbol{X})
    \leq\frac{2}{n_0+n_1}\mathbb{E}\mathbb{E}_{\sigma}\sup_{\|\boldsymbol{\delta}\|_1\le R}|\sum_{i=1}^{n_0}\sigma_{i}^{(0)}B_i^{(0)}(\boldsymbol{\delta})+\sum_{i=1}^{n_1}\sigma_{i}^{(1)}B_i^{(1)}(\boldsymbol{\delta})|\\
    \leq&\frac{2C}{n_0+n_1}\mathbb{E}\mathbb{E}_{\sigma}\sup_{\|\boldsymbol{\delta}\|_1\le R}|\sum_{i=1}^{n_0}\sigma_{i}^{(0)}(\boldsymbol{X}_i^{(0)})^\top\boldsymbol{\delta}+C_{\omega}\sum_{i=1}^{n_1}\sigma_{i}^{(1)}\mathds{1}\{i\in\mathcal{I}_1\}(\boldsymbol{X}_i^{(1)})^\top\boldsymbol{\delta}|\\
    \leq&\frac{2CR}{n_0+n_1}\mathbb{E}\mathbb{E}_{\sigma}\big\|\sum_{i=1}^{n_0}\sigma_i^{(0)}\boldsymbol{X}_i^{(0)}+C_\omega\sum_{i=1}^{n_1}\sigma_i^{(1)}\mathds{1}\{i\in\mathcal{I}_1\}\boldsymbol{X}_i^{(1)}\big\|_\infty.
\end{align*}
By the conditional Rademacher maximal inequality and the definition of $\Psi_n$,
\begin{align*}
    \mathbb{E}(\sup_{\|\boldsymbol{\delta}\|_1\le R}|\mathcal{R}(\boldsymbol{\delta})|\mid\boldsymbol{X})&\leq\frac{2C\Psi_nR\sqrt{\log(2p)}}{n_0+n_1}(\sqrt{n_0}+C_{\omega}\mathbb{E}[\sqrt{n_{\mathcal{I}_1}}|\boldsymbol{X}])\\
    &\leq\frac{2C\Psi_nR\sqrt{(n_0+\mathbb{E}(n_{\mathcal{I}_1}))\log(2p)}}{n_0+n_1}.
\end{align*}
Therefore, provided that $\Psi_n=O(1)$ and $\lambda_{\mathrm{ora}}\asymp\rho_{\mathcal{I}_1}\sqrt{\log p/(n_0+\mathbb{E}(n_{\mathcal{I}_1}))}$, by Markov's inequality, we can obtain
\begin{align}
    \sup_{\|\boldsymbol{\delta}\|_1\leq R}|\mathcal{R}(\boldsymbol{\delta})|=\sup_{\|\boldsymbol{\delta}\|_1\leq R}|(II)(\boldsymbol{\delta})-\mathbb{E}[(II)(\boldsymbol{\delta})\mid\boldsymbol{X}]|=O_p(\lambda_{\mathrm{ora}} R).
    \label{T4supRdelta}
\end{align}
We next apply this bound through a localization argument. 
We next show that $\hat{\boldsymbol{\delta}}^{(0)}$ lies in the local region required in \eqref{IIdownbound}. Define $c_n=s\lambda_{\mathrm{ora}}/\rho_{\mathcal{I}_1}$. We first prove $\|\hat{\boldsymbol{\delta}}^{(0)}\|_1=O_p(c_n)$.

To this end, fix an arbitrary $\epsilon>0$. By \eqref{T4supRdelta}, there exists a constant $C_{\epsilon}>0$ such that, for any deterministic radius $R>0$, $\mathbb{P}(\sup_{\|\boldsymbol{\delta}\|_1\leq R}|\mathcal{R}(\boldsymbol{\delta})|\leq C_\epsilon\lambda_\mathrm{ora}R)\geq1-\epsilon$ for all sufficiently large $n$. Let $R_n=D_{\epsilon}c_n=D_{\epsilon}s\lambda_{\mathrm{ora}}/\rho_{\mathcal{I}_1}$, where $D_{\epsilon}>0$ is a sufficiently large constant to be specified below. Since $\lambda_\mathrm{ora}/\rho_{\mathcal{I}_1}\asymp\sqrt{\log p/(n_0+\mathbb{E}(n_{\mathcal{I}_1})})$, the condition in Theorem \ref{oracletheorem} gives
\begin{equation*}
    \Psi_nR_n\asymp D_\epsilon\Psi_ns\sqrt{\frac{\log p}{n_0+\mathbb{E}(n_{\mathcal{I}_1})}}=o(1).
\end{equation*}
Therefore, for all sufficiently large $n$, 
\begin{equation}
    \Psi_nR_n\leq \min\{\xi,A\min(\tau,1-\tau)\}.
    \label{T4phinRnleq}
\end{equation}
Suppose, to the contrary, that $\|\hat{\boldsymbol{\delta}}^{(0)}\|_1>R_n$. Define $t:=R_n/\|\hat{\boldsymbol{\delta}}^{(0)}\|_1\in(0,1)$ and $\tilde{\boldsymbol{\delta}}:=t\hat{\boldsymbol{\delta}}^{(0)}$. Then, $\|\tilde{\boldsymbol{\delta}}\|_1=R_n$. Recall that, on the event $\|(I)\|_{\infty}\leq\lambda_{\mathrm{ora}}/2$, we have already shown that $\hat{\boldsymbol{\delta}}^{(0)}\in\mathcal{E}(H_0,3)$. Since the cone $\mathcal{E}(H_0,3)$ is homogeneous, $\tilde{\boldsymbol{\delta}}\in\mathcal{E}(H_0,3)$. Let $Q(\boldsymbol{\beta})=L(\boldsymbol{\beta})+\lambda_{\mathrm{ora}}\|\boldsymbol{\beta}\|_1$ denote the penalized objective function. By the definition of $\hat{\boldsymbol{\beta}}_{\mathrm{ora}}^{(0)}$, $Q(\hat{\boldsymbol{\beta}}_{\mathrm{ora}}^{(0)})\leq Q(\boldsymbol{\beta}^{(0)})$. Moreover, $\boldsymbol{\beta}^{(0)}+\tilde{\boldsymbol{\delta}}=(1-t)\boldsymbol{\beta}^{(0)}+t\hat{\boldsymbol{\beta}}_\mathrm{ora}^{(0)}$. By convexity of $Q(\cdot)$, we have
\begin{equation*}
    Q(\boldsymbol{\beta}^{(0)}+\tilde{\boldsymbol{\delta}})\leq(1-t)Q(\boldsymbol{\beta}^{(0)})+tQ(\hat{\boldsymbol{\beta}}_\mathrm{ora}^{(0)})\leq Q(\boldsymbol{\beta}^{(0)}).
\end{equation*}
Thus, the same basic-inequality argument used in deriving \eqref{IIupbound} gives
\begin{equation}
    (II)(\tilde{\boldsymbol{\delta}})\leq\frac{3\lambda_\mathrm{ora}}{2}\sqrt{s}\|\tilde{\boldsymbol{\delta}}\|_2.
    \label{T4IIdeltatildeupbound}
\end{equation}
On the other hand, by \eqref{T4phinRnleq}, $\Psi_n\|\tilde{\boldsymbol{\delta}}\|_1=\Psi_nR_n\leq \min\{\xi,A\min(\tau,1-\tau)\}$. Hence the local curvature bound \eqref{IIdownbound} is valid for $\tilde{\boldsymbol{\delta}}$. Applying \eqref{IIdownbound} and \eqref{T4supRdelta} with the deterministic radius $R_n$, we obtain, with probability at least $1-\epsilon-o(1)$,
\begin{equation}
    (II)(\tilde{\boldsymbol{\delta}})\geq\frac{1}{2}\rho_{\mathcal{I}_1}\phi_{\min}^2\|\tilde{\boldsymbol{\delta}}\|_2^2-2C_{\epsilon}\lambda_{\mathrm{ora}}\|\tilde{\boldsymbol{\delta}}\|_1.
    \label{T4IItildedeltabound}
\end{equation}
Furthermore, since $\tilde{\boldsymbol{\delta}}\in\mathcal{E}(H_0,3)$, 
\begin{align*}
\|\tilde{\boldsymbol{\delta}}\|_1 =\|\tilde{\boldsymbol{\delta}}_{H_0}\|_1+\|\tilde{\boldsymbol{\delta}}_{H_0^c}\|_1\leq4\|\tilde{\boldsymbol{\delta}}_{H_0}\|_1 \leq4\sqrt{s}\|\tilde{\boldsymbol{\delta}}\|_2.
\end{align*}
Since $\|\tilde{\boldsymbol{\delta}}\|_1=R_n$, it follows that 
\begin{equation}
    \|\tilde{\boldsymbol{\delta}}\|_2\geq\frac{R_n}{4\sqrt{s}}.
    \label{T4tildedeltalowbound}
\end{equation}
Combining \eqref{T4IIdeltatildeupbound}, \eqref{T4IItildedeltabound} and \eqref{T4tildedeltalowbound} gives
\begin{align}
    \frac{1}{2}\rho_{\mathcal{I}_1}\phi_{\min}^2\|\tilde{\boldsymbol{\delta}}\|_2^2&\leq\frac{3\lambda_{\mathrm{ora}}}{2}\sqrt{s}\|\tilde{\boldsymbol{\delta}}\|_2+2C_{\epsilon}\lambda_{\mathrm{ora}}\|\tilde{\boldsymbol{\delta}}\|_1\nonumber\nonumber\\
    &=\frac{3\lambda_{\mathrm{ora}}}{2}\sqrt{s}\|\tilde{\boldsymbol{\delta}}\|_2+2C_{\epsilon}\lambda_{\mathrm{ora}}R_n\nonumber\nonumber\\
    &\leq\frac{6s\lambda_{\mathrm{ora}}}{R_n}\|\tilde{\boldsymbol{\delta}}\|_2^2+\frac{32C_{\epsilon}s\lambda_{\mathrm{ora}}}{R_n}\|\tilde{\boldsymbol{\delta}}\|_2^2\nonumber\\
    &=\frac{6+32C_{\epsilon}}{D_{\epsilon}}\rho_{\mathcal{I}_1}\|\tilde{\boldsymbol{\delta}}\|_2^2.
    \label{T4-IIupperbound}
\end{align}
Choosing $D_{\epsilon}$ sufficiently large such that $(6+32C_{\epsilon})/D_{\epsilon}<\phi_{\mathrm{min}}^2/2$ leads to a contradiction. Hence,
\begin{equation*}
    \mathbb{P}(\|\hat{\boldsymbol{\delta}}^{(0)}\|_1>R_n)\leq\epsilon+o(1).
\end{equation*}
Since $\epsilon>0$ is arbitrary, we conclude that 
\begin{equation}
    \|\hat{\boldsymbol{\delta}}^{(0)}\|_1=O_p(s\lambda_\mathrm{ora}/\rho_{\mathcal{I}_1}).
    \label{hatdeltaOp}
\end{equation}
We now verify that $\hat{\boldsymbol{\delta}}^{(0)}$ lies in the local region required for \eqref{IIdownbound}. By \eqref{hatdeltaOp},
\begin{align}
\Psi_n\|\hat{\boldsymbol{\delta}}^{(0)}\|_1 & =O_p(\Psi_ns\lambda_\mathrm{ora}/\rho_{\mathcal{I}_1}) \nonumber\\
 & =O_p(\Psi_ns\sqrt{\log p/(n_0+\mathbb{E}(n_{\mathcal{I}_1})}))=o_p(1), 
 \label{T4psinhatdelta}
\end{align}
where the last equality follows from the condition in Theorem \ref{oracletheorem}. Since $\min\{\xi,A\min(\tau,1-\tau)\}>0$ is a fixed constant, \eqref{T4psinhatdelta} implies that 
\begin{equation*}
    \mathbb{P}(\Psi_n\|\hat{\boldsymbol{\delta}}^{(0)}\|_1\leq\min\{\xi,A\min(\tau,1-\tau)\})\to1.
\end{equation*}
Therefore, with probability tending to one, the local curvature bound in \eqref{IIdownbound} is valid for $\hat{\boldsymbol{\delta}}^{(0)}$. 
Next, we control
$\mathcal{R}(\hat{\boldsymbol{\delta}}^{(0)})$.
By \eqref{hatdeltaOp}, for any $\epsilon>0$, there exists a sufficiently large constant
$M_\epsilon>0$ such that
\begin{equation*}
    \mathbb{P}(\|\hat{\boldsymbol{\delta}}^{(0)}\|_1\leq M_\epsilon s\lambda_{\rm ora}/\rho_{\mathcal{I}_1})\geq 1-\epsilon
\end{equation*}
for all sufficiently large $n$. Applying \eqref{T4supRdelta} with the deterministic radius $R=M_\epsilon s\lambda_{\rm ora}/\rho_{\mathcal{I}_1}$ gives
\begin{equation}
|\mathcal{R}(\hat{\boldsymbol{\delta}}^{(0)})|=O_p(s\lambda_{\rm ora}^2/\rho_{\mathcal{I}_1}).
\label{T4RhatdeltaOp}
\end{equation}
Combining \eqref{IIdownbound} and \eqref{T4RhatdeltaOp}, we obtain
\begin{align}
(II)(\hat{\boldsymbol{\delta}}^{(0)})& =\mathbb{E}[
(II)(\hat{\boldsymbol{\delta}}^{(0)})|\boldsymbol{X}]+\mathcal{R}(\hat{\boldsymbol{\delta}}^{(0)}) \nonumber\\
&\geq \frac{1}{2}\rho_{\mathcal{I}_1}\phi_{\min}^2\|\hat{\boldsymbol{\delta}}^{(0)}\|_2^2
-O_p(s\lambda_{\rm ora}^2/\rho_{\mathcal{I}_1}).
\label{T4IIhatdeltabound1}
\end{align}
On the other hand, by \eqref{IIupbound},
\begin{equation*}
    (II)(\hat{\boldsymbol{\delta}}^{(0)})\leq\frac{3\lambda_{\rm ora}}{2}\sqrt{s}\,\|\hat{\boldsymbol{\delta}}^{(0)}\|_2.
\end{equation*}
Consequently,
\begin{equation}
\frac{1}{2}\rho_{\mathcal{I}_1}\phi_{\min}^2\|\hat{\boldsymbol{\delta}}^{(0)}\|_2^2
\leq\frac{3\lambda_{\rm ora}}{2}\sqrt{s}\,\|\hat{\boldsymbol{\delta}}^{(0)}\|_2+O_p(
s\lambda_{\rm ora}^2/\rho_{\mathcal{I}_1}).
\label{T4phideltabound2}
\end{equation}
By Young's inequality, for some constant $C>0$,
\begin{equation*}
    \frac{3\lambda_{\rm ora}}{2}\sqrt{s}\,\|\hat{\boldsymbol{\delta}}^{(0)}\|_2\leq\frac{1}{4}\rho_{\mathcal{I}_1}\phi_{\min}^2\|\hat{\boldsymbol{\delta}}^{(0)}\|_2^2+C\frac{s\lambda_{\rm ora}^2}{\rho_{\mathcal{I}_1}}.
\end{equation*}
Substituting this bound into \eqref{T4phideltabound2} and absorbing the quadratic term
on the right-hand side into the left-hand side yield
\begin{equation*}
    \|\hat{\boldsymbol{\delta}}^{(0)}\|_2^2=O_p(s\lambda_{\rm ora}^2/\rho_{\mathcal{I}_1}^2).
\end{equation*}
Finally, since $\lambda_{\rm ora}\asymp\rho_{\mathcal{I}_1}\sqrt{\log p/(n_0+\mathbb{E}(n_{\mathcal{I}_1})})$, we obtain
\begin{equation*}
    \|\hat{\boldsymbol{\beta}}_{\rm ora}^{(0)}-\boldsymbol{\beta}^{(0)}\|_2^2=O_p(\frac{s\log p}{n_0+\mathbb{E}(n_{\mathcal{I}_1})}).
\end{equation*}
This completes the proof.
\end{proof}

\section{Density Estimation}
\label{supp:Density estimation}
The error densities of each source domain are estimated using the B-spline projection approach of \citet{Gine2008AdaptiveEO}.
Following the approach, we estimate the density function with estimated residual $\tilde\varepsilon_i^{(k)}$ in the $k$-th source domain by the B-spline projection estimator defined as 
\begin{align}
\hat p_{\varepsilon^{(k)}}(u)
=
2^{j_k}
\sum_{i\in\mathcal J_{k}}
\sum_{\ell\in\mathbb Z}
\sum_{m\in\mathbb Z}
b_{\ell m}
N_{j_k,\ell,r}
(
\tilde\varepsilon_i^{(k)}
)
N_{j_k,m,r}(u)/{n_k},~ k=1,\ldots,K,
\label{Bsplinehat-pk}
\end{align}
where \(j_k\in\mathbb{N}_0\) is a resolution parameter which controls the complexity of the density estimator in each source domain, and $r\geq2$ is the order of the B-spline basis.
The order-$r$ cardinal B-spline is defined by $r\text{-}\mathrm{fold}$ convolution function 
$N_{0,r}(u)
=
\underbrace{
\mathds{1}_{[0,1)}
*\cdots*
 \mathds{1}_{[0,1)}
}_{r\text{ times}}(u)$  with $\mathds{1}_{[0,1)}(u)=\mathds{1}_{0 \le u < 1}$, which admits the closed form expression $\sum_{i=0}^{r}(-1)^{i}\binom{r}{i}(u-i)_{+}^{r-1}/(r-1)!$  and 
 can be appropriately transformed as $N_{j_k,\ell,r}(u)
:=N_{\ell,r}(2^{j_k}u)
=
N_{0,r}(2^{j_k}u-\ell),
\ell\in\mathbb Z.$
The coefficients $b_{\ell m}$ are the entries of the inverse of B-spline Gram matrix, ensuring that $K_{j_k}(u,v):=
2^{j_k}
\sum_{\ell\in\mathbb Z}
\sum_{m\in\mathbb Z}
b_{\ell m}
N_{j_k,\ell,r}(v)
N_{j_k,m,r}(u)
$ is the kernel of the orthogonal projection onto the
spline space generated by
$\{N_{j_k,\ell,r}(\cdot):\ell\in\mathbb Z\}$.

\section{Proofs of Lemmas 1--2 and Theorem \ref{estimatebetabound}} 
\label{supp:Appendix B}
In this section, we provide the proof of Theorem \ref{estimatebetabound} for the feasible SL-TL estimator. 
We adopt the fixed-design framework and treat
$\mathcal X=
\{\bm X_i^{(k)}: i\in\mathcal J_k,\ k=0,\ldots,K\}$
as deterministic, except for Lemma \ref{lemma hat p_k}. 
Unless otherwise stated, all probabilities, expectations, and stochastic-order statements are taken with respect to the errors and the random sample splitting.
For the cross-fitted procedure, we present the proof for a generic rotation and suppress the rotation index whenever no confusion arises. 
The results for other rotations follow by cyclic relabeling of the folds.

The proof relies on several auxiliary results. 
Lemma~\ref{lem:S1} controls the perturbation induced by the initial coefficient estimators, while Lemma~\ref{lem:S2} bounds the probability mass near the screening boundaries. 
Proposition~\ref{prop:gaussian-boundary} verifies the required boundary condition under Gaussian random designs. 
Lemma \ref{lemma:selection-consistency} establishes the consistency of the estimated selected subsets, and Lemma \ref{lemma hat p_k} provides the uniform convergence rate of the B-spline estimators for the source error densities. 
Combining these results with restricted curvature and empirical process arguments yields the $\ell_2$-error bound in Theorem \ref{estimatebetabound}.

\subsection{Proof of Lemma \ref{lem:S1}}
\label{supp:lemma S1}
\setcounter{lemma}{0}
\renewcommand{\thelemma}{S\arabic{lemma}}
\renewcommand{\theHlemma}{S.\arabic{lemma}}

\begin{lemma}
\label{lem:S1}
Suppose that Assumption~5 holds. Let
$\bar\gamma_n
=
\max_{0\le k\le K}\gamma_k.$
Suppose that 
$\Psi_n
:=
\max_{0\le k\le K}
\max_{i\in\mathcal J_k}
\|\bm X_i^{(k)}\|_\infty
\le C_X b_n$ for some deterministic sequence $b_n>0$ and some constant $C_X>0$, and that
$\rho_n
:=
b_n\bar\gamma_n
\to0.$
Define
$A_{n}
=
2\Psi_{n}
\max_{0\le k\le K}
\|
\tilde{\boldsymbol\beta}^{(k)}
-
\boldsymbol\beta^{(k)}
\|_1$, $a_n:=\sqrt{\rho_n}$
and $\mathcal E_n=\{A_n\le a_n\}.$
Then we have
\begin{center}
$a_n\to0
\quad\text{and}\quad
\mathbb P(\mathcal E_n)\to1.$
\end{center}
\end{lemma}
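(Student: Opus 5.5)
The claim $a_n\to0$ is immediate: $a_n=\sqrt{\rho_n}$ and $\rho_n=b_n\bar\gamma_n\to0$ by hypothesis, so $a_n\to 0$ as well. The substantive part is $\mathbb P(\mathcal E_n)\to1$. The idea is that $A_n$ is of order $b_n\bar\gamma_n=\rho_n$ in probability, while the threshold is $a_n=\sqrt{\rho_n}$. Since $\rho_n\to0$, the threshold is much larger than $\rho_n$: the ratio $a_n/\rho_n=\rho_n^{-1/2}\to\infty$. So the event $\{A_n\le a_n\}$ should hold with probability tending to one.

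To make this precise, I first bound $A_n$ deterministically. By the hypothesis $\Psi_n\le C_Xb_n$,
\begin{equation*}
A_n\le 2C_Xb_n\max_{0\le k\le K}\|\tilde{\boldsymbol\beta}^{(k)}-\boldsymbol\beta^{(k)}\|_1 .
\end{equation*}
Next I use Assumption~\ref{initial beta}. For each $k$ we have $\|\tilde{\boldsymbol\beta}^{(k)}-\boldsymbol\beta^{(k)}\|_1=O_p(\gamma_k)$, and $\gamma_k\le\bar\gamma_n$. Because $K$ is fixed, a finite maximum of $O_p$ terms is again $O_p$, so $\max_k\|\tilde{\boldsymbol\beta}^{(k)}-\boldsymbol\beta^{(k)}\|_1=O_p(\bar\gamma_n)$. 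For the case $\bar\gamma_n=0$, or whenever $\gamma_k=0$, I would read $O_p(0)$ as the error being zero with probability tending to one. In the generic case I normalize by $\bar\gamma_n>0$.

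The key step is a direct probability bound. Fix $\epsilon>0$. Choose $C_\epsilon$ so that, for all large $n$,
\begin{equation*}
\mathbb P\bigl(\max_k\|\tilde{\boldsymbol\beta}^{(k)}-\boldsymbol\beta^{(k)}\|_1>C_\epsilon\bar\gamma_n\bigr)\le\epsilon .
\end{equation*}
On the complement of this event, $A_n\le 2C_XC_\epsilon b_n\bar\gamma_n=2C_XC_\epsilon\rho_n$. This is at most $\sqrt{\rho_n}=a_n$ as soon as $\sqrt{\rho_n}\le 1/(2C_XC_\epsilon)$, which holds for all large $n$ since $\rho_n\to0$. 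Hence $\mathbb P(\mathcal E_n^c)\le\epsilon$ for all large $n$. Since $\epsilon$ is arbitrary, $\mathbb P(\mathcal E_n)\to1$.

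I do not expect a real obstacle. The only points needing care are two. First, the uniformity of the $O_p$ statements over the finitely many $k$, which is handled by the union bound over $K+1$ fixed indices. Second, the degenerate case $\rho_n=0$. There I would read $A_n\le a_n=0$ as following from the initial estimators being exact with probability tending to one, or else replace $a_n$ by $\max(\sqrt{\rho_n},n^{-1})$, which changes nothing downstream.
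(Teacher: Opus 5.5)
Your proposal is correct and follows the paper's proof essentially step for step: both bound $A_n\le 2C_Xb_n\max_k\|\tilde{\boldsymbol\beta}^{(k)}-\boldsymbol\beta^{(k)}\|_1=O_p(\rho_n)$ using that $K$ is fixed, and then use $a_n/\rho_n=\rho_n^{-1/2}\to\infty$. Your explicit $\epsilon$--$C_\epsilon$ argument simply unpacks the paper's step $\mathbb P(A_n/\rho_n>1/\sqrt{\rho_n})\to0$, and your remark on the degenerate case $\rho_n=0$ handles a division by $\rho_n$ that the paper makes without comment.
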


\begin{proof}
Since $K$ is fixed and Assumption~5 gives $\|\tilde{\boldsymbol\beta}^{(k)}-\boldsymbol\beta^{(k)}\|_1
=
O_p(\gamma_k),$ for $k=0,\ldots,K$, we have $\max_{0\le k\le K}
\|\tilde{\boldsymbol\beta}^{(k)}-\boldsymbol\beta^{(k)}\|_1
=
O_p(\bar\gamma_n).$
To see this, note that
\begin{center}
    $\max_{0\le k\le K}
\|\tilde{\boldsymbol{\beta}}^{(k)}-\boldsymbol{\beta}^{(k)}\|_1
/{
\overline\gamma_n
}
\le
\max_{0\le k\le K}
\|\tilde{\boldsymbol{\beta}}^{(k)}-\boldsymbol{\beta}^{(k)}\|_1
/{
\gamma_k
}$,
\end{center}
and the maximum of finitely many $O_p(1)$ random variables remains
$O_p(1)$.

Because the design is fixed and
$\Psi_n\le C_Xb_n$, it follows that
\[
A_n
=
2\Psi_n
\max_{0\le k\le K}
\|\tilde{\boldsymbol\beta}^{(k)}-\boldsymbol\beta^{(k)}\|_1
=
O_p(b_n\bar\gamma_n)
=
O_p(\rho_n).
\]
Therefore, we have $A_n/\rho_n=
O_p(1).$
Since $\rho_n\to0$, it holds that $a_n=\sqrt{\rho_n}\to0$
and $a_n/{\rho_n}=1/{\sqrt{\rho_n}}\to\infty.$
Consequently, we have
\begin{align*}
\mathbb P(\mathcal E_n^c)
&=
\mathbb P(A_n>a_n)
=
\mathbb P\left(
A_n/{\rho_n}
>
1/{\sqrt{\rho_n}}
\right)
\to0.
\end{align*}
Hence, it holds that
\begin{center}
    $\mathbb P(\mathcal E_n)
\to1.$
\end{center}
This completes the proof of Lemma \ref{lem:S1}.
\end{proof}

\subsection{Proof of Lemma \ref{lem:S2}}
\label{supp:lemma S2}

For the perturbation scale $a_n$ defined in
Lemma~\ref{lem:S1}, we impose the following local boundary regularity
condition on the fixed design:
\begin{equation}
\max_{1\le k\le K}
\frac{1}{n_k}
\sum_{i\in\mathcal J_k}
\mathds{1}
\big\{
\big|
|\eta_i^{(k)}|-M_k
\big|
\le a_n
\big\}
\le
C_\eta a_n
\label{boundary condition}
\end{equation}
for some constant $C_\eta>0$ independent of $n_k$. This condition only
prevents a non-negligible fraction of fixed design points from accumulating
within an $a_n$-neighborhood of the screening thresholds $\pm M_k$.

Let $L_\tau=-\tau A,$ $U_\tau=(1-\tau)A$ and define $R_k
=
[L_\tau,U_\tau]\times[-M_k,M_k]$, $R_k^{-}(t)
=
[L_\tau+t,U_\tau-t]
\times
[-M_k+t,M_k-t]$ and
$R_k^{+}(t)
=
[L_\tau-t,U_\tau+t]
\times
[-M_k-t,M_k+t].$

\begin{lemma}
\label{lem:S2}
Suppose that~\eqref{boundary condition} holds and that
$\sup_{u\in\mathbb R}
p_{\varepsilon^{(k)}}(u)
\le B,$ for $k=1,\ldots,K.$
Then, for all sufficiently large $n_k$,
\[
\max_{1\le k\le K}
\frac1{n_k}
\sum_{i\in\mathcal J_k}
\mathbb P
\big\{
(r_i^{(k)},\eta_i^{(k)})
\in
R_k^{+}(a_n)\setminus R_k^{-}(a_n)
\big\}
\le
C_Ba_n,
\]
where $C_B=4B+C_\eta.$
\end{lemma}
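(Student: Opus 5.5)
We argue observation by observation and then average over $i\in\mathcal J_k$. The key fact is that, in the fixed-design framework, $\eta_i^{(k)}=(\bm X_i^{(k)})^\top\boldsymbol\delta^{(k)}$ is deterministic. Only $r_i^{(k)}=\varepsilon_i^{(k)}+\eta_i^{(k)}$ is random, and its density $u\mapsto p_{\varepsilon^{(k)}}(u-\eta_i^{(k)})$ is bounded by $B$.

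The set difference $R_k^{+}(a_n)\setminus R_k^{-}(a_n)$ is a thin frame around the rectangle $R_k$. If $(r,\eta)$ lies in $R_k^+(a_n)$ but not in $R_k^-(a_n)$, then either the first coordinate $r$ lies in one of the two strips $[L_\tau-a_n,L_\tau+a_n]$ or $[U_\tau-a_n,U_\tau+a_n]$, or the second coordinate satisfies $||\eta|-M_k|\le a_n$. Indeed, if $r\in[L_\tau+a_n,U_\tau-a_n]$ and $|\eta|\le M_k-a_n$, the point would lie in $R_k^-(a_n)$; membership in $R_k^+(a_n)$ then forces $r$ or $|\eta|$ into the corresponding $a_n$-band. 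We need $n$ large enough that $a_n<A/2$ and $a_n<M_k$, so the inner rectangle is nonempty and the bands are well defined; this is guaranteed because $a_n\to0$ by Lemma \ref{lem:S1}.

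A union bound then gives, for each $i$,
\[
\mathbb P\{(r_i^{(k)},\eta_i^{(k)})\in R_k^{+}(a_n)\setminus R_k^{-}(a_n)\}
\le \mathbb P(|r_i^{(k)}-L_\tau|\le a_n)+\mathbb P(|r_i^{(k)}-U_\tau|\le a_n)+\mathds 1\{||\eta_i^{(k)}|-M_k|\le a_n\}.
\]
Each of the first two terms equals the integral of $p_{\varepsilon^{(k)}}(\cdot-\eta_i^{(k)})$ over an interval of length $2a_n$, so each is at most $2Ba_n$, giving $4Ba_n$ in total. For the third term, we average over $i\in\mathcal J_k$ and apply the boundary condition \eqref{boundary condition}, which bounds $\frac1{n_k}\sum_i\mathds 1\{||\eta_i^{(k)}|-M_k|\le a_n\}$ by $C_\eta a_n$. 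Adding the bounds gives $(4B+C_\eta)a_n=C_Ba_n$, and taking the maximum over the finitely many $k$ completes the argument.

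The only step needing care is the set-theoretic decomposition of the frame, including the edge case where $|\eta|$ exceeds $M_k+a_n$; such points are excluded because they lie outside $R_k^+(a_n)$. All remaining steps are elementary bounds.
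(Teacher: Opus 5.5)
Your proposal is correct and follows essentially the same route as the paper's proof. Both use the same three-strip inclusion for the frame $R_k^{+}(a_n)\setminus R_k^{-}(a_n)$, the bound $2Ba_n$ on each $r$-strip (valid because $\eta_i^{(k)}$ is deterministic under the fixed design), and the boundary condition \eqref{boundary condition} for the $\eta$-strip after averaging over $i\in\mathcal J_k$, ending with a maximum over the finitely many $k$.
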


\begin{proof}
Since $a_n\to0$ and $M_k>0$, for all sufficiently large $n_k$,
$a_n
<
\min_{1\le k\le K}M_k$
and
$2a_n<U_\tau-L_\tau.$

For each $k$ and $i$, define
$D_{ki}(a_n)
=
\{
(r_i^{(k)},\eta_i^{(k)})
\in
R_k^{+}(a_n)\setminus R_k^{-}(a_n)
\}.$
If $D_{ki}(a_n)$ occurs, then at least one of the four boundaries
of $R_k$ is within distance $a_n$. Therefore, it holds
\begin{align}
D_{ki}(a_n)
\subseteq\;
\big\{
|r_i^{(k)}-L_\tau|\le a_n
\big\}
\cup
\big\{
|r_i^{(k)}-U_\tau|\le a_n
\big\}
\cup
\big\{
\big|
|\eta_i^{(k)}|-M_k
\big|
\le a_n
\big\}.
\label{eq:boundary-strip-inclusion}
\end{align}

Under the fixed-design framework, $\eta_i^{(k)}$ is deterministic and
$r_i^{(k)}
=
\varepsilon_i^{(k)}+\eta_i^{(k)}.$
Hence,
\begin{align*}
\mathbb P
(
|r_i^{(k)}-L_\tau|
\le a_n
)
&=
\int_{
L_\tau-a_n-\eta_i^{(k)}
}^{
L_\tau+a_n-\eta_i^{(k)}
}
p_{\varepsilon^{(k)}}(u)\,du
\le
2Ba_n.
\end{align*}
Similarly, we have
\begin{center}
$\mathbb P
(
|r_i^{(k)}-U_\tau|
\le a_n
)
\le
2Ba_n.$
\end{center}
By~\eqref{eq:boundary-strip-inclusion} and the union bound,
$\mathbb P\{D_{ki}(a_n)\}
\le
4Ba_n
+
\mathds{1}
\big\{
\big|
|\eta_i^{(k)}|-M_k
\big|
\le a_n
\big\}.$
Averaging over $i\in\mathcal J_k$ and applying \eqref{boundary condition}, we obtain
\[
\frac1{n_k}
\sum_{i\in\mathcal J_k}
\mathbb P\{D_{ki}(a_n)\}
\le
(4B+C_\eta)a_n=C_Ba_n.
\]
Taking the maximum over $k=1,\ldots,K$ completes the proof.
\end{proof}

The following proposition is not used as an additional probabilistic step inside the fixed-design proof. Instead, it shows that the deterministic
condition~\eqref{boundary condition} is satisfied by a Gaussian random design with probability tending to one.

\begin{proposition}
\label{prop:gaussian-boundary}
Suppose that, for each $k=1,\ldots,K$,
$\bm X_i^{(k)}
\stackrel{\mathrm{i.i.d.}}{\sim}
N_p(\bm 0,\Sigma^{(k)}),$
that $K$ is fixed, and
$M_{\min}
=
\min_{1\le k\le K}M_k>0.$
If
$\min_{1\le k\le K}n_ka_n\to\infty,$
then, for some constant $C_\eta>0$ independent of $n_k$,
\[
\mathbb P_X
\Big(
\max_{1\le k\le K}
\frac1{n_k}
\sum_{i\in\mathcal J_k}
\mathbf 1
\big\{
\big|
|\eta_i^{(k)}|-M_k
\big|
\le a_n
\big\}
\le C_\eta a_n
\Big)
\to1,
\]
where $\mathbb P_X$ denotes probability with respect to the random
covariates only.
\end{proposition}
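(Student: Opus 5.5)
For each fixed $k$, write $\eta_i^{(k)}=(\bm X_i^{(k)})^\top\bm\delta^{(k)}$ with the $\bm X_i^{(k)}$ i.i.d. Gaussian. Then the indicators $W_i^{(k)}:=\mathbf 1\{||\eta_i^{(k)}|-M_k|\le a_n\}$, $i\in\mathcal J_k$, are i.i.d. Bernoulli with success probability $q_k:=\mathbb P_X(||\eta^{(k)}|-M_k|\le a_n)$. The plan has two steps. First, show $q_k\le C a_n$ for a constant $C$ independent of $n_k$. Second, show that the empirical average $n_k^{-1}\sum_i W_i^{(k)}$ concentrates around $q_k$ at relative scale $o(1)$, using $n_ka_n\to\infty$. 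Since $K$ is fixed, a union bound over $k$ then finishes the argument.

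\textbf{Step 1 (bounding $q_k$).} If $\bm\delta^{(k)}=\mathbf 0$, then $\eta^{(k)}\equiv0$. For $n$ large we have $a_n<M_{\min}$, so $q_k=0$. Otherwise $\eta^{(k)}\sim N(0,v_k)$ with $v_k=(\bm\delta^{(k)})^\top\Sigma^{(k)}\bm\delta^{(k)}>0$. The event $\{||\eta|-M_k|\le a_n\}$ is the union of the two intervals $[M_k-a_n,M_k+a_n]$ and its mirror image. Hence
\[
q_k\le 4a_n\sup_{|u|\ge M_k-a_n}\frac{e^{-u^2/(2v_k)}}{\sqrt{2\pi v_k}}\le \frac{4a_n}{\sqrt{2\pi v_k}}e^{-M_{\min}^2/(8v_k)},
\]
where the last inequality uses $a_n\le M_{\min}/2$. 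The function $v\mapsto v^{-1/2}e^{-c/v}$ is bounded on $(0,\infty)$ by a constant depending only on $c=M_{\min}^2/8$. This yields $q_k\le C_1a_n$ uniformly in $k$ and in $\bm\delta^{(k)}$, with $C_1$ depending only on $M_{\min}$.

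This step is the one I would treat most carefully. The constant must not blow up when $v_k\to0$ or when $v_k$ is large. The explicit maximization above handles both regimes, and it needs no eigenvalue bounds on $\Sigma^{(k)}$.

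\textbf{Step 2 (concentration).} Take $C_\eta=2C_1$, and for each $k$ set $t_k:=C_\eta a_n-q_k\ge C_1a_n$. Since $\mathrm{Var}(W_i^{(k)})\le q_k\le C_1a_n$, Bernstein's inequality gives
\[
\mathbb P_X\Big(\frac1{n_k}\sum_iW_i^{(k)}>C_\eta a_n\Big)\le \exp\Big\{-\frac{n_kt_k^2}{2(q_k+t_k/3)}\Big\}\le \exp(-c\,n_ka_n)
\]
for an absolute constant $c>0$ depending only on $C_1$. Chebyshev would also suffice here, since the variance of the average is at most $C_1a_n/n_k$, giving a probability bound of order $1/(n_ka_n)$.

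By the assumption $\min_k n_ka_n\to\infty$, each of these probabilities tends to zero. A union bound over the fixed number $K$ of sources then shows that the event in the proposition has $\mathbb P_X$-probability tending to one.

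One technicality should be noted: $a_n$ is defined through the initial estimators. I would treat $a_n=\sqrt{\rho_n}$ as the deterministic sequence from Lemma \ref{lem:S1}, which it is by construction, so the indicators are genuinely i.i.d. under $\mathbb P_X$.
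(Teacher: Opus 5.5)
Your proposal is correct and follows essentially the same route as the paper. You first bound the Gaussian density on the strip $\{||x|-M_k|\le a_n\}$ uniformly in the variance, and your maximization of $v^{-1/2}e^{-c/v}$ is the same computation as the paper's $\sup_{v>0}\phi_v(x)$, giving the identical constant $8e^{-1/2}/(\sqrt{2\pi}M_{\min})$. You then apply Bernstein's inequality using $n_ka_n\to\infty$ and take a union bound over the fixed $K$, as the paper does; your side remarks that Chebyshev would suffice and that $a_n=\sqrt{\rho_n}$ is deterministic are both valid.
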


\begin{proof}
Recall that
$\eta_i^{(k)}
=
(\bm X_i^{(k)})^\top
(\bm\beta^{(k)}-\bm\beta^{(0)})
=
(\bm X_i^{(k)})^\top\bm\delta^{(k)}.$
Therefore,
$\eta_i^{(k)}
\sim
N(0,v_k^2)$ and $
v_k^2
=
(\bm\delta^{(k)})^\top
\Sigma^{(k)}
\bm\delta^{(k)}.$
For $v>0$, denote the density function of $N(0,v^2)$ by
$\phi_v(x)
=
\exp(
-x^2/{(2v^2)}
)/\sqrt{2\pi}v.$
For every fixed $x\ne0$,
$\sup_{v>0}\phi_v(x)
=
e^{-1/2}/({\sqrt{2\pi}|x|}),$
because the maximum over $v$ is attained at $v=|x|$.

Since $a_n\to0$, for all sufficiently large $n_k$, we have
$a_n\le M_{\min}/2.$
Whenever
$\left||x|-M_k\right|\le a_n,$
we have
$|x|
\ge
M_k-a_n
\ge
M_{\min}/2.$
Thus, it holds that
\[
\sup_{v>0}
\sup_{
\left||x|-M_k\right|\le a_n
}
\phi_v(x)
\le
\frac{
2e^{-1/2}
}{
\sqrt{2\pi}M_{\min}
}.
\]
The set
$\left\{
x:
\left||x|-M_k\right|\le a_n
\right\}$
has total Lebesgue measure $4a_n$. Thus, if $v_k>0$, we have
\[
\mathbb P_{X}
\big\{
\big|
|\eta_i^{(k)}|-M_k
\big|
\le a_n
\big\}
\le
C_Ga_n,
\]
where
$C_G
=
8e^{-1/2}
/({
\sqrt{2\pi}M_{\min}
}).$
If $v_k=0$, then $\eta_i^{(k)}\equiv0$ and, since
$a_n<M_{\min}/2$ for all sufficiently large $n_k$, the event above is empty.
Hence the same bound holds in this case as well.

Define
$Z_{ki}
=
\mathds{1}
\big\{
\big|
|\eta_i^{(k)}|-M_k
\big|
\le a_n
\big\}.$
Then
$E_{X}(Z_{ki})\le C_Ga_n.$
By Bernstein's inequality, for some constant $c>0$,
\[
\mathbb P_{X}
\Big(
\frac1{n_k}
\sum_{i=1}^{n_k}
Z_{ki}
>
2C_Ga_n
\Big)
\le
\exp(-cn_ka_n).
\]
Since $n_ka_n\to\infty$ and $K$ is fixed,
\[
\mathbb P_{X}
\Big(
\max_{1\le k\le K}
\frac1{n_k}
\sum_{i=1}^{n_k}
Z_{ki}
\le
2C_Ga_n
\Big)
\to1.
\]
The result follows with $C_{\eta}=2C_G$.
\end{proof}

\subsection{Proof of Lemma \ref{lemma:selection-consistency}}
\label{supp:Lemma 1}

\begin{proof}
Fix a cross-fitting rotation $j$. We suppress the rotation index from the initial estimators.
We first establish the sandwich relation. 
By the definitions of
\(\tilde\varepsilon_i^{(k)}\),
\(\tilde\eta_i^{(k)}\) and \(r_i^{(k)}\), we have
\[
\begin{aligned}
\tilde\varepsilon_i^{(k)}
+
\tilde\eta_i^{(k)}
-
r_i^{(k)}
&=
Y_i^{(k)}
-
(\bm X_i^{(k)})^\top\tilde{\boldsymbol\beta}^{(k)}
+
(\bm X_i^{(k)})^\top
(
\tilde{\boldsymbol\beta}^{(k)}
-
\tilde{\boldsymbol\beta}^{(0)}
)
-
r_i^{(k)}
\\
&=
-
(\bm X_i^{(k)})^\top
(
\tilde{\boldsymbol\beta}^{(0)}
-
\boldsymbol\beta^{(0)}
).
\end{aligned}
\]
Therefore, on the event \(\mathcal E_n\) defined in Lemma
\ref{lem:S1},
\begin{align}
|
\tilde\varepsilon_i^{(k)}
+
\tilde\eta_i^{(k)}
-
r_i^{(k)}
|
&\le
\|\bm X_i^{(k)}\|_\infty
\|
\tilde{\boldsymbol\beta}^{(0)}
-
\boldsymbol\beta^{(0)}
\|_1 \nonumber
\\
&\le
\Psi_n
\|
\tilde{\boldsymbol\beta}^{(0)}
-
\boldsymbol\beta^{(0)}
\|_1\nonumber
\\
&\le
A_n/2
\le
a_n/2. \label{L1-1}
\end{align}
Similarly, we have
$\tilde\eta_i^{(k)}
-
\eta_i^{(k)}
=
(\bm X_i^{(k)})^\top
[
(
\tilde{\boldsymbol\beta}^{(k)}
-
\boldsymbol\beta^{(k)}
)
-
(
\tilde{\boldsymbol\beta}^{(0)}
-
\boldsymbol\beta^{(0)}
)
].$
Hence, on \(\mathcal E_n\), we have
\begin{align}
|
\tilde\eta_i^{(k)}
-
\eta_i^{(k)}
|
&\le
\|\bm X_i^{(k)}\|_\infty
[
\|
\tilde{\boldsymbol\beta}^{(k)}
-
\boldsymbol\beta^{(k)}
\|_1
+
\|
\tilde{\boldsymbol\beta}^{(0)}
-
\boldsymbol\beta^{(0)}
\|_1
] \nonumber
\\
&\le
2\Psi_n
\max_{0\le\ell\le K}
\|
\tilde{\boldsymbol\beta}^{(\ell)}
-
\boldsymbol\beta^{(\ell)}
\|_1 \nonumber
\\
&=
A_n
\le
a_n.\label{L1-2}
\end{align}

Take \(i\in\mathcal I_{k,j}^{-}\). By definition, it holds $L_\tau+a_n
\le
r_i^{(k)}
\le
U_\tau-a_n$, and 
$
|\eta_i^{(k)}|
\le
M_k-a_n.$
By \eqref{L1-1} and \eqref{L1-2}, on \(\mathcal E_n\), we have
\begin{center}
$\tilde\varepsilon_i^{(k)}
+
\tilde\eta_i^{(k)}
\ge
r_i^{(k)}-a_n/2
\ge
L_\tau
\quad
\text{and}
\quad
\tilde\varepsilon_i^{(k)}
+
\tilde\eta_i^{(k)}
\le
r_i^{(k)}+a_n/2
\le
U_\tau.$
\end{center}
Moreover, it holds that $|\tilde\eta_i^{(k)}|
\le
|\eta_i^{(k)}|
+
|\tilde\eta_i^{(k)}-\eta_i^{(k)}|
\le
M_k.$
Thus, we have $i\in\hat{\mathcal I}_{k,j},$
which shows that
\begin{equation}
\mathcal I_{k,j}^{-}
\subseteq
\hat{\mathcal I}_{k,j}.
\label{L1-left}
\end{equation}

Conversely, take \(i\in\hat{\mathcal I}_{k,j}\). By \eqref{L1-1} and \eqref{L1-2}, it follows that
\begin{center}
$L_\tau-a_n
\le
r_i^{(k)}
\le
U_\tau+a_n
\quad \text{and}
\quad
|\eta_i^{(k)}|
\le
|\tilde\eta_i^{(k)}|
+
|\tilde\eta_i^{(k)}-\eta_i^{(k)}|
\le
M_k+a_n.$
\end{center}
Therefore, it holds that $i\in\mathcal I_{k,j}^{+},$
and hence we have
\begin{equation}
\hat{\mathcal I}_{k,j}
\subseteq
\mathcal I_{k,j}^{+}.\label{L1-right}
\end{equation}
Combining \eqref{L1-left} and \eqref{L1-right}, we obtain $\mathcal I_{k,j}^{-}
\subseteq
\hat{\mathcal I}_{k,j}
\subseteq
\mathcal I_{k,j}^{+}$
on \(\mathcal E_n\). Since
\(\mathbb P(\mathcal E_n)\to1\), it follows that
\begin{equation}
\mathbb P(
\mathcal I_{k,j}^{-}
\subseteq
\hat{\mathcal I}_{k,j}
\subseteq
\mathcal I_{k,j}^{+}
)
\to1.
\label{L1-half}
\end{equation}

We next compare the average selection probabilities. 
For $\diamond\in\{-,+\}$, define
$\pi_k^\diamond(a_n)
:=
\sum_{i\in\mathcal J_k}
\mathbb P
\{
(r_i^{(k)},\eta_i^{(k)})
\in
R_k^\diamond(a_n)
\}/n_k.$
Since
$R_k^{-}(a_n)
\subseteq
R_k
\subseteq
R_k^{+}(a_n),$
Lemma~\ref{lem:S2} yields
\begin{equation}
 0
\le
\pi_k-\pi_k^{-}(a_n)
\le
C_Ba_n
\quad\text{and}\quad
0
\le
\pi_k^{+}(a_n)-\pi_k
\le
C_Ba_n.  \label{pai}
\end{equation}
Because $\pi_k\ge c_\pi>0$, by \eqref{pai}, we have
\begin{equation}
\left|
\frac{
\pi_k^\diamond(a_n)
}{
\pi_k
}
-1
\right|
\le
\frac{C_Ba_n}{c_\pi}
\to0,
\qquad
\diamond\in\{-,+\}.\label{asymp-pai}
\end{equation}

We now compare the expected cardinalities. Because
\(\mathcal J_{k,j}\) is uniformly sampled from \(\mathcal J_k\), the random splitting is independent of the errors, and $\mathbb P(i\in\mathcal J_{k,j})
=
n_{k,j}/{n_k}.$ 
Consequently, it holds that
\begin{align}
\mathbb E|\mathcal I_{k,j}|
&=
\sum_{i\in\mathcal J_k}
\mathbb P(i\in\mathcal J_{k,j})
\mathbb P\{
(r_i^{(k)},\eta_i^{(k)})
\in R_k
\}\nonumber
\\
&=
n_{k,j}
\sum_{i\in\mathcal J_k}
\mathbb P\{
(r_i^{(k)},\eta_i^{(k)})
\in R_k
\}/{n_k} \nonumber
\\
&=
n_{k,j}\pi_k.\label{EI}
\end{align}
Similarly, we have
\begin{equation}
\mathbb E|\mathcal I_{k,j}^{\diamond}|
=
n_{k,j}\pi_k^\diamond(a_n),
\qquad
\diamond\in\{-,+\}.
\label{EI+_}
\end{equation}
Combining \eqref{asymp-pai}--\eqref{EI+_}, we obtain
\begin{equation}
\frac{
\mathbb E|\mathcal I_{k,j}^{\diamond}|
}{
\mathbb E|\mathcal I_{k,j}|
}
=
\frac{
\pi_k^\diamond(a_n)
}{
\pi_k
}
\to1,
\qquad
\diamond\in\{-,+\}.
\label{frac-E}
\end{equation}

It remains to compare
\(\mathbb E|\hat{\mathcal I}_{k,j}|\) with
\(\mathbb E|\mathcal I_{k,j}|\). 
On the event $\mathcal E_n$, the sandwich relation gives
$|\mathcal I_{k,j}^-|
\le
|\hat{\mathcal I}_{k,j}|
\le
|\mathcal I_{k,j}^+|,$ while on $\mathcal E_n^c$, it holds $0\le|\mathcal I_{k,j}^{-}|,|\hat{\mathcal I}_{k,j}|,|\mathcal I_{k,j}^{+}|\le n_{k,j}$. Therefore, we have
\begin{center}
    $|\hat{\mathcal I}_{k,j}|=|\hat{\mathcal I}_{k,j}|\mathds{1}_{\mathcal{E}_n}+|\hat{\mathcal I}_{k,j}|\mathds{1}_{\mathcal{E}_n^{c}}
\le
|\mathcal I_{k,j}^{+}|\mathds{1}_{\mathcal{E}_n}+n_{k,j}\mathds{1}_{\mathcal{E}_n^{c}}
\le
|\mathcal I_{k,j}^{+}|+n_{k,j}\mathds{1}_{\mathcal{E}_n^{c}}$,
\end{center}
\begin{center}
    $|\hat{\mathcal I}_{k,j}|
    \geq
    |\hat{\mathcal I}_{k,j}|\mathds{1}_{\mathcal{E}_n}
    \geq
    |\mathcal I_{k,j}^{-}|\mathds{1}_{\mathcal{E}_n}
    =|\mathcal I_{k,j}^{-}|-|\mathcal I_{k,j}^{-}|\mathds{1}_{\mathcal{E}_n^{c}}
    \geq
    |\mathcal I_{k,j}^{-}|-n_{k,j}\mathds{1}_{\mathcal{E}_n^{c}}$.
\end{center}
Taking expectations yields
\begin{center}
$\mathbb E|\hat{\mathcal I}_{k,j}|
\le
\mathbb E|\mathcal I_{k,j}^{+}|
+n_{k,j}\mathbb P(\mathcal E_n^c), \quad
\mathbb E|\hat{\mathcal I}_{k,j}|
\ge
\mathbb E|\mathcal I_{k,j}^{-}|
-n_{k,j}\mathbb P(\mathcal E_n^c).$
\end{center}
Using \eqref{pai} and \eqref{EI+_}, we obtain
\begin{center}
$\mathbb E|\hat{\mathcal I}_{k,j}|
\le n_{k,j}\{\pi_{k}+C_Ba_n+\mathbb P(\mathcal E_n^c)\}, \quad
\mathbb E|\hat{\mathcal I}_{k,j}|
\ge n_{k,j}\{\pi_{k}-C_Ba_n-\mathbb P(\mathcal E_n^c)\}.$
\end{center}
Consequently, with \(a_n\to0\) and \(\mathbb P(\mathcal E_n^c)\to0\), it holds that
\begin{align}
\left|
\frac{
\mathbb E|\hat{\mathcal I}_{k,j}|
}{
n_{k,j}\pi_k
}
-1
\right|
\le
\frac{
C_Ba_n+\mathbb P(\mathcal E_n^c)
}{
\pi_k
}
\le
\frac{
C_Ba_n+\mathbb P(\mathcal E_n^c)
}{
c_\pi
}
\to0.
\label{eq:L1-estimated-expectation}
\end{align}
Since
$\mathbb E|\mathcal I_{k,j}|=n_{k,j}\pi_k$ in \eqref{EI},
equation \eqref{eq:L1-estimated-expectation} is equivalently written as
\[
\frac{
\mathbb E|\hat{\mathcal I}_{k,j}|
}{
\mathbb E|\mathcal I_{k,j}|
}
\to1.
\]
Combining this result with
\eqref{frac-E}, we obtain
\begin{equation}
\mathbb E|\mathcal I_{k,j}^-|
\asymp
\mathbb E|\hat{\mathcal I}_{k,j}|
\asymp
\mathbb E|\mathcal I_{k,j}|
\asymp
\mathbb E|\mathcal I_{k,j}^+|.
\label{eq:L1-all-expectations}
\end{equation}

Finally, we establish the convergence of the random cardinalities. Suppose that
$n_{k,j}\pi_k\to\infty.$
For $\diamond\in\{-,+\}$, conditional on the random partition
$\mathcal J_{k,j}$,
$|\mathcal I_{k,j}^{\diamond}|$ is a sum of independent Bernoulli
random variables. Therefore, it holds that
    $\operatorname{Var}
\left(
|\mathcal I_{k,j}^{\diamond}|
\,\middle|\,
\mathcal J_{k,j}
\right)
\le
E
\left(
|\mathcal I_{k,j}^{\diamond}|
\,\middle|\,
\mathcal J_{k,j}
\right).$
Taking expectations gives
\begin{equation}
\mathbb E\left[
\operatorname{Var}
\left(
|\mathcal I_{k,j}^{\diamond}|
\,\middle|\,
\mathcal J_{k,j}
\right)
\right]
\le
n_{k,j}\pi_k^\diamond(a_n).
\label{converg-EV}
\end{equation}

On the other hand, by the variance formula for uniform sampling
without replacement and all probabilities in the preceding sum belong to
\([0,1]\), it holds that
\begin{align*}
&\operatorname{Var}\left[
E
\left(
|\mathcal I_{k,j}^{\diamond}|
\,\middle|\,
\mathcal J_{k,j}
\right)
\right]
\\=&
n_{k,j}
\big(
1-\frac{n_{k,j}}{n_k}
\big)
\frac{1}{n_k-1}
\sum_{i\in\mathcal J_k}
\big[
\mathbb P
\{
(r_i^{(k)},\eta_i^{(k)})
\in
R_k^\diamond(a_n)
\}
-
\pi_k^\diamond(a_n)
\big]^2 \nonumber
\\
\le&
n_{k,j}
\big(
1-\frac{n_{k,j}}{n_k}
\big)\frac{1}{n_k-1}
\sum_{i\in\mathcal J_k}
\mathbb P
\{
(r_i^{(k)},\eta_i^{(k)})
\in
R_k^\diamond(a_n)
\}\nonumber
\\
=&
n_{k,j}\big(
1-\frac{n_{k,j}}{n_k}
\big)\frac{n_k}{n_k-1}\pi_k^\diamond(a_n)
\nonumber
\\
\le&
n_{k,j}\big(
1-\frac{n_{k,j}}{n_k}
\big)2\pi_k^\diamond(a_n)
\end{align*}
for all sufficiently large \(n_k\). Thus,
\begin{equation}
\operatorname{Var}\left[
E
\left(
|\mathcal I_{k,j}^{\diamond}|
\,\middle|\,
\mathcal J_{k,j}
\right)
\right]
\le
2n_{k,j}\pi_k^\diamond(a_n).
\label{converg-VE}
\end{equation}

Using the law of total variance and the variance formula for
sampling without replacement, \eqref{converg-EV} and \eqref{converg-VE} imply
\begin{equation}
\operatorname{Var}
\left(
|\mathcal I_{k,j}^{\diamond}|
\right)
\le
3n_{k,j}\pi_k^\diamond(a_n).
\label{VI}
\end{equation}
for all sufficiently large $n_k$.
Moreover, by \eqref{asymp-pai} and \(n_{k,j}\pi_k\to\infty\),
\begin{equation}
n_{k,j}\pi_k^\diamond(a_n)
=
n_{k,j}\pi_k
\frac{\pi_k^\diamond(a_n)}{\pi_k}
\to\infty.
\label{npai+_}
\end{equation}
Therefore, for every \(\epsilon>0\), Chebyshev's inequality gives
\[
\begin{aligned}
&
\mathbb P\left(
\left|
\frac{
|\mathcal I_{k,j}^{\diamond}|
}{
n_{k,j}\pi_k^\diamond(a_n)
}
-1
\right|
>\epsilon
\right)
\le
\frac{\operatorname{Var}
(
|\mathcal I_{k,j}^{\diamond}|
)}{
\epsilon^2
n_{k,j}^2
\{\pi_k^\diamond(a_n)\}^2
}
\le
\frac{
3
}{
\epsilon^2
n_{k,j}\pi_k^\diamond(a_n)
}
\to0.
\end{aligned}
\]
Hence, by \eqref{VI} and \eqref{npai+_}, we have
\begin{equation}
\frac{
|\mathcal I_{k,j}^{\diamond}|
}{
n_{k,j}\pi_k^\diamond(a_n)
}
\overset{p}{\to}1.
\label{I/npai}
\end{equation}
Combining \eqref{asymp-pai} and \eqref{I/npai}, we obtain
\begin{align}
\frac{
|\mathcal I_{k,j}^\diamond|
}{
n_{k,j}\pi_k
}
&=
\frac{
|\mathcal I_{k,j}^\diamond|
}{
n_{k,j}\pi_k^\diamond(a_n)
}
\frac{
\pi_k^\diamond(a_n)
}{
\pi_k
}
\overset{p}{\to}1,
\qquad
\diamond\in\{-,+\}.
\label{eq:L1-perturbed-cardinality}
\end{align}

It remains to transfer
\eqref{eq:L1-perturbed-cardinality} to the estimated set.
For every
$\epsilon>0$, on the event $\mathcal E_n$, the sandwich relation $|
\mathcal I_{k,j}^{-}
|
\le
|
\hat{\mathcal I}_{k,j}
|
\le
|
\mathcal I_{k,j}^{+}
|$
implies that
\[
\frac{
|\hat{\mathcal I}_{k,j}|
}{
n_{k,j}\pi_k
}
<1-\epsilon
~\Longrightarrow~
\frac{
|\mathcal I_{k,j}^-|
}{
n_{k,j}\pi_k
}
<1-\epsilon \quad \text{and}\quad
\frac{
|\hat{\mathcal I}_{k,j}|
}{
n_{k,j}\pi_k
}
>1+\epsilon
~\Longrightarrow~
\frac{
|\mathcal I_{k,j}^+|
}{
n_{k,j}\pi_k
}
>1+\epsilon.
\]
Consequently, it holds that
\begin{align*}
&
\left\{
\left|
\frac{
|\hat{\mathcal I}_{k,j}|
}{
n_{k,j}\pi_k
}
-1
\right|
>\epsilon
\right\}
\subseteq
\mathcal E_n^c
\cup
\left\{
\frac{
|\mathcal I_{k,j}^-|
}{
n_{k,j}\pi_k
}
<1-\epsilon
\right\}
\cup
\left\{
\frac{
|\mathcal I_{k,j}^+|
}{
n_{k,j}\pi_k
}
>1+\epsilon
\right\}.
\end{align*}
Taking probabilities and applying the union bound, we obtain
\begin{align*}
&\mathbb P\left(
\left|
\frac{
|\hat{\mathcal I}_{k,j}|
}{
n_{k,j}\pi_k
}
-1
\right|
>\epsilon
\right)
\le
\mathbb P(\mathcal E_n^c)
+
\mathbb P\left(
\frac{
|\mathcal I_{k,j}^-|
}{
n_{k,j}\pi_k
}
<1-\epsilon
\right)
+
\mathbb P\left(
\frac{
|\mathcal I_{k,j}^+|
}{
n_{k,j}\pi_k
}
>1+\epsilon
\right).
\end{align*}
By $\mathbb P(\mathcal E_n^c)\to0$ and
\eqref{eq:L1-perturbed-cardinality}, all three terms on the
right-hand side converge to zero. Therefore,
\begin{equation}
\frac{
|\hat{\mathcal I}_{k,j}|
}{
n_{k,j}\pi_k
}
\overset{p}{\to}1.
\label{eq:L1-estimated-cardinality}
\end{equation}

By \eqref{L1-half} and \eqref{eq:L1-all-expectations}, we have thus shown that, for every $k=1,\ldots,K$ and $j=1,2,3$,
\begin{center}
$\mathbb P(
\mathcal I_{k,j}^-
\subseteq
\hat{\mathcal I}_{k,j}
\subseteq
\mathcal I_{k,j}^+
)
\to1 \quad \text{and}\quad
\mathbb E|\mathcal I_{k,j}^-|
\asymp
\mathbb E|\hat{\mathcal I}_{k,j}|
\asymp
\mathbb E|\mathcal I_{k,j}|
\asymp
\mathbb E|\mathcal I_{k,j}^+|.$
\end{center}
Moreover, if $n_{k,j}\pi_k\to\infty$, then by \eqref{eq:L1-estimated-cardinality}, we have $|\hat{\mathcal I}_{k,j}|
/{
n_{k,j}\pi_k
}
\overset{p}{\to}1.$
Since \(K\) and the number of data splits are fixed, a finite union
bound shows that all the preceding conclusions hold simultaneously
over \(k=1,\ldots,K\) and \(j=1,2,3\). This completes the proof.
\end{proof}

\subsection{Proof of Lemma \ref{lemma hat p_k}}
For notational simplicity, we prove the result for a generic source sample $\mathcal J_k$ of size $n_k$. In each cross-fitting rotation, the same proof
applies after replacing $\mathcal J_k$ and $n_k$ by the corresponding
density estimation fold and its sample size. Since the fold sizes are of the same order as $n_k$, the resulting stochastic rates remain unchanged.
\begin{proof}
For each $k=1,\ldots,K$, define the B-spline
projection estimator based on the true errors by
\begin{align}
\tilde p_{\varepsilon^{(k)}}(u)
&=
\frac{1}{n_k}
\sum_{i\in\mathcal J_{k}}
K_{j_k}
(
u,\varepsilon_i^{(k)}
)
\nonumber\\
&=
\frac{2^{j_k}}{n_k}
\sum_{i\in\mathcal J_{k}}
\sum_{\ell\in\mathbb Z}
\sum_{m\in\mathbb Z}
b_{\ell m}
N_{j_k,\ell,r}
(
\varepsilon_i^{(k)}
)
N_{j_k,m,r}(u).
\end{align}
By the triangle inequality,
\begin{align}
\sup_{u\in\mathbb R}
|
\hat p_{\varepsilon^{(k)}}(u)
-
p_{\varepsilon^{(k)}}(u)
|
\leq {}&
\sup_{u\in\mathbb R}
|
\hat p_{\varepsilon^{(k)}}(u)
-
\tilde p_{\varepsilon^{(k)}}(u)
|
\nonumber\\
&+
\sup_{u\in\mathbb R}
|
\tilde p_{\varepsilon^{(k)}}(u)
-
p_{\varepsilon^{(k)}}(u)
|.
\label{eq:spline-decomposition}
\end{align}

Since $\tilde p_{\varepsilon^{(k)}}(\cdot)$ is the B-spline projection
estimator based on the i.i.d.\ true errors, the uniform convergence
result of \citet{Gine2008AdaptiveEO} gives
\begin{equation}
\label{eq:true-residual-spline}
\sup_{u\in\mathbb R}
|
\tilde p_{\varepsilon^{(k)}}(u)
-
p_{\varepsilon^{(k)}}(u)
|
=
O_p\big(
2^{-j_kt}
+
\sqrt{j_k\cdot2^{j_k}/{n_k}}
\big).
\end{equation}

It remains to control the error caused by replacing
$\varepsilon_i^{(k)}$ with $\tilde{\varepsilon}_i^{(k)}$. From the
definition of the two estimators, we obtain
\begin{align*}
&\sup_{u\in\mathbb R}
|
\hat p_{\varepsilon^{(k)}}(u)
-
\tilde p_{\varepsilon^{(k)}}(u)
|\\
\leq&
\frac{2^{j_k}}{n_k}
\sum_{i\in\mathcal J_{k}}
\sup_{u\in\mathbb R}
\sum_{\ell,m\in\mathbb Z}
|b_{\ell m}|
|
N_{j_k,\ell,r}
(\tilde{\varepsilon}_i^{(k)})
-
N_{j_k,\ell,r}
(\varepsilon_i^{(k)})
|
|N_{j_k,m,r}(u)|.
\end{align*}
For $r\geq2$, the cardinal B-spline $N_{0,r}$ is Lipschitz continuous.
Therefore,
\begin{center}
    $|
N_{j_k,\ell,r}(v_1)
-
N_{j_k,\ell,r}(v_2)
|
\leq
C2^{j_k}|v_1-v_2|.$
\end{center}
Moreover, the B-splines are compactly supported, and the entries
$b_{\ell m}$ of the inverse Gram matrix are uniformly bounded and
exponentially decreasing in $|\ell-m|$. Consequently,
\begin{equation}
\label{eq:spline-plugin-bound}
\sup_{u\in\mathbb R}
|
\hat p_{\varepsilon^{(k)}}(u)
-
\tilde p_{\varepsilon^{(k)}}(u)
|
\leq
C2^{2j_k}
\max_{i\in\mathcal J_{k}}
|
\tilde{\varepsilon}_i^{(k)}
-
\varepsilon_i^{(k)}
|.
\end{equation}
Recall that $\Psi_n=\max_{1\le k\le K}\max_{i\in\mathcal J_{k}}\|\bm X_i^{(k)}\|_\infty$, then by the definitions of the true and estimated errors, we have
\begin{align*}
\max_{i\in\mathcal J_{k}}
|
\tilde{\varepsilon}_i^{(k)}
-
\varepsilon_i^{(k)}
|
&=
\max_{i\in\mathcal J_{k}}
|
(\bm{X}_i^{(k)})^\top
(
\tilde{\bm{\beta}}^{(k)}
-
\bm{\beta}^{(k)}
)
|\\
&\leq
\Psi_{n}
\|
\tilde{\bm{\beta}}^{(k)}
-
\bm{\beta}^{(k)}
\|_1\\
&=
O_p(\Psi_{n}\gamma_k).
\end{align*}
It follows from \eqref{eq:spline-plugin-bound} that
\begin{equation}
\label{eq:spline-plugin-rate}
\sup_{u\in\mathbb R}
|
\hat p_{\varepsilon^{(k)}}(u)
-
\tilde p_{\varepsilon^{(k)}}(u)
|
=
O_p(
2^{2j_k}\Psi_{n}\gamma_k
).
\end{equation}
Combining \eqref{eq:spline-decomposition},
\eqref{eq:true-residual-spline}, and
\eqref{eq:spline-plugin-rate} yields
\[
\sup_{u\in\mathbb R}
|
\hat p_{\varepsilon^{(k)}}(u)
-
p_{\varepsilon^{(k)}}(u)
|
=
O_p\big(
2^{2j_k}\Psi_{n}\gamma_k
+
2^{-j_kt}
+
\sqrt{j_k2^{j_k}/{n_k}}
\big).
\]
Choosing $2^{j_k}
\asymp
(
n_k/{\log n_k}
)^{1/(2t+1)},$ $u_k=2^{2j_k}\Psi_{n}\gamma_k$ and $v_k=(
\log n_k/{n_k}
)^{t/(2t+1)}$,
we obtain that the bound becomes 
\begin{center}
    $\sup_{u\in\mathbb R}
|
\hat p_{\varepsilon^{(k)}}(u)
-
p_{\varepsilon^{(k)}}(u)
|
=
O_p(u_k+v_k)=o_p(1).$
\end{center}
This completes the proof.
\end{proof}

\subsection{Proof of Theorem \ref{estimatebetabound}}
\begin{proof}
Recall that $\hat{\boldsymbol\beta}_{\mathrm{SL\text{-}TL}}^{(0)}=(\hat{\boldsymbol{\beta}}^{(0)}_1+\hat{\boldsymbol{\beta}}^{(0)}_2+\hat{\boldsymbol{\beta}}^{(0)}_3)/3$. Therefore, it is sufficient to establish the convergence rate for $\hat{\boldsymbol{\beta}}^{(0)}_1$, since the same argument applies to $\hat{\boldsymbol{\beta}}^{(0)}_2$ and $\hat{\boldsymbol{\beta}}^{(0)}_3$. Recall the definitions of $\hat\omega_i^{(k)}$ and $\hat{\mathcal I}_{k,3}$. For the first rotation, let $N_3=\sum_{k=0}^{K}n_{k,3}$, $\hat {\boldsymbol{\delta}}^{(0)}_1=\hat{\boldsymbol{\beta}}_1^{(0)}-\boldsymbol{\beta}^{(0)}$, $\psi_{\tau}(x)=\tau-\mathds{1}\{x\leq0\}$, $\Delta(x,z)=\mathds{1}\{x\leq z\}-\mathds{1}\{x\leq 0\}$ and $r_{i}^{(k)}=\varepsilon_{i}^{(k)}+\eta_{i}^{(k)}$. By the optimality of $\hat{\boldsymbol{\beta}}_1^{(0)}$ and Knight's identity, following exactly the same argument as that used to obtain \eqref{I+II}, we obtain
\begin{equation}
    (II)\leq\lambda\|\boldsymbol{\beta}^{(0)}\|_{1}-\lambda\|\hat{\boldsymbol{\beta}}_1^{(0)}\|_{1}+(I)^\top\hat{\boldsymbol{\delta}}_1^{(0)},
\end{equation}
where 
\begin{align*}
&(I)=\frac{1}{N_3}\big[\sum_{i=1}^{n_{0,3}}\boldsymbol{X}_{i}^{(0)}\psi_{\tau}(\varepsilon_{i}^{(0)})+\sum_{k=1}^{K}\sum_{i=1}^{n_{k,3}}\hat{\omega}_{i}^{(k)}\mathds{1}\{i\in \hat{\mathcal{I}}_{k,3}\}\boldsymbol{X}_{i}^{(k)}\psi_{\tau}(r_{i}^{(k)})\big],
\end{align*}
\[
\begin{aligned}
(II)=&\frac{1}{N_3}\big[\sum_{i=1}^{n_{0,3}}\int_{0}^{(\boldsymbol{X}_{i}^{(0)})^{T}\hat{\boldsymbol{\delta}}_1^{(0)}}\Delta(\varepsilon_{i}^{(0)},z)dz
\\&+\sum_{k=1}^{K}\sum_{i=1}^{n_{k,3}}\int_{0}^{(\boldsymbol{X}_{i}^{(k)})^{T}\hat{\boldsymbol{\delta}}_1^{(0)}}\hat{\omega}_{i}^{(k)}\mathds{1}\{i\in \hat{\mathcal{I}}_{k,3}\}\Delta(r_{i}^{(k)},z)dz\big].
\end{aligned}
\]
We proceed in two steps, similar to the proof of Theorem \ref{oracletheorem}. All subsequent arguments are carried out on the high-probability event on which the estimated weights are well defined, positive and uniformly bounded.

\textbf{Step 1}. For each $k=1,2,\ldots,K$, define $\omega_i^{\circ,k}=1/(Ap_{\varepsilon^{(k)}}(\varepsilon_i^{(k)}))$. By Lemma \ref{lemma hat p_k}, Assumptions 1 and 5,
\begin{align}
    &|\hat{p}_{\varepsilon^{(k)}}(\tilde{\varepsilon}_i^{(k)})-p_{\varepsilon^{(k)}}(\varepsilon_i^{(k)})|\nonumber\\
    \leq&|\hat{p}_{\varepsilon^{(k)}}(\tilde{\varepsilon}_i^{(k)})-p_{\varepsilon^{(k)}}(\tilde{\varepsilon}_i^{(k)})|+|p_{\varepsilon^{(k)}}(\tilde{\varepsilon}_i^{(k)})-p_{\varepsilon^{(k)}}(\varepsilon_i^{(k)})|\nonumber\\
    \leq&O_p(2^{2j_k}\Psi_n\gamma_k+2^{-j_kt}+(j_k2^{j_k}/{n_{k,2}})^{1/2})+C|(\boldsymbol{X}_i^{(k)})^\top(\tilde{\boldsymbol{\beta}}^{(k)}-\boldsymbol{\beta}^{(k)})|\nonumber\\
    \leq&O_p(2^{2j_k}\Psi_n\gamma_k+2^{-j_kt}+(j_k2^{j_k}/{n_{k,2}})^{1/2}+\Psi_n \gamma_k).
    \label{T5-phat-p}
\end{align}
Moreover, by Lemma \ref{lemma:selection-consistency}, for every $i\in\hat{\mathcal{I}}_{k,3}$, $\varepsilon_i^{(k)}$ and $\tilde\varepsilon_i^{(k)}$ are restricted to bounded intervals with probability tending to one. Assumption \ref{pkbound} and \eqref{T5-phat-p} therefore imply  
\begin{align}
\max_{i\in\hat{\mathcal{I}}_{k,3}}|\hat{\omega}_i^{(k)}-\omega_i^{\circ,k}|=O_p(q_k),
\label{T5hatwOpq1}
\end{align}
where $q_k=2^{2j_k}\Psi_n\gamma_k+(\log n_k/{n_k})^{t/(2t+1)}$ absorbs $O_p(\Psi_n\gamma_k)$ under the conditions of Lemma \ref{lemma hat p_k}. There exist constants $0<\tilde c_{\omega}<\tilde C_{\omega}<\infty$ such that, with probability tending to one, $\tilde c_{\omega}\leq\hat{\omega}_i^{(k)}\leq \tilde C_{\omega},~i\in\hat{\mathcal{I}}_{k,3},~k=1,2,\ldots,K$.

We first control $(I)$. Let
$\mathcal{G}_1=\sigma\{\{(\boldsymbol{X}_i^{(k)},Y_i^{(k)}):i\in \mathcal{J}_{k,1}\cup \mathcal{J}_{k,2}\}_{k=0}^K\}$
be the $\sigma$-field generated by the samples used to construct the initial
estimators and density estimators. Conditional on
$(\mathcal{G}_1,\boldsymbol{X})$, the estimated weights and selection indicators are fixed, and only the independent errors in the final estimation samples remain random. Define
$\boldsymbol{U}_i^{(0)}=\boldsymbol{X}_i^{(0)}\psi_{\tau}(\varepsilon_i^{(0)})$ and $\boldsymbol{U}_i^{(k)}=\hat{\omega}_{i}^{(k)}\mathds{1}\{i\in \hat{\mathcal{I}}_{k,3}\}\boldsymbol{X}_{i}^{(k)}\psi_{\tau}(r_{i}^{(k)})$.
Then $(I)$ can be written as
\begin{align}
    (I)&=\frac{1}{N_3}\big[\sum_{i=1}^{n_{0,3}}\boldsymbol{X}_{i}^{(0)}\psi_{\tau}(\varepsilon_{i}^{(0)})+\sum_{k=1}^{K}\sum_{i=1}^{n_{k,3}}\hat{\omega}_{i}^{(k)}\mathds{1}\{i\in \hat{\mathcal{I}}_{k,3}\}\boldsymbol{X}_{i}^{(k)}\psi_{\tau}(r_{i}^{(k)})\big]\nonumber\\
    &=\underbrace{\frac{1}{N_3}\sum_{i=1}^{n_{0,3}}\boldsymbol{U}_i^{(0)}+\frac{1}{N_3}
    \sum_{k=1}^{K}\sum_{i=1}^{n_{k,3}}[\boldsymbol{U}_i^{(k)}-\mathbb{E}(\boldsymbol{U}_i^{(k)}\mid\mathcal{G}_1,\boldsymbol{X}]}_{\Lambda_1}\nonumber\\
    &~~~~+\underbrace{\frac{1}{N_3}\sum_{k=1}^{K}\sum_{i=1}^{n_{k,3}}\mathbb{E}(\boldsymbol{U}_i^{(k)}\mid\mathcal{G}_1,\boldsymbol{X})}_{\Lambda_2}.
    \label{T5-(I)-Lambda12}
\end{align}
For a fixed $j\in[p]$,
we have $\mathbb{E}(U_{ij}^{(0)}\mid X_{ij}^{(0)})=0$, $|U_{ij}^{(0)}|\leq\Psi_{n}$ and $|U_{ij}^{(k)}|\leq \tilde C_{\omega}\Psi_{n}$. 
By a straightforward derivation and Lemma \ref{lemma:selection-consistency}, we obtain
\begin{align*}
    N_3^2\tilde V_j:=&\sum_{i=1}^{n_{0,3}}Var(U_{ij}^{(0)}\mid\mathcal{G}_1, \boldsymbol{X})+\sum_{k=1}^{K}\sum_{i=1}^{n_{k,3}}Var(U_{ij}^{(k)}\mid\mathcal{G}_1, \boldsymbol{X})\\
    =&\sum_{i=1}^{n_{0,3}}\mathbb{E}((U_{ij}^{(0)})^2\mid\mathcal{G}_1, \boldsymbol{X})+\sum_{k=1}^{K}\sum_{i=1}^{n_{k,3}}\mathbb{E}((U_{ij}^{(k)})^2\mid\mathcal{G}_1, \boldsymbol{X})\\
    =&\sum_{i=1}^{n_{0,3}}\mathbb{E}((\boldsymbol{X}_{ij}^{(0)}\psi_{\tau}(\varepsilon_{i}^{(0)}))^2\mid\mathcal{G}_1, \boldsymbol{X})\\&+\sum_{k=1}^{K}\sum_{i=1}^{n_{k,3}}\mathbb{E}((\hat \omega_{i}^{(k)}\boldsymbol{X}_{ij}^{(k)}\psi_{\tau}(r_{i}^{(k)}))^2\mathds{1}\{i\in\hat{\mathcal{I}}_{k,3}\}\mid \mathcal{G}_1,\boldsymbol{X})\\
    \leq& n_{0,3}\Psi_{n}^2+\sum_{k=1}^{K}\sum_{i=1}^{n_{k,3}}\Psi_{n}^2\tilde C_{\omega}^2P(i\in\hat{\mathcal{I}}_{k,3}\mid\mathcal{G}_1,\boldsymbol{X})\\
    \leq& \Psi_{n}^2(1+\tilde C_{\omega}^2)(n_{0,3}+\sum_{k=1}^{K}\mathbb{E}(n_{\mathcal{I}_{k,3}}))\\
    =&\Psi_{n}^2(1+\tilde C_{\omega}^2)(n_{0}+\mathbb{E}(n_{\mathcal{I}}))/3.
\end{align*}
By Bernstein's inequality and Step 1 of the proof of Theorem \ref{oracletheorem}, we have
\begin{equation}
    \|\Lambda_1\|_{\infty}\leq\lambda_1\asymp\Psi_n\rho_{\mathcal{I}}\sqrt{\log p/(n_{0}+\mathbb{E}(n_{\mathcal{I}}))},
    \label{T5-Lambda1bound}
\end{equation}
where $\rho_{\mathcal{I}}=(n_{0}+\mathbb{E}(n_{\mathcal{I}}))/\sum_{k=0}^{K}n_{k}$.
By Lemma \ref{lemma:selection-consistency} and \eqref{T5hatwOpq1}, we obtain
\begin{align}
    \Lambda_2
    =&\frac{1}{N_3}\sum_{k=1}^{K}\sum_{i=1}^{n_{k,3}}(\mathbb{E}[(\hat{\omega}_i^{(k)}-\omega_i^{\circ,k})\mathds{1}\{i\in\hat{\mathcal{I}}_{k,3}\}\boldsymbol{X}_i^{(k)}\psi_{\tau}(r_i^{(k)})\mid\mathcal{G}_1,\boldsymbol{X}]\nonumber\\
    &+\mathbb{E}[\omega_i^{\circ,k}\mathds{1}\{i\in\hat{\mathcal{I}}_{k,3}\}\boldsymbol{X}_i^{(k)}\psi_{\tau}(r_i^{(k)})\mid\mathcal{G}_1,\boldsymbol{X}])\nonumber\\
    \leq&\frac{C_{1}\Psi_n}{N_3}\sum_{k=1}^{K}q_k\mathbb{E}(n_{\mathcal{I}_{k,3}})+\frac{1}{N_3}\sum_{k=1}^{K}\sum_{i=1}^{n_{k,3}}\mathbb{E}[\omega_i^{\circ,k}\mathds{1}\{i\in\hat{\mathcal{I}}_{k,3}\}\boldsymbol{X}_i^{(k)}\psi_{\tau}(r_i^{(k)})\mid\mathcal{G}_1,\boldsymbol{X}]
    \label{T5-Lambda2-bound1}
\end{align}
We next control $\mathbb{E}[\omega_i^{\circ,k}\mathds{1}\{i\in\hat{\mathcal{I}}_{k,3}\}\boldsymbol{X}_i^{(k)}\psi_{\tau}(r_i^{(k)})\mid\mathcal{G}_1,\boldsymbol{X}]$. Let $d_i^{(0,k)}=(\boldsymbol{X}_i^{(k)})^\top (\tilde{\boldsymbol{\beta}}^{(0)}-\boldsymbol{\beta}^{(0)})$. Since $\tilde{\varepsilon}_i^{(k)}+\tilde{\eta}_i^{(k)}=r_i^{(k)}-d_i^{(0,k)}$, a direct calculation gives
\begin{align}
    &\|\mathbb{E}[\omega_i^{\circ,k}\mathds{1}\{i\in\hat{\mathcal{I}}_{k,3}\}\boldsymbol{X}_i^{(k)}\psi_{\tau}(r_i^{(k)})\mid\boldsymbol{X}_i^{(k)}]\|_{\infty}
    \nonumber\\
    \leq&\frac{\Psi_n\mathds{1}\{|\tilde\eta_i^{(k)}|\leq M_k\}}{A}\big|\int_{-\tau A+d_i^{(0,k)}}^{(1-\tau)A+d_i^{(0,k)}}\psi_{\tau}(u)du\big|\nonumber\\
    \leq&\frac{|d_i^{(0,k)}|\Psi_n\mathds{1}\{|\tilde\eta_i^{(k)}|\leq M_k\}}{A}.
    \label{wi0bound}
\end{align}
On the one hand, Assumption 5 and $\Psi_n=O(1)$ imply 
\begin{equation}
    \max_{i\in\mathcal{J}_{k,3}}|d_i^{(0,k)}|\leq\Psi_n\|\tilde{\boldsymbol{\beta}}^{(0)}-\boldsymbol{\beta}^{(0)}\|_1=O_p(\Psi_n\gamma_0).
    \label{di0bound1}
\end{equation}
On the other hand,
\begin{align}
    |d_i^{(0,k)}|&=|(\boldsymbol{X}_i^{(k)})^\top (\tilde{\boldsymbol{\beta}}^{(0)}-\boldsymbol{\beta}^{(0)})|
    \nonumber\\&
    =|(\boldsymbol{X}_i^{(k)})^\top(\tilde{\boldsymbol{\beta}}^{(0)}-\tilde{\boldsymbol{\beta}}^{(k)}+\tilde{\boldsymbol{\beta}}^{(k)}-\boldsymbol{\beta}^{(k)}+\boldsymbol{\beta}^{(k)}-\boldsymbol{\beta}^{(0)})|\nonumber\\
    &\leq|\tilde{\eta}_i^{(k)}|+\Psi_n(\|\tilde{\boldsymbol{\beta}}^{(k)}-\boldsymbol{\beta}^{(k)}\|_1+\|\boldsymbol{\beta}^{(k)}-\boldsymbol{\beta}^{(0)}\|_1)
    \lesssim M_k+\Psi_n(\gamma_k+h_k),
    \label{di0bound2}
\end{align}
Combining \eqref{wi0bound}, \eqref{di0bound1} and \eqref{di0bound2}, we have 
\begin{align*}
    &\frac{1}{N_3}\sum_{k=1}^{K}\sum_{i=1}^{n_{k,3}}\|\mathbb{E}[\omega_i^{\circ,k}\mathds{1}\{i\in\hat{\mathcal{I}}_{k,3}\}\boldsymbol{X}_i^{(k)}\psi_{\tau}(r_i^{(k)})\mid\mathcal{G}_1,\boldsymbol{X}]\|_{\infty}\\
    \leq& \frac{\Psi_n}{AN_3}\sum_{k=1}^{K}\sum_{i=1}^{n_{k,3}}|d_i^{(0,k)}|\mathds{1}\{|\tilde\eta_i^{(k)}|\leq M_k\}\\
    \leq& \frac{\Psi_n}{AN_3}\sum_{k=1}^{K}\sum_{i=1}^{n_{k,3}}\{(\Psi_n\gamma_0)\wedge(M_k+\Psi_n(\gamma_k+h_k))\}\mathds{1}\{|\tilde\eta_i^{(k)}|\leq M_k\}
\end{align*}
By the condition in Lemma \ref{lemma:selection-consistency},
\begin{equation*}
    \pi_k := \sum_{i\in\mathcal{J}_{k}}\mathbb P ( L_{\tau} \le r_i^{(k)} \le U_{\tau}, \ |\eta_i^{(k)}| \le M_k )/n_k \ge c_{\pi}>0.
\end{equation*}
Therefore, it holds that
$\mathbb{E}(n_{I_{k,3}})=n_{k,3}\pi_k\geq c_\pi n_{k,3}.$
Since $\sum_{i=1}^{n_{k,3}}\mathds{1}\{|\tilde\eta_i^{(k)}|\leq M_k\}\leq n_{k,3}$, we have $\sum_{i=1}^{n_{k,3}}\mathds{1}\{|\tilde\eta_i^{(k)}|\leq M_k\}\leq c_{\pi}^{-1}\mathbb{E}(n_{\mathcal{I}_{k,3}})$. Thus, we have
\begin{align}
    &\frac{1}{N_3}\sum_{k=1}^{K}\sum_{i=1}^{n_{k,3}}\|\mathbb{E}[\omega_i^{\circ,k}\mathds{1}\{i\in\hat{\mathcal{I}}_{k,3}\}\boldsymbol{X}_i^{(k)}\psi_{\tau}(r_i^{(k)})\mid\mathcal{G}_1,\boldsymbol{X}]\|_{\infty}\nonumber\\
    \lesssim& \frac{\Psi_n}{AN_3}\sum_{k=1}^{K}\{(\Psi_n\gamma_0)\wedge(M_k+\Psi_n(\gamma_k+h_k))\}\mathbb{E}(n_{\mathcal{I}_{k,3}}).
    \label{T5-Lambda2-bound2}
\end{align}
Combining \eqref{T5-Lambda2-bound1} and \eqref{T5-Lambda2-bound2}, we have
\begin{equation}
    \|\Lambda_2\|_{\infty}\leq \lambda_2\asymp\Psi_n\rho_{\mathcal{I}}\sum_{k=1}^{K}\pi_{M_k}[(\Psi_n\gamma_0)\wedge(M_k+\Psi_n(\gamma_k+h_k))+q_k],
    \label{T5-Lambda2-boundend}
\end{equation}
where $\pi_{M_k}=\mathbb{E}(n_{\mathcal{I}_k})/\mathbb{E}(n_\mathcal{I})$. By \eqref{T5-(I)-Lambda12}, \eqref{T5-Lambda1bound} and \eqref{T5-Lambda2-boundend}, we consequently obtain
\begin{equation}
    \|(I)\|_{\infty}\leq\lambda_1+\lambda_2=\lambda/2.
    \label{T5-(I)boundend}
\end{equation}

\textbf{Step 2}. Since the integral remainder term in Knight's identity is nonnegative,
the same argument as that used to derive \eqref{IIupbound} gives $\hat{\boldsymbol{\delta}}_1^{(0)}\in\mathcal{E}(H_0,3)$ and $(II)\leq\frac{3\lambda}{2}\sqrt{s}\|\hat{\boldsymbol{\delta}}_1^{(0)}\|_2$.

To control $(II)$, we similarly use $(II)=\mathbb{E}[(II)\mid \mathcal{G}_1,\boldsymbol{X}]+(II)-\mathbb{E}[(II)\mid \mathcal{G}_1,\boldsymbol{X}]$. Moreover, by \eqref{T5hatwOpq1}, the uniform boundedness of the oracle weights on the selected region and $q_k=o(1)$, with probability tending to one, 
\begin{center}
$\hat{\omega}_i^{(k)}\geq\omega_i^{\circ,k}/2,~i\in\mathcal{I}_{k,3}^-,~k=1,\ldots,K.$
\end{center}
By Lemma \ref{lemma:selection-consistency}, $\mathcal{I}_{k,3}^{-}\subseteq\hat{\mathcal{I}}_{k,3}$ with probability tending to one and the integral remainder in Knight's identity is nonnegative, the source contribution to the conditional curvature can be restricted to $\mathcal{I}_{k,3}^{-}$ and lower bounded, up to a universal constant, by the corresponding oracle weighted curvature.  Therefore, for any $\boldsymbol{\delta}\in\mathcal{E}(H_0,3)$, applying Assumptions 3 and 6, the same local curvature argument as that used to derive \eqref{IIdownbound} yields
\begin{equation}
    \mathbb{E}[(II)\mid\mathcal{G}_1,\boldsymbol{X}]\geq\rho_{\mathcal{I}}\tilde \phi^2_{min}\|{\boldsymbol{\delta}}\|^2_2/2,
    \label{T5-EII-lowerbound}
\end{equation}
where $\tilde \phi^2_{min}:=\min(\kappa,A^{-1})\phi^2$. 
Next, we control the empirical fluctuation of $(II)$. Conditional on $(\mathcal{G}_1,\boldsymbol{X})$, the observations in the final estimation samples are independent. Using the Rademacher fixed-point argument and the conditional Gaussian maximal inequality in the same manner as in the proof of Theorem \ref{oracletheorem}, we obtain
\begin{equation*}
    |(II)-\mathbb{E}((II)|\mathcal{G}_1,\boldsymbol{X})|\leq \lambda\|\hat{\boldsymbol{\delta}}_1^{(0)}\|_1.
\end{equation*}
Since $\hat{\boldsymbol{\delta}}_1^{(0)}\in\mathcal{E}(H_0,3)$, we have $\|\hat{\boldsymbol{\delta}}_1^{(0)}\|_1\leq4\sqrt{s}\|\hat{\boldsymbol{\delta}}_1^{(0)}\|_2$. Consequently, we have
\begin{equation*}
    (II)\geq\rho_I\tilde{\phi}_{\min}^2\|\hat{\boldsymbol{\delta}}_1^{(0)}\|_2^2/2-C\lambda\sqrt{s}\|\hat{\boldsymbol{\delta}}_1^{(0)}\|_2.
\end{equation*}
Combining the above lower bound with 
\begin{center}
$(II)\leq3\lambda\sqrt{s}\|\hat{\boldsymbol{\delta}}_1^{(0)}\|_2/2,$
\end{center}
we obtain $\rho_\mathcal{I}\|\hat{\boldsymbol{\delta}}_1^{(0)}\|_2^2\lesssim\lambda\sqrt{s}\|\hat{\boldsymbol{\delta}}_1^{(0)}\|_2$. Therefore, $\|\hat{\boldsymbol{\delta}}_1^{(0)}\|_2^2\lesssim s\lambda^2/\rho_\mathcal{I}^2$. The same argument applies to $\hat{\boldsymbol{\delta}}_2^{(0)}$ and $\hat{\boldsymbol{\delta}}_3^{(0)}$. Finally, since $\lambda=2(\lambda_1+\lambda_2)$, $\lambda_1\asymp\Psi_n\rho_{\mathcal{I}}\sqrt{\log p/(n_{0}+\mathbb{E}(n_{\mathcal{I}}))}$, $\lambda_2\asymp\Psi_n\rho_{\mathcal{I}}\sum_{k=1}^{K}\pi_{M_k}[(\Psi_n\gamma_0)\wedge(M_k+\Psi_n(\gamma_k+h_k))+q_k]$ and $\hat{\boldsymbol{\beta}}_{\mathrm{SL-TL}}^{(0)}=\sum_{\ell=1}^3\hat{\boldsymbol{\beta}}_\ell^{(0)}/3$, we have
\begin{equation*}
    \|\hat{\boldsymbol{\beta}}_{\mathrm{SL\text{-}TL}}^{(0)}-\boldsymbol{\beta}^{(0)}\|^2_2=O_p\big\{\frac{s\mathrm{log}p}{n_0+\mathbb{E}(n_\mathcal{I})}+s\Psi_n^2(\sum_{k=1}^{K}\pi_{M_k}[(\Psi_n\gamma_0)\wedge(M_k+\Psi_n(\gamma_k+h_k))+q_k])^2\big\}.
\end{equation*}
Moreover, assume the following conditions: (i) For $k=1,2,\ldots,K$, $n_k\gg n_0$, such that $\gamma_k\ll\gamma_0$ and $q_k\ll\gamma_0$, (ii) $\Psi_n=O_p(1)$, and (iii) $M_k,k=1,2,\ldots,K$ are fixed. We have
\begin{equation*}
    \|\hat{\boldsymbol{\beta}}_{\mathrm{SL\text{-}TL}}^{(0)}-\boldsymbol{\beta}^{(0)}\|^2_2=O_p\big\{\frac{s\mathrm{log}p}{n_0+\mathbb{E}(n_\mathcal{I})}+s(\sum_{k=1}^{K}\pi_{M_k}[\gamma_0\wedge(M_k+h_k)])^2\big\}.
\end{equation*}
\end{proof}

\section{Table and Figure}
\label{supp:Tables and Figures}

\begin{table}[!htbp]
\centering
\caption{The list of nine target brain tissues, their abbreviations, and sample sizes.}
\label{tab:target_brain_tissues}
\begin{tabular}{cllr}
\toprule
ID & Tissue & Abbreviation & Sample size \\
\midrule
1 & Caudate basal ganglia
  & Caudate
  & 300 \\

2 & Cerebellar hemisphere
  & C. hemisphere
  & 277 \\

3 & Cerebellum
  & Cerebellum
  & 266 \\

4 & Cortex
  & Cortex
  & 270 \\

5 & Frontal cortex BA9
  & F. cortex
  & 269 \\

6 & Hippocampus
  & Hippocampus
  & 255 \\

7 & Hypothalamus
  & Hypothalamus
  & 257 \\

8 & Nucleus accumbens basal ganglia
  & N. accumbens
  & 285 \\

9 & Putamen basal ganglia
  & Putamen
  & 254 \\
\bottomrule
\end{tabular}
\end{table}


\begin{figure}[!htbp]
    \centering
    \includegraphics[width=0.8\textwidth]{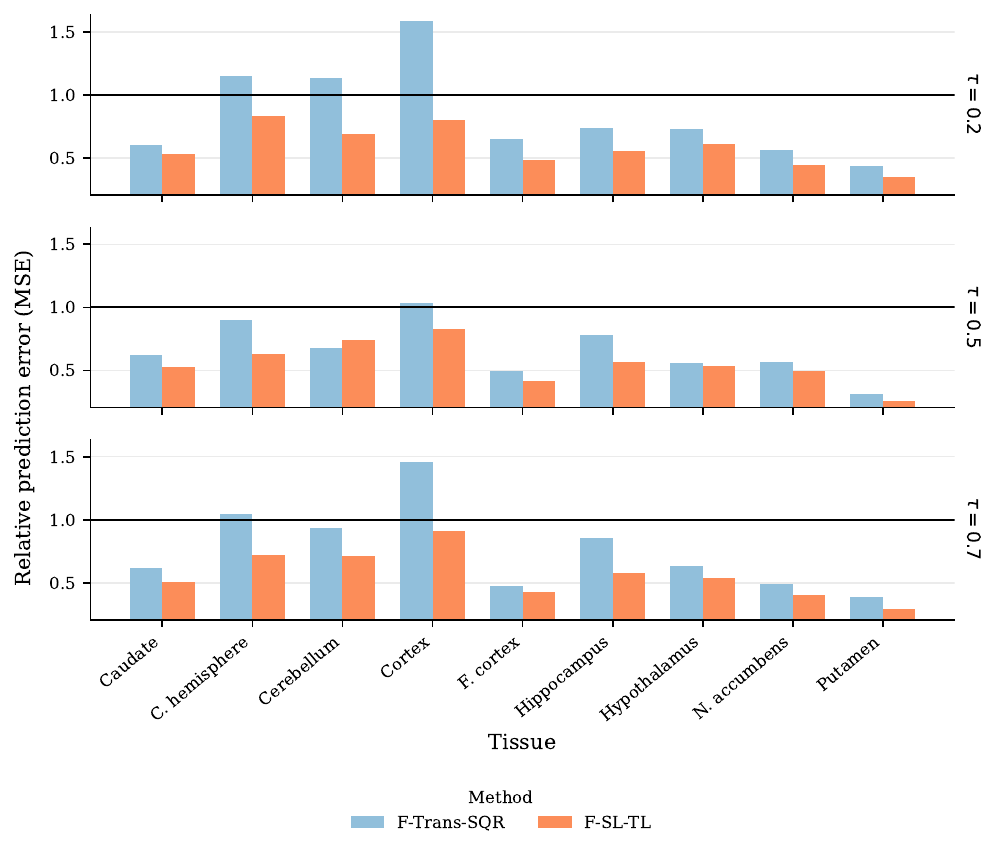}
    \caption{Relative mean squared prediction errors of F-Trans-SQR and F-SL-TL relative to that of Target-only across 9 different target tissues at quantile levels 0.2, 0.5 and 0.7. The horizontal line represents an RPE of one.}
\label{fig:gtex_mse}
\end{figure}

\newpage
\section{Algorithm}
\label{supp:alg}

\begin{algorithm}[!htbp]
\caption{Oracle Trans-SQR Method}
\label{alg:Trans-SQR}

\begin{algorithmic}[1]
\REQUIRE Target dataset \(\mathcal D^{(0)}\), source datasets
\(\{\mathcal D^{(k)}\}_{k=1}^{K}\), quantile level \(\tau\),
selection criterion $\mathcal{A}_{h}$, bandwidth parameters $h_{\omega}$ and $h_{\delta}$ and regularization parameters $\lambda_{\omega}$ and $\lambda_{\delta}$.

\medskip

\STATE \textbf{Transferring:} Estimate the pooled regression coefficient by
\[
\begin{aligned}
\hat{\boldsymbol\omega}_{\mathrm{s}}\in \arg\min_{\boldsymbol\omega\in\mathbb R^p}\big\{&\sum_{k\in\mathcal A_h\cup\{0\}}\sum_{i=1}^{n_k}\int_{-\infty}^{\infty}\rho_{\tau}(u)K((u+(\boldsymbol X_i^{(k)})^{\top}\boldsymbol\omega-Y_i^{(k)})/h_{\omega}) \\
&\quad /(N_{\mathcal A_h h_{\omega}})\,du +\lambda_{\omega}\|\boldsymbol\omega\|_1\big\}, 
\end{aligned}
\]
where $N_{\mathcal A_h}
:=n_0+n_{\mathcal{A}_h}$ and $K(\cdot)$ is the kernel function.

\medskip

\STATE \textbf{Debiasing:} Use the target data to estimate the correction vector
\[
\begin{aligned}
\hat{\boldsymbol\delta}_{\mathrm{S}}\in\arg\min_{\boldsymbol\delta\in\mathbb R^p}\big\{&\sum_{i=1}^{n_0}\int_{-\infty}^{\infty}\rho_{\tau}(u)K((u+(\boldsymbol X_i^{(0)})^{\top}\boldsymbol\delta-Y_i^{(0)}+(\boldsymbol X_i^{(0)})^{\top}\hat{\boldsymbol\omega})/h_{\delta}) \\
&\quad /(n_0h_{\delta})\,du+\lambda_{\delta}\|\boldsymbol\delta\|_1\big\}.
\end{aligned}
\]

\medskip

\ENSURE 
$\displaystyle
\hat{\boldsymbol{\beta}}_{\mathrm{S}}
=
\hat{\boldsymbol{\omega}}_{\mathrm{S}}
+
\hat{\boldsymbol{\delta}}_{\mathrm{S}}
$
\algorithmeqnumber{beta_oracletrans-sqr}
\end{algorithmic}

\end{algorithm}

\section{Simulation Settings for Density Estimator} 
\label{supp:Details}
For F-SL-TL, the error density in each source domain is estimated from the corresponding residuals in each cross-fitting rotation using the B-spline projection estimator in \eqref{Bsplinehat-pk}. We use cardinal B-splines of order
$r=4$ and set the smoothness parameter to $t=2$ for selecting
the resolution level. Following the resolution-level choice in
Lemma \ref{lemma hat p_k}, we choose $j_k$ such that
$2^{j_k}
\asymp
(
n_{k,2}/
{\log n_{k,2}}
)^{1/(2t+1)}.$ 
Specifically, in the numerical implementation, we set
$j_k
=
\max\{
0,\,
\operatorname{round}
[
\log_2
\{
(
n_{k,2}/
{\log n_{k,2}}
)^{1/(2t+1)}
\}
]
\}$. 
With $t=2$, this gives
$2^{j_k}
\asymp
(
n_{k,2}/
{\log n_{k,2}}
)^{1/5}$.
The spline basis is evaluated over the numerical interval
$[-8,8]$.
\end{document}